\pgfplotsset{compat=1.13}
\newcommand{\bjnote}[1]{}
\newcommand{\bjforget}[1]{}
\newcommand{\hasbeenremoved}[1]{}
\newcommand\EventDPOR{\mbox{Event-DPOR}\xspace}
\newcommand\OptimalDPOR{\mbox{Optimal-DPOR}\xspace}
\newcommand\LAPOR{\mbox{LAPOR}\xspace}
\newcommand\CDSChecker{\textsc{CDSChecker}\xspace}
\newcommand\GenMC{\textsc{GenMC}\xspace}
\newcommand\RCMC{\textsc{RCMC}\xspace}
\newcommand\Nidhugg{\textsc{Nidhugg}\xspace}
\newcommand{\explore}{\ensuremath{Explore}}
\newcommand{\done}{\ensuremath{done}\xspace}
\newcommand{\accesses}{\ensuremath{msgAccesses}}
\newcommand{\tmpaccesses}{\ensuremath{tmpAccesses}}
\newcommand{\pendingwus}[1]{\ensuremath{\pendingwusname(#1)}}
\newcommand{\pendingwusname}{\mathit{parkedWuS}}
\newcommand{\insertpendingwu}[2]{\ensuremath{\mbox{\it InsertParkedWuS}(#1,#2)}}
\newcommand{\insertpendingwuname}{\ensuremath{\mbox{\it InsertParkedWuS}}\xspace}
\newcommand{\reverserace}{\ensuremath{ReverseRace}}
\newcommand{\WuT}{\ensuremath{WuT}}
\newcommand{\treeorder}{\ensuremath{\propto}}
\newcommand{\event}{e}
\newcommand{\msgof}[2]{\transpair{#1}{#2}}
\newcommand{\weaksatrel}[1]{\ensuremath{\langle\!\langle{#1}\rangle\!\rangle}}
\newcommand{\hbwpref}[2]{\ensuremath{\xrightarrow{hbp\langle{#1}\rangle}_{#1.#2}}}
\newcommand{\exseq}{E}
\newcommand{\set}[1]{\{#1\}}
\newcommand{\tuple}[1]{\ensuremath{\langle {#1}\rangle}}
\newcommand{\happbf}[2]{\ensuremath{\xrightarrow{#1}_{#2}}}
\newcommand{\nhappbf}[2]{\ensuremath{\overset{#1}{\nrightarrow_{#2}}}}
\newcommand{\dom}[1]{\ensuremath{dom({#1})}}
\newcommand{\prefix}{\leq}
\newcommand{\totorder}[1]{<_{#1}}
\newcommand{\revrace}[1]{\ensuremath{\lesssim_{#1}}}
\newcommand{\revmsgrace}[2]{\ensuremath{\lesssim_{#1}^{#2}}}
\newcommand{\wut}[1]{\ensuremath{wut({#1})}}
\newcommand{\enabled}[1]{\ensuremath{enabled({#1})}}
\newcommand{\exseqs}{\ensuremath{\mathcal{E}}}
\newcommand{\winits}[2]{\ensuremath{WI_{[{#1}]}({#2})}}
\newcommand{\infirstseqs}[1]{\ensuremath{\sqsubseteq_{[#1]}}}
\newcommand{\remove}{\! \setminus \!}
\newcommand{\insertwus}[3]{\mathit{Insert({#1},{#2},{#3})}}
\newcommand{\insertwusname}{\ensuremath{\mathit{InsertWuS}}\xspace}
\newcommand{\mtequiv}{\simeq}
\newcommand{\equivafter}[1]{\simeq_{[#1]}}
\newcommand{\mtprefix}{\sqsubseteq}
\newcommand{\mtprefixafter}[1]{\mtprefix_{[#1]}}
\newcommand{\notmtprefixafter}[1]{\ \ / \!\!\!\! \mtprefix_{[#1]}}
\newcommand{\eqclass}[1]{\ensuremath{[{#1}]_\simeq}}
\newcommand{\emptyseq}{\ensuremath{\langle\rangle}}
\newcommand{\emptytree}{\ensuremath{\langle\{\langle\rangle\},\emptyset\rangle}}
\newcommand{\nextev}[2]{\ensuremath{next_{[{#1}]}({#2})}}
\newcommand{\mtclass}[1]{[#1]_{\simeq}}
\newcommand{\valid}[2]{\ensuremath{{#1}\!\vdash\!{#2}}}
\newcommand{\subtreeafter}[2]{\mbox{\sl subtree}(#2,#1)}
\newcommand{\post}{\texttt{post}}
\newcommand{\po}{\ensuremath{\color{red}\mathtt{po}\color{black}}}
\newcommand{\rf}{\ensuremath{\color{green}\mathtt{rf}\color{black}}}
\newcommand{\pb}{\ensuremath{\color{violet}\mathtt{pb}\color{black}}}
\newcommand{\cnf}{\ensuremath{\color{red}\mathtt{cnf}\color{black}}}
\newcommand{\hb}{\ensuremath{\color{blue}\mathtt{hb}\color{black}}}
\newcommand{\sat}{\ensuremath{\color{gray}\mathtt{st}\color{black}}}
\newcommand{\procof}[1]{\widehat{#1}}
\newcommand{\pair}[2]{\langle#1,#2\rangle}
\newcommand{\transpair}[2]{\pair{#1}{#2}}
\NewDocumentCommand{\rtrans}{oo} 
{\IfNoValueTF {#1}
        {\ensuremath{\xrightarrow{}}\xspace}
        {\IfNoValueTF {#2}
            {\ensuremath{\xrightarrow{#1}}\xspace}
            {\ensuremath{\xrightarrow[#2]{#1}}\xspace}}}
\NewDocumentCommand{\rntrans}{oo} 
{\IfNoValueTF {#1}
        {\ensuremath{\centernot{\xrightarrow{}}}\xspace}
        {\IfNoValueTF {#2}
            {\ensuremath{\centernot{\xrightarrow{#1}}}\xspace}
            {\ensuremath{\centernot{\xrightarrow[#2]{#1}}}\xspace}}}
\newcommand{\keyword}[1]{\mbox{\bf #1}}
\tikzset{event/.style={draw=none,fill=none,align=left}}
\tikzset{arc/.style={->,>=stealth,thick}}
\tikzset{arcDerived/.style={->,>=stealth,thick,dashed}}
\newcommand{\bench}[1]{\textsf{\mbox{#1}}}
\newcommand{\evt}[1]{\texttt{#1--event}\xspace}
\newcommand{\opt}[1]{\texttt{#1--optimal}\xspace}
\newcommand{\genmcmo}[1]{\texttt{#1--mo}\xspace}
\newcommand{\lapormo}[1]{\texttt{#1--lapor}\xspace}
\newlist{propenum}{enumerate}{2} 
\setlist[propenum]{ref=\arabic*, label={\arabic*.}}
\crefname{propenumi}{property}{properties}
\crefname{propenumii}{property}{properties}
\newlist{ruleenum}{enumerate}{2} 
\setlist[ruleenum]{ref=R\arabic*, label={R\arabic*.}}
\crefname{ruleenumi}{rule}{rules}
\Crefname{ruleenumi}{Rule}{Rules}
\newlist{phaseenum}{enumerate}{2} 
\setlist[phaseenum]{ref=P-\Roman*, label={P-\Roman*.}}
\crefname{phaseenumi}{phase}{phases}
\Crefname{phaseenumi}{Phase}{Phases}
\newcommand{\iffref}[3]{\@ifundefined{r@#1}{#3}{#2}}
\newcommand{\citet}[1]{\cite{#1}}
\begin{document}
\title{Tailoring Stateless Model Checking for Event-Driven Multi-Threaded Programs}
\titlerunning{Tailoring Stateless Model Checking for Event-Driven Programs}
%
\author{Parosh Aziz Abdulla\inst{1} \and Mohammed Faouzi Atig\inst{1}
		\and Frederik Meyer Bønneland\inst{2} \and Sarbojit Das\inst{1}
		\and Bengt Jonsson\inst{1} \and Magnus Lång\inst{1}
		\and Konstantinos Sagonas\inst{1,3}}
\institute{Uppsala University, Uppsala, Sweden \and Aalborg University, Denmark \and NTUA, Grece
}
\authorrunning{P. Abdulla \and M. Atig \and F. Bønneland \and S. Das \and B. Jonsson \and M. Lång \and K. Sagonas}

%
\maketitle              
\begin{abstract}
Event-driven multi-threaded programming is an important idiom for structuring concurrent computations.
Stateless Model Checking (SMC) is an effective verification technique for multi-threaded programs, especially when coupled with Dynamic Partial Order Reduction (DPOR).
Existing SMC techniques are often ineffective in handling event-driven programs,
since they will typically explore all possible orderings of event processing, even when events do not conflict.
We present \EventDPOR, a DPOR algorithm tailored to event-driven multi-threaded programs.
It is based on \OptimalDPOR, an optimal DPOR algorithm for multi-threaded programs; we show how it can be extended for event-driven programs.
We prove correctness of \EventDPOR for all programs, and optimality for a large subclass.
One complication is that an operation in \EventDPOR, which checks for redundancy of new executions, is NP-hard,
as we show in this paper; we address this by a sequence of inexpensive (but incomplete) tests which check for redundancy efficiently.
Our implementation and experimental evaluation show that, in comparison with other tools in which handler threads are simulated using locks, \EventDPOR can be exponentially faster than other state-of-the-art DPOR algorithms on a variety of programs and

\end{abstract}
%


\section{Introduction}
\label{sec:intro}
Event-driven multi-threaded programming is an important idiom for structuring concurrent computations
in distributed message-passing applications, file systems~\cite{Mazieres01},
high-performance servers~\cite{Dabek:event-driven-02}, systems programming~\cite{P:pldi13},
smartphone applications~\cite{mednieks2012programming}, and many other domains.
In this idiom, multiple threads execute concurrently and can communicate through shared objects.
In addition, some threads, called \emph{handler threads}, have an associated event pool
to which all threads can post events.
Each handler thread executes an event processing loop in which events
from its pool are processed sequentially, one after the other, interleaved with the execution of other threads.
An event is processed by invoking an appropriate handler, which can be, e.g., a callback function.


Testing and verification of event-driven multi-threaded programming faces
all the usual challenges of testing and verification for multi-threaded programs, and furthermore suffers from
additional complexity, since the order of event execution is determined dynamically and non-deterministically.
A successful and fully automatic technique for finding concurrency bugs in multithreaded programs (i.e., defects that
arise only under some thread schedulings) and for verifying their absence is \emph{stateless model checking} (SMC)~\cite{Godefroid:popl97}.
Given a terminating program and fixed input data,
SMC systematically explores the set of
all thread schedulings that are possible during program runs.
A special runtime scheduler drives the SMC exploration by making decisions
on scheduling whenever such choices may affect the interaction between threads.
SMC has been implemented in many tools
(e.g., VeriSoft~\cite{Godefroid:verisoft-journal},
\textsc{Chess}~\cite{MQBBNN:chess}, Concuerror~\cite{Concuerror:ICST13},
\Nidhugg~\cite{tacas15:tso}, rInspect~\cite{DBLP:conf/pldi/ZhangKW15},
\CDSChecker~\cite{NoDe:toplas16}, \RCMC~\cite{KLSV:popl18}, and
\GenMC~\cite{GenMC@CAV-21}), and successfully applied to realistic
programs (e.g.,~\cite{GoHaJa:heartbeat} and~\cite{KoSa:spin17}).
To reduce the number of explored executions,
SMC tools typically employ \emph{dynamic partial order reduction}
(DPOR)~\cite{FG:dpor,abdulla2014optimal}.
DPOR defines an equivalence relation on executions, which
preserves relevant correctness properties, such as reachability of local
states and assertion violations, and explores at least one execution in each equivalence class.
\hasbeenremoved{We call a DPOR algorithm \emph{optimal} if it guarantees the exploration of exactly one execution per
equivalence class.}


Existing DPOR techniques for multi-threaded programs lack effectiveness in handling the complications brought by event-driven programming,
as has been observed by e.g., Jensen et al.~\cite{Event-DrivenSMC@OOPSLA-15}
and Maiya et al.~\cite{Maiya:tacas16}.
A na\"ive way to handle such a program is to consider all pairs of events as conflicting,
implying that different orderings of event executions by a handler thread will be considered inequivalent.
A major drawback is then that a DPOR algorithm cannot exploit the fact that different orderings of event executions by a single handler thread
can be considered equivalent in the case that events are non-conflicting.
In this way, a program in which $n$ non-conflicting events
are posted to a handler thread by $n$ concurrent threads can give rise to $n!$ explorations by a standard
DPOR algorithm, whereas all of them are in fact equivalent.
On the other hand, some events may be conflicting, so a DPOR algorithm for event-driven programs
should explore only the necessary inequivalent orderings between conflicting events.
This can be achieved by defining an equivalence on executions, which respects
only the ordering of conflicting accesses to shared variables,
irrespective of the order in which events are executed.
For plain multi-threaded programs, this equivalence is the basis for several effective DPOR
algorithms~\cite{FG:dpor,abdulla2014optimal}.
The challenge is to develop an effective DPOR algorithm also for event-driven programs.

In this paper, we present \EventDPOR, a DPOR algorithm for event-driven multi-threaded programs
where handlers can execute events from their event pool in arbitrary order (i.e., the event pool is viewed as a multiset).
The multiset semantics is used in many works~\cite{popl07:JhalaM,Raychev:oopsla13,Event-DrivenSMC@OOPSLA-15}, often with the significant restriction that there is only one handler thread; we consider the more general situation with an arbitrary number of handler threads.
\EventDPOR is based on \OptimalDPOR~\cite{abdulla2014optimal,optimal-dpor-jacm},
a DPOR algorithm for multi-threaded programs.
The basic working mode of \OptimalDPOR is similar to several other DPOR algorithms:
Given a terminating program, one of its executions is explored and then analyzed to construct initial fragments of
new executions; each fragment that is not redundant (i.e., which can be extended to an execution that is not equivalent to a previously explored execution),
is subsequently extended to a maximal execution, which is analyzed to construct
initial fragments of new executions, and so on.
\EventDPOR employs the same basic mode of operation as \OptimalDPOR,
but must be extended to cope with the event-driven execution model.
One complication is that the constructed initial fragments must satisfy the constraints imposed
by the fact that event executions on a handler are serialized; this may necessitate
reordering of several events when constructing new executions from an already explored one.
Another complication is that the check whether a new fragment is redundant
is NP-hard in the event-driven setting, as we prove in this paper.
We alleviate this by defining a sequence of inexpensive but incomplete rendundancy checks, using
a complete decision procedure only as a last resort.

\hasbeenremoved{A simple case is when two statements in different events
on the same handler are conflicting: such a race can often be reversed by reversing the order in which the events are executed by their handler.
However, in some cases a race can be reversed only by reordering a larger number of events, even on handlers where the racing statements do not execute. (We will illustrate this phenomenon in \cref{sec:concepts}.)
\EventDPOR extends \OptimalDPOR by mechanisms to perform such reorderings when necessary.}
\hasbeenremoved{
Furthermore, and  in contrast  to the situation for plain multi-threaded programs,
    When analyzing plain multi-threaded programs, \OptimalDPOR can test whether a newly constructed execution fragment is equivalent to an already explored execution
  using sleep sets~\cite{Godefroid:thesis}, which need only 
  remember the first executed statement from each subtree of previously explored executions.
  For event-driven programs, recording the first executed statement is not sufficient. The reason is that conflicts may appear at the granularity of events,
  and therefore it is necessary to record the execution of an entire event whenever the start of that event leads to a subtree of previously explored executions.
Even with this information, 
the problem of determining whether a new execution is equivalent with an explored execution is NP-hard, as we prove in this paper.
We address this problem by To avoid repeated expensive equivalence checks, before employing a  precise one, which have shown to be sufficient
for all our benchmarks.
}

We prove that the \EventDPOR algorithm is \emph{correct} (explores at least one execution in each equivalence class) for event-driven programs.
We also prove that it is \emph{optimal} (explores exactly one execution in each equivalence class) for the class of so-called \emph{non-branching} programs, in which the possible sequences of shared variable accesses that can be performed 
during execution of an event, whose handler also executes other events, does  not depend on how its execution is interleaved with other threads.
\hasbeenremoved{We conjecture that \EventDPOR can be made optimal for all programs by adding additional features that
may impose a significant overhead.}

We have implemented \EventDPOR in an extension of the \Nidhugg
tool~\cite{tacas15:tso}.
Our experimental evaluation shows that, when compared with other SMC tools in
which event handlers are simulated using locks, \EventDPOR incurs only a
moderate constant overhead, but can be exponentially faster than other
state-of-the-art DPOR algorithms. The same evaluation also shows that, unlike
other algorithms that can achieve analogous reduction, \EventDPOR manages to
completely avoid unnecessary exploration of executions that cannot be serialized.
Moreover, in all the programs we tried, 
also those that are not non-branching, \EventDPOR explored the optimal number of traces, suggesting that
\EventDPOR is optimal not only for non-branching programs but also for a good number of branching ones.
Also, our sequence of inexpensive checks for redundancy was sufficient in all tried programs, i.e.,
we never had to invoke the decision procedure for this NP-hard problem.


\hasbeenremoved{
  In summary, the contributions of this paper are:
\begin{enumerate}
\item[(\cref{sec:eventdpor}+\cref{sec:correctness})] A \textit{novel SMC
  algorithm}, \EventDPOR, for event-driven programs where handlers process
  events from their event queue in arbitrary order, which is proven correct,
  and is optimal for programs for which the variable accesses of events do not
  depend on how their execution is interleaved with other threads.
\item[(\cref{sec:np-complete})] A theorem with proof that the problem of
  deciding whether a given happens-before ordering can be realized in an
  event-driven execution is NP-hard.
\item[(\cref{sec:impl})] An \textit{implementation} of \EventDPOR,
  which is available in binary form in an
  \href{https://drive.google.com/file/d/17BbkGYfqSy-6OsbTWWzhEY5AYr2Bs9wL/view?usp=sharing}{anonymized VM image} and
  will become available in source form as part of the paper's artifact.
\item[(\cref{sec:eval})] An \textit{evaluation} of \EventDPOR,
  using a variety of programs for asynchronous event-driven programming,
  against three other state-of-the-art implementations of SMC algorithms.
\end{enumerate}
}

\bjforget{
\paragraph{Outline}
After a brief review of related work in the next section, in \cref{sec:concepts} we illustrate the main ideas of the \EventDPOR algorithm by a sequence of small examples.
\cref{sec:model} defines the event-driven execution model, and basic terminology, thereafter \cref{sec:eventdpor} presents the \EventDPOR algorithm.
The next two sections contain proof of correctness and optimality of the algorithm (\cref{sec:correctness}) and statements and proofs that equivalence checking is NP-hard (\cref{sec:np-complete}).
\Cref{sec:impl} describes our implementation, its performance is evaluated in \cref{sec:eval},
and the paper ends with some concluding remarks.
}


\section{Related Work} \label{sec:related}

Stateless model checking has been implemented in many tools for analysis of multithreaded programs (e.g., \cite{Godefroid:verisoft-journal,MQBBNN:chess,Concuerror:ICST13,tacas15:tso,DBLP:conf/pldi/ZhangKW15,NoDe:toplas16,KLSV:popl18,GenMC@CAV-21}). It often employs DPOR, introduced by Flanagan and Godefroid~\cite{FG:dpor} to reduce the number of schedulings that must be explored. Further developments of DPOR reduce this number further, by being optimal (i.e., exploring only one scheduling in each equivalence class)~\cite{abdulla2014optimal,optimal-dpor-jacm,observers,KMGV:popl22} or by
weakening the equivalence~\cite{observers,CS-DPOR@CAV-17,DC-DPOR@POPL-18,rfsc@OOPSLA-19}.

DPOR has been adapted to event-driven multi-threaded programs. Jensen et al.~\citet{Event-DrivenSMC@OOPSLA-15} consider an execution model in which events are processed in arbitrary order (multiset semantics) and apply it to JavaScript programs.
Maiya et al.~\citet{Maiya:tacas16} consider a model where events are processed in the order they are received (FIFO semantics), and develop a tool, EM-Explorer, for analyzing Android applications which, given a particular sequence of event executions, produces a set of reorderings of its events which reverses its conflicts.
The above works are based on the algorithm of Flanagan and Godefroid~\citet{FG:dpor}, implying that they do not take advantage of subsequent improvements in DPOR algorithms~\cite{abdulla2014optimal,optimal-dpor-jacm,KMGV:popl22}, nor do they employ techniques such as sleep sets for avoiding redundant explorations. It is known~\cite{optimal-dpor-jacm}that even with sleep sets, the algorithm of Flanagan and Godefroid~\citet{FG:dpor} can explore an exponential number of redundant execution compared to the algorithms of~\cite{abdulla2014optimal,optimal-dpor-jacm,KMGV:popl22}. Without sleep sets, the amount of redundant exploration will increase further.
Recently, Trimananda et al.~\citet{Trimananda:vmcai22} have proposed an adaptation of stateful DPOR~\cite{YWY:stateful-dpor,YangCGK@SPIN-08} to non-terminating event-driven programs, which has been implemented in Java PathFinder. For analogous reason as for~\citet{Event-DrivenSMC@OOPSLA-15,Maiya:tacas16}, also this approach does not avoid to perform redundant explorations.
  
For actor-based programs, in which processes communicate by message-passing, Aronis et al.~\citet{observers} have presented an improvement of \OptimalDPOR in which two postings of messages to a mailbox are considered as conflicting only if their order affects the subsequent behavior of the receiver. Better reduction can then be achieved if the receiver selects messages from its mailbox based on some criterion, such as by pattern matching on the structure of the message.
However, this execution model differs from the one we consider. 

Event-driven programs where handlers select messages in arbitrary order from their mailbox can be analyzed by modeling messages (mini-)threads that compete for handler threads by taking locks, and applying any SMC algorithm for shared-variable programs with locks.
Since typical SMC algorithms always consider different lock-protected code sections as conflicting, this approach has the drawback of exploring all possible orderings of events on a handler.
There exists a technique to avoid exploring of all these orderings in programs with locks, in which lock sections can be considered non-conflicting if they do not perform conflicting accesses to shared variables. This LAPOR technique~\citet{LAPOR@OOPSLA-19} is based on optimistically executing lock-protected code regions in parallel, and aborting executions in which lock-protected regions cannot be serialized. This can led to significant useless exploration, as also shown in our evaluation in \cref{sec:eval}.

The problem of detecting potentially harmful data races in single executions of event-driven programs has been addressed by several works.
The main challenge for data race detection is to capture the often hidden dependencies for applications on Android~\cite{Hsiao:pldi14,Maiya:pldi14,Bielik:android-races,Hu:issta16} or on other platforms~\cite{Petrov:pldi12,Raychev:oopsla13,Santhiar:issta16,Maiya:issta17}.
Detecting data races is a different problem than exploring all possible executions of a program, in that it considers only one (possibly long) execution, but tries to detect whether it (or some other similar execution) exhibits data races.


\newcommand{\tikzwrapfigbg}{%
  \begin{pgfonlayer}{background}
    \path[fill=gray!10,rounded corners]
    (current bounding box.south west) rectangle
    (current bounding box.north east);
\end{pgfonlayer}}

\section{Main Concepts and Challenges}
\label{sec:concepts}
In this section, we informally present core concepts of our approach by examples\footnote{Note
  that in the remainder of the paper, we will use the term \emph{message} to refer to what was called \emph{event} in Sections~\ref{sec:intro} and~\ref{sec:related},
  for the reason that the literature on DPOR has reserved the term \emph{event} to denote an execution of a program statement. We will also use \emph{mailbox} instead of event pool.}

\hasbeenremoved{By a DPOR algorithm, we refer to an algorithm that analyses a terminating program on given input,
by exploring different executions resulting from different thread interleavings. 
It equips each execution with a transitive happens-before ordering, induced by ordering accesses performed within a message or non-handler thread, as well as conflicting accesses to a shared variable
(two accesses are \emph{conflicting} if they involve the same shared variable and one of them is a write).
The happens-before relation induces an equivalence relation on executions.
A DPOR algorithm should explore at least one execution in each equivalence class.
\hasbeenremoved{
A DPOR algorithm is \emph{correct} if it  explores at least one execution in each equivalence class.
It is \emph{optimal} if, additionally, it explores exactly one execution in each equivalence class.}
}

\newcommand{\evnt}[2]{\mbox{$#1$: \texttt{#2}}}
\newcommand{\hndlr}[3]{\mbox{$#1$: $#2$: \texttt{#3}}}

\begin{figure}
  \centering
  \begin{minipage}[b]{0.35\textwidth}
    \centering \footnotesize
    Writer-readers program.
    \begin{tikzpicture}[line width=1pt,framed,inner sep=1pt]
      \node[name=p,anchor=south west] at (-0.15,0.25) {{$s$}};
      \node[name=n11,anchor=mid] at (0,0) {\texttt{x\,=\,1}};
      \draw[line width=0.5pt] ($(n11.north east)+(4pt,10pt)$)--($(n11.south east)+(4pt,-12pt)$);
      \draw[line width=0.5pt] ($(n11.north east)+(6pt,10pt)$)--($(n11.south east)+(6pt,-12pt)$);
      
      \node[name=q,anchor=south west] at (0.9,0.25) {{$t$}};
      \node[name=n21,anchor=west] at ($(n11.east)+(10pt,-1pt)$) {\texttt{a\,=\,y;}};
      \node[name=n22,anchor=north west] at ($(n21.south west)+(0pt,-0.25pt)$) {\texttt{b\,=\,x}};

      \draw[line width=0.5pt] ($(n21.north east)+(5pt,10pt)$) -- ($(n21.south east)+(5pt,-12pt)$);
      \draw[line width=0.5pt] ($(n21.north east)+(7pt,10pt)$) -- ($(n21.south east)+(7pt,-12pt)$);

      \node[name=t,anchor=south west] at (1.95,0.25) {{$u$}};
      \node[name=n31,anchor=west] at ($(n21.east)+(10pt,0.5pt)$) {\texttt{c\,=\,z;}};
      \node[name=n32,anchor=north west] at ($(n31.south west)+(0pt,-2pt)$) {\texttt{d\,=\,x}};
    \end{tikzpicture}
  \end{minipage}
  \begin{minipage}[b]{0.55\textwidth}
    \centering \footnotesize
    \begin{tikzpicture}[
        yscale=.6,xscale=.75,
        level distance=1cm,
        st/.style={ellipse,inner sep=0.2em,draw},
        wu/.style={ellipse,inner sep=0.2em,color=blue,draw},
        old/.style={ellipse,inner sep=0.2em,color=gray,draw},
        fut/.style={ellipse,inner sep=0.2em,color=darkgray,draw},
        child anchor=north
      ]
      \node[st] {}
      child { node[old] {}
        child { node[old] {}
          child { node[old] {}
            child { node[old] {}
              child { node[old,label=south:$E_1$] {}
                edge from parent [->,color=black] node [left] (dx) {\evnt{u}{d\,=\,x}}
              }
              edge from parent [->,color=black] node [left] {\evnt{u}{c\,=\,z}}
              edge [color=blue,thick,->] (dx)
            }
            edge from parent [->,color=black] node [left] (bx) {\evnt{t}{b\,=\,x}}
          }
          edge from parent [->,color=black] node [left] {\evnt{t}{a\,=\,y}}
          edge [color=blue,thick,->] (bx)
        }
        edge from parent [->,color=black] node [left] {\evnt{s}{x\,=\,1}}
        edge [color=red,thick,<->,bend right=70] (bx)
        edge [color=red,thick,<->,bend right=55] (dx);
      }
      child { node[st] at (1.5,0) {}
        child { node[st] {}
          child { node[st,color=green] {}
            child { node[st,color=black] {}
              child { node[st,color=black,label=south:$E_2$] {}
                edge from parent [->,color=black] node [left] {\evnt{s}{x\,=\,1}}
              }
              edge from parent [->,color=black] node [left] {\evnt{u}{d\,=\,x}}
            }
            child { node[wu] {}
              child { node[fut,label=south:$E_3$] {}
                edge from parent [->,color=darkgray] node [left] {\evnt{u}{d\,=\,x}}
              }
              edge from parent [->,color=blue] node [right] {\evnt{s}{x\,=\,1}}
            }
            edge from parent [->,color=green] node [left] {\evnt{t}{b\,=\,x}}
          }
          child { node[wu] at (0.5,0) {}
            child { node[fut,color=blue] {}
              child { node[fut,label=south:$E_4$] {}
                edge from parent [->,color=darkgray] node [right] {\evnt{t}{b\,=\,x}}
              }
              edge from parent [->,color=blue] node [right] {\evnt{s}{x\,=\,1}}
            }
            edge from parent [->,color=blue] node [right] {\evnt{u}{d\,=\,x}}
          }
          edge from parent [->,color=green] node [left] {\evnt{u}{c\,=\,z}}
        }
        edge from parent [->,color=green] node [right] {\evnt{t}{a\,=\,y}}
      };
      \tikzwrapfigbg
    \end{tikzpicture}
  \end{minipage}
\caption{A program and its execution tree with
  the four executions that \OptimalDPOR will explore.
  In $E_1$, the red arcs show the conflict order; the blue arrows the program order. The first wakeup sequence is shown in green; the remaining two continue with~blue.}
\label{fig:rw}
\end{figure}
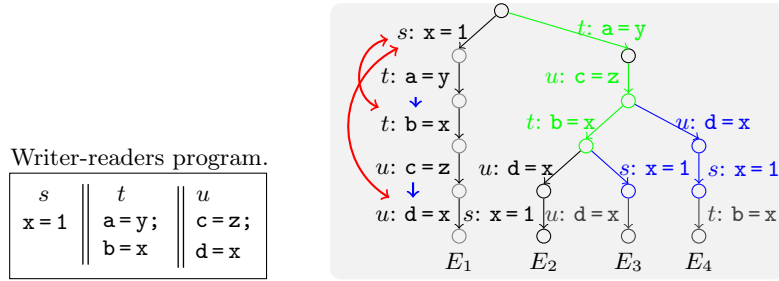

\subsection{Review of \OptimalDPOR}
Our DPOR algorithm for event-driven programs is an extension of \OptimalDPOR~\cite{abdulla2014optimal}.
Let us illustrate \OptimalDPOR on the program snippet shown in~\cref{fig:rw}.
In this code, three threads $s$, $t$, and~$u$ access
three shared variables \texttt{x}, \texttt{y}, and \texttt{z},\footnote{Throughout
  this paper, we assume that threads are spawned by a \texttt{main} thread,
  and that all shared variables get initialized to $0$, also by the main thread.}
whereas \texttt{a}, \texttt{b}, \texttt{c}, and \texttt{d} are thread-local registers.
\OptimalDPOR first explores a maximal execution, which it inspects to detect races.
From each race, it constructs an initial fragment of
an alternative execution which reverses the race and branches off from the explored execution just before the race.
Let us illustrate with the program in~\cref{fig:rw}.
Assume that the first execution is $\exseq_1$ (cf. the tree in~\cref{fig:rw}).
The DPOR algorithm first computes its happens-before order, denoted $\happbf{\hb}{E_1}$,
which is the transitive closure of the union of:
\begin{inparaenum}[(i)]
\item
  the \emph{program order}, which totally orders the events in each thread (small blue arrows to the left of $E_1$), and
\item
  the \emph{conflict order} which orders conflicting events: two events are conflicting if they access a common shared variable and at least one is a write
  (red arcs left of $E_1$).
\end{inparaenum}
A \emph{race} consists of two conflicting events in different threads that are adjacent in the $\happbf{\hb}{E_1}$-order.
The execution $\exseq_1$ contains two races (red arcs in~\cref{fig:rw}).
Let us consider the first race, in which the first event is \evnt{s}{x=1} and the second event is \evnt{t}{b=x}.
The alternative execution is generated by concatenating 
the sequence of events in $E_1$ that do not succeed the first event in the $\happbf{\hb}{E_1}$ order (i.e., $\evnt{t}{a\,=\,y}; \evnt{u}{c\,=\,z}$) with
the second event of the race \evnt{t}{b=x}.
This forms a \emph{wakeup sequence}, which branches off from $E_1$ just before
the race, i.e., at the beginning of the exploration (green in~\cref{fig:rw}).
The second race, between \evnt{s}{x=1} and~\evnt{u}{d=x} induces the wakeup sequence $t.u.u$ formed from the
sequence $\evnt{t}{a\,=\,y}; \evnt{u}{c\,=\,z}$ and the second event \evnt{u}{d\,=\,x}, also branching off at the beginning
(note that $t.u.u$ does not contain the second event \evnt{t}{b=x} of~$t$ since it succeeds \evnt{s}{x=1} in the $\happbf{\hb}{E_1}$-ordering).
When attempting to insert $t.u.u$, the algorithm will discover that this sequence is \emph{redundant}, since its events are
consistently contained in a continuation ($t.u.t.u$) of 
the already inserted wakeup sequence $t.u.t$, and it will therefore not insert $t.u.u$.
After this, the algorithm will reclaim the space for~$E_1$, extend $t.u.t$ into a maximal execution~$E_2$, 
in which races are detected that generate
two new wakeup sequences (which start in green and continue in blue), which are extended to two additional executions (cf.~\cref{fig:rw}).

\hasbeenremoved{In short, the \OptimalDPOR algorithm conceptually organizes all its explored execution sequences
in a tree $\exseqs$ that it explores in a depth-first manner (and gradually reclaims).
The nodes of $\exseqs$ correspond to execution sequences; leaves correspond to maximal explored executions and to complete wakeup sequences.}

\begingroup

\subsection{Challenges for Event-driven Programs}
A na\"ive way in which existing DPOR algorithms can handle event-driven programs is to consider all pairs of messages as conflicting.
However, such an approach is \emph{not} effective, since it will lead to exploration of all
different serialization orders of the messages, even if they are non-conflicting, as is the case for the
top left program of \cref{fig:example1new} in which two threads $s$ and $t$
post two messages $p_1$ and $p_2$ to a  handler thread~$h$.
(We show messages labeled by the message identifier and wrapped in brackets.)
Since the events of $p_1$ and $p_2$ are non-conflicting, 
exploring only one execution suffices.
In general, some messages of a program may be conflicting and some may not be, so a DPOR algorithm for event-driven programs
should explore only the necessary inequivalent orderings between conflicting messages.
\EventDPOR achieves this by extending \OptimalDPOR's technique for reversing races between events in different threads to
a mechanism for reversing races between events in different~messages.

\begin{figure}[t]
  \centering
  \begin{minipage}[b]{0.5\textwidth}
    \centering \footnotesize
    \begin{tikzpicture}[line width=1pt,framed,inner sep=1pt]
      \node[name=p,anchor=south west] at (-0.15,0.25) {{$s$}};
      \node[name=post1] at (0,0) {$\mathtt{post}(p_1,h)$};
      
      \draw[line width=0.5pt] ($(post1.north east)+(1pt,10pt)$)--($(post1.south east)+(1pt,-12pt)$);
      \draw[line width=0.5pt] ($(post1.north east)+(3pt,10pt)$)--($(post1.south east)+(3pt,-12pt)$);
    
      \node[name=q,anchor=south west] at (1.45,0.25) {{$t$}};
      \node[name=post2,anchor=west] at ($(post1.east)+(5pt,0.5pt)$) {$\mathtt{post}(p_2,h)$};
      
      \draw[line width=0.5pt] ($(post2.north east)+(1pt,10pt)$) -- ($(post2.south east)+(1pt,-12pt)$);
      \draw[line width=0.5pt] ($(post2.north east)+(3pt,10pt)$) -- ($(post2.south east)+(3pt,-12pt)$);
    
      \node[name=r,anchor=south west] at (2.9,0.25) {{$h$'s messages}};
      \node[name=m1,anchor=west] at ($(post2.east)+(8pt,0.5pt)$) {$p_1$: $\left[\texttt{x\,=\,1}\right]$};
      \node[name=m2,anchor=north west] at ($(m1.south west)+(0pt,-1pt)$) {$p_2$: $\left[\texttt{y\,=\,2}\right]$};
    \end{tikzpicture}
    \\[1.4em]
    \begin{tikzpicture}[line width=1pt,framed,inner sep=1pt]
      \node[name=p,anchor=south west] at (-0.15,0.25) {{$s$}};
      \node[name=post1] at (0,0) {$\mathtt{post}(p_1,h)$};
      
      \draw[line width=0.5pt] ($(post1.north east)+(1pt,10pt)$)--($(post1.south east)+(1pt,-60pt)$);
      \draw[line width=0.5pt] ($(post1.north east)+(3pt,10pt)$)--($(post1.south east)+(3pt,-60pt)$);
      
      \node[name=q,anchor=south west] at (1.45,0.25) {{$t$}};
      \node[name=post2,anchor=west] at ($(post1.east)+(5pt,0.5pt)$) {$\mathtt{post}(p_2,h)$};

      \draw[line width=0.5pt] ($(post2.north east)+(1pt,10pt)$) -- ($(post2.south east)+(1pt,-60pt)$);
      \draw[line width=0.5pt] ($(post2.north east)+(3pt,10pt)$) -- ($(post2.south east)+(3pt,-60pt)$);

      \node[name=r,anchor=south west] at (2.9,0.2) {{$h$}'s messages};
      \node[name=m1,anchor=west] at ($(post2.east)+(8pt,-12pt)$) {$p_1$: $\left[\begin{array}{@{}l@{}}\texttt{u\,=\,1};\\ \texttt{x\,=\,1};\\ \texttt{y\,=\,1}\end{array} \right]$};
      \node[name=m2,anchor=north west] at ($(m1.south west)+(0pt,-1pt)$) {$p_2$: $\left[\begin{array}{@{}l@{}}\texttt{v\,=\,2};\\ \texttt{a\,=\,x};\\ \texttt{b\,=\,y}\end{array}\right]$};
     \end{tikzpicture}
  \end{minipage}
  \begin{minipage}[b]{0.42\textwidth}
    \centering \footnotesize
    \begin{tikzpicture}[
        yscale=.57,xscale=.7,
        level distance=1cm,
        st/.style={ellipse,inner sep=0.2em,color=black,draw},
        wu/.style={ellipse,inner sep=0.2em,color=black,draw},
        child anchor=north
      ]
      \node[st] {}
      child { node[st] {}
        child { node[st] {}
          child { node[st] {}
            child { node[st] {}
              child { node[st] {}
                child { node[st] {}
                  child { node[st] {}
                    child { node[st,label=south:$E_1$] {}
                      edge from parent [->] node [left] (y2) {\hndlr{h}{p_2}{b\,=\,y}}
                    }
                    edge from parent [->] node [left] (x2) {\hndlr{h}{p_2}{a\,=\,x}}
                    edge [color=blue,thick,->] (y2)
                  }
                  edge from parent [->] node [left] {\hndlr{h}{p_2}{v\,=\,2}}
                  edge [color=blue,thick,->] (x2)
                }
                edge from parent [->] node [left] (y1) {\hndlr{h}{p_1}{y\,=\,1}}
                edge [color=red,thick,<->,bend right=92] (y2);
              }
              edge from parent [->] node [left] (x1) {\hndlr{h}{p_1}{x\,=\,1}}
              edge [color=blue,thick,->] (y1)
              edge [color=red,thick,<->,bend right=92] (x2);
            }
            edge from parent [->] node [left] {\hndlr{h}{p_1}{u\,=\,1}}
            edge [color=blue,thick,->] (x1)
          }
          child { node[wu, color=blue] {}
            child { node[st] {}
              child { node[st] {}
                child { node[st] {}
                  child { node[st] {}
                    child { node[st,label=south:$E_2$] {}
                      edge from parent [->] node [right] {\hndlr{h}{p_1}{y\,=\,1}}
                    }
                    edge from parent [->] node [right] {\hndlr{h}{p_1}{x\,=\,1}}
                  }
                  edge from parent [->] node [right] {\hndlr{h}{p_1}{u\,=\,1}}
                }
                edge from parent [->, color=black] node [right] {\hndlr{h}{p_2}{b\,=\,y}}
              }
              edge from parent [->,color=blue] node [right] {\hndlr{h}{p_2}{a\,=\,x}}
            }
            edge from parent [->,color=blue] node [right] {\hndlr{h}{p_2}{v\,=\,2}}
          }
          edge from parent [->] node [left] {\evnt{t}{post($p_2$,$h$)}}
        }
        edge from parent [->] node [left] {\evnt{s}{post($p_1$,$h$)}}
      };
      \tikzwrapfigbg
    \end{tikzpicture}
  \end{minipage}
  \caption{An event-driven program with non-conflicting messages (top left). A program with non-atomic conflicting messages (bottom left)
    and its tree of executions (right).}
  \label{fig:example1new}
\end{figure}
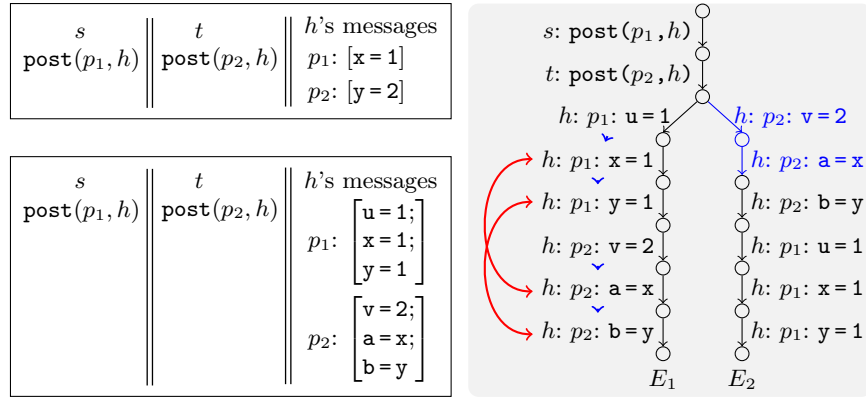

\hasbeenremoved{
  \paragraph{Non-atomic Messages}
A major complications in extending \OptimalDPOR to the event-driven execution model stems from non-atomic messages.
In general, a message consists of a sequence of statements.
We cannot reverse races between individual statements in two different messages that execute on the same handler in the same way as in standard DPOR,
since we cannot swap just individual statements without swapping the order of two messages entirely.
Such swapping can produce many other changes to the execution, stemming from swapping all events in the two messages.
\EventDPOR preserves \OptimalDPOR's principle to construct wakeup sequences which contain events that do not happen-after any event of the race,
followed by the second event of the race. However, it is often not possible to include all such events in an event-driven execution, in which
case \EventDPOR will include a maximal subsequence.
}
We illustrate this mechanism on the program at the bottom left of \cref{fig:example1new}.
Assume that the first explored execution is $E_1$. It contains two races between events in the two messages, one on \texttt{x} and one on \texttt{y}.
According to \OptimalDPOR's principle for race reversal, the race on \texttt{x} should induce an alternative execution composed of 
the sequence of events that do not happen-after the first event (i.e., {\hndlr{h}{p_1}{u\,=\,1}} {\hndlr{h}{p_2}{v\,=\,2}}) and the 
second event {\hndlr{h}{p_2}{a\,=\,x}}
(for brevity, we do not show the two post events).
However, since message execution is serialized, these events cannot form an execution.
Therefore, \EventDPOR forms the alternative execution (shown in blue) by appending the second event {\hndlr{h}{p_2}{a\,=\,x}} to a 
maximal subset of the events of $E_1$ which is closed under $\happbf{\hb}{E_1}$-predecessors
(i.e., if it contains an event $e$ then it also contains all its $\happbf{\hb}{E_1}$-predecessors), and which can form an execution that does not
contain the first event. 
Later, this wakeup sequence is extended to execution $E_2$.
Let us then consider the race on \texttt{y}.
The constructed wakeup sequence should append the second event {\hndlr{h}{p_2}{b\,=\,y}} to a maximal subset of events
that do not happen-after the first event {\hndlr{h}{p_1}{y\,=\,1}}.
However, there is no execution that satisfies these constraints, since it would have to include
{\hndlr{h}{p_2}{a\,=\,x}} before its $\happbf{\hb}{E_1}$-predecessor {\hndlr{h}{p_1}{x\,=\,1}}.
The conclusion is that the race on \texttt{y} cannot (and should not) be considered for reversal, whereas that on \texttt{x} should be reversed.
More generally, if two messages executing on the same handler thread are in conflict, then a wakeup sequence is constructed consisting of only the second message up until and including its first conflicting event.

When messages can branch on values read from shared variables,
reversing the order of two messages may change the control flow of each involved message.
Also in this case, \EventDPOR's principles for reversing races work fine.
We illustrate this on the program in \cref{fig:non-atomic-1},
consisting of two threads $s$ and $t$ and a handler thread $h$.
Thread $s$ posts a message $p_1$ to $h$ and thereafter writes to \texttt{x}.
Thread $t$ posts message $p_2$ to $h$
that reads from \texttt{x} and \emph{may} then read from~\texttt{y}.
 
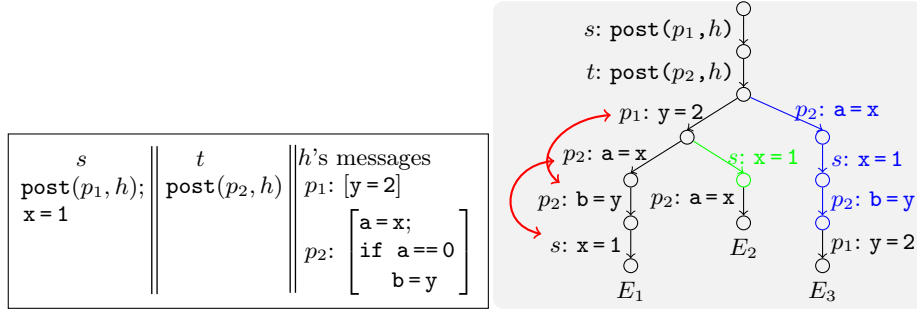
\begin{figure}[t]
  \centering
  \begin{minipage}[b]{0.52\textwidth}
    \centering \footnotesize
    \begin{tikzpicture}[line width=1pt,framed,inner sep=1pt]
      \node[name=p,anchor=south west] at (-0.15,0.25) {{$s$}};
      \node[name=post1] at (0,0) {$\mathtt{post}(p_1,h);$};
      \node[name=n12,anchor=north] at ($(post1.south)+(-14pt,0.5pt)$) {$\texttt{x\,=\,1}$};
      
      \draw[line width=0.5pt] ($(post1.north east)+(1pt,10pt)$)--($(post1.south east)+(1pt,-34pt)$);
      \draw[line width=0.5pt] ($(post1.north east)+(3pt,10pt)$)--($(post1.south east)+(3pt,-34pt)$);
      
      \node[name=q,anchor=south west] at (1.45,0.25) {{$t$}};
      \node[name=post2,anchor=west] at ($(post1.east)+(5pt,0.5pt)$) {$\mathtt{post}(p_2,h)$};

      \draw[line width=0.5pt] ($(post2.north east)+(1pt,10pt)$) -- ($(post2.south east)+(1pt,-34pt)$);
      \draw[line width=0.5pt] ($(post2.north east)+(3pt,10pt)$) -- ($(post2.south east)+(3pt,-34pt)$);
      
      \node[name=r,anchor=south west] at (2.8,0.2) {{$h$}'s messages};
      \node[name=m1,anchor=west] at ($(post2.east)+(5pt,0.5pt)$)  {$p_1$: $\left[\texttt{y\,=\,2}\right]$};
      \node[name=m2,anchor=north west] at ($(m1.south west)+(0pt,-1pt)$) {$p_2$: $\left[\begin{array}{@{}l@{}}\texttt{a\,=\,x};\\ \texttt{if\ a\,==\,0\,}\\ \ \ \ \ \texttt{b\,=\,y}\end{array}\right]$};
    \end{tikzpicture}
  \end{minipage}
  \begin{minipage}[b]{0.46\textwidth}
    \centering \footnotesize
    \begin{tikzpicture}[
        yscale=.57,xscale=1.0,
        level distance=1cm,
        st/.style={ellipse,inner sep=0.2em,draw},
        wu/.style={ellipse,inner sep=0.2em,color=blue,draw},
        wug/.style={ellipse,inner sep=0.2em,color=green,draw},
        child anchor=north
      ]
      \node[st] {}
      child { node[st] {}
        child { node[st] {}
          child { node[st] {}
            child { node[st] {}
              child { node[st] {}
                child { node[st,label=south:$E_1$] {}
                  edge from parent [->] node [left] (x1) {\evnt{s}{x\,=\,1}}
                }
                edge from parent [->] node [left] (by) {\evnt{p_2}{b\,=\,y}}
              }
              edge from parent [->] node [left] {\evnt{p_2}{a\,=\,x}}
              edge [color=red,thick,<->,bend right=75] (x1);
            }
            child { node[wug] {}
              child { node[st,color=black,label=south:$E_2$] {}
                edge from parent [->,color=black] node [left] {\evnt{p_2}{a\,=\,x}}
              }
              edge from parent [->,color=green] node [right] {\evnt{s}{x\,=\,1}}
             }
            edge from parent [->] node [left] {\evnt{p_1}{y\,=\,2}}
            edge [color=red,thick,<->,bend right=60] (by);
          }
          child { node[wu] at (0.3,0) {}
            child { node[wu] {}
              child { node[st] {}
                child { node[st,color=black,label=south:$E_3$] {}
                  edge from parent [->,color=black] node [right] {\evnt{p_1}{y\,=\,2}}  
                }
                edge from parent [->,color=blue] node [right] {\evnt{p_2}{b\,=\,y}}  
              }
              edge from parent [->,color=blue] node [right] {\evnt{s}{x\,=\,1}}
            }
            edge from parent [->,color=blue] node [right] {\evnt{p_2}{a\,=\,x}}
          }
          edge from parent [->] node [left] {\evnt{t}{post($p_2$,$h$)}}
        }
        edge from parent [->] node [left] {\evnt{s}{post($p_1$,$h$)}}
      };
      \tikzwrapfigbg
    \end{tikzpicture}
  \end{minipage}
  \caption{A program with messages that branch on read values and its exploration tree.}
  \label{fig:non-atomic-1} 
\end{figure}

Assume that the first execution is $E_1$, where $s$'s access to \texttt{x} goes last.
The execution has two races:
one on \texttt{y} between {\evnt{p_1}{y\,=\,2}} and {\evnt{p_2}{b\,=\,y}}, and
one on \texttt{x} between {\evnt{p_2}{a\,=\,x}} and \evnt{s}{x\,=\,1}.
The race on \texttt{x} can be handled in the same way as in \OptimalDPOR:
the wakeup sequence is $\evnt{s}{x\,=\,1}$, which branches off after the prefix $s.t.p_1$ (green in~\cref{fig:non-atomic-1}), and 
will subsequently be extended to execution~$E_2$.
The race on \texttt{y} is a race between events in two messages on the same handler thread. As in the previous example, the wakeup sequence will
include the second message up until and including the first racing event, which is {\evnt{p_2}{b\,=\,y}}.
Included in the events that do not happen-after the first event is also $\evnt{s}{x\,=\,1}$, which must be placed after its predecessor {\evnt{p_2}{a\,=\,x}}, yielding the wakeup sequence
  {\evnt{p_2}{a\,=\,x}}; \evnt{s}{x\,=\,1}; {\evnt{p_2}{b\,=\,y}}, which
branches off after
  \evnt{s}{post($p_1$,$h$)}, 
  \evnt{t}{post($p_2$,$h$)}.
  This is the blue rightmost branch of the tree in~\cref{fig:non-atomic-1},
  and is later extended into the execution $E_3$.
Execution $E_3$ has a race on \texttt{x}. Its reversal produces the wakeup sequence \evnt{s}{x\,=\,1}, which is a tentative branch next
to  {\evnt{p_2}{a\,=\,x}}. However, this wakeup sequence is not in conflict with the left branch labeled  {\evnt{p_1}{b\,=\,y}}, which means that it will not
be inserted for the reason that it is equivalent to a subsequence of an execution starting with  {\evnt{p_1}{b\,=\,y}}, namely $E_2$.

\hasbeenremoved{
  \begin{figure}[h]
  \centering
  \begin{minipage}[b]{0.53\textwidth}
    \centering \footnotesize
    \begin{tikzpicture}[line width=1pt,framed,inner sep=1pt]
      \node[name=p,anchor=south west] at (-0.15,0.25) {{$s$}};
      \node[name=post1] at (0,0) {$\mathtt{post}(p_1,h)$};
      
      \draw[line width=0.5pt] ($(post1.north east)+(1pt,10pt)$)--($(post1.south east)+(1pt,-30pt)$);
      \draw[line width=0.5pt] ($(post1.north east)+(3pt,10pt)$)--($(post1.south east)+(3pt,-30pt)$);
      
      \node[name=q,anchor=south west] at (1.45,0.25) {{$t$}};
      \node[name=n21,anchor=west] at ($(post1.east)+(5pt,0.5pt)$) {\texttt{y\,=\,1;}};
      \node[name=post2,anchor=west] at ($(n21.south west)+(0pt,-4pt)$) {$\mathtt{post}(p_2,h)$};
      \draw[line width=0.5pt] ($(n21.north east)+(22pt,10pt)$) -- ($(n21.south east)+(22pt,-30pt)$);
      \draw[line width=0.5pt] ($(n21.north east)+(24pt,10pt)$) -- ($(n21.south east)+(24pt,-30pt)$);
      
      \node[name=r,anchor=south west] at (2.8,0.2) {{$h$}'s messages};
      \node[name=n31,anchor=west] at ($(post2.east)+(5pt,0.5pt)$) {$p_1$: $\left[ \begin{array}{@{}l@{}}\texttt{a\,=\,y};\\ \texttt{if a\,==\,0\ }\\ \ \ \ \ \texttt{x\,=\,1}\end{array}\right]$};
      \node[name=n32,anchor=north west] at ($(n31.south west)+(0pt,-1pt)$) {$p_2$: $\left[\texttt{x\,=\,2}\right]$};
    \end{tikzpicture}
  \end{minipage}
  \begin{minipage}[b]{0.40\textwidth}
    \centering \footnotesize
    \begin{tikzpicture}[
        yscale=.6,xscale=0.8,
        level distance=1cm,
        st/.style={ellipse,inner sep=0.2em,draw},
        wu/.style={ellipse,inner sep=0.2em,color=blue,draw},
        child anchor=north
      ]
      \node[st] {}
      child { node[st] {}
        child { node[st] {}
          child { node[st] {}
            child { node[st] {}
              child { node[st] {}
                child { node[st,label=south:$E_1$] {}
                  edge from parent [->] node [left] {\hndlr{h}{p_2}{x\,=\,2}}
                }
                edge from parent [->] node [left] {\evnt{t}{post($p_2$,$h$)}}
              }
              edge from parent [->] node [left] (y1) {\evnt{t}{y\,=\,1}}
            }
            edge from parent [->] node [left] {\hndlr{h}{p_1}{x\,=\,1}}
          }
          edge from parent [->] node [left] (ay) {\hndlr{h}{p_1}{a\,=\,y}}
          edge [color=red,thick,<->,bend right=90] (y1);
        }
        child { node[wu] {}
          child { node[st, color=black] {}
            child { node[st] {}
              child { node[st,label=south:$E_2$] {}
                edge from parent [->] node [right] {\hndlr{h}{p_2}{x\,=\,2}}
              }
              edge from parent [->] node [right] {\evnt{t}{post($p_2$,$h$)}}
            }
            edge from parent [->,color=black] node [right] {\hndlr{h}{p_1}{a\,=\,y}}
          }
          edge from parent [->,color=blue] node [right] {\evnt{t}{y\,=\,1}}
        }
        edge from parent [->] node [left] {\evnt{s}{post($p_1$,$h$)}}
      };
      \tikzwrapfigbg
    \end{tikzpicture}
  \end{minipage}
  \caption{An program with a message with control-flow, and two explored executions.}
  \label{fig:example2}
\end{figure}
Now
consider the program in~\cref{fig:example2}.
Assume that the first explored execution is~$E_1$.
There is in fact only one race here: between  {\hndlr{h}{p_1}{a\,=\,y}} and \evnt{t}{y\,=\,1},
shown with a red arc with arrows in the figure.
This race will be reversed in the standard way.
The wakeup sequence, shown in blue, will be inserted after \evnt{s}{post($p_1,h$)} and will be extended to $E_2$.
On the other hand, note that the conflict between  {\hndlr{h}{p_1}{x\,=\,1}} and {\hndlr{h}{p_2}{x\,=\,2}} in $E_1$ is not a race,
since it should be regarded as a conflict between the message $p_1$ and the event  {\hndlr{h}{p_2}{x\,=\,2}}.
Viewed in this way, there is another dependency from $p_1$ to  {\hndlr{h}{p_2}{x\,=\,2}},
namely via the events  {\hndlr{h}{p_1}{a\,=\,y}}, \evnt{t}{y\,=\,1}, and \evnt{t}{post($p_2,h$)}.
Therefore the conflict between the accesses to \texttt{x} is not a race.
That this conflict is not a race can also be seen by trying to swap the accesses to \texttt{x}
and discover that this will induce a cycle in the happens-before relation.
}

\begin{figure}[t]
  \centering
  \begin{minipage}[b]{0.5\textwidth}
    \footnotesize
    \begin{tikzpicture}[line width=1pt,framed,inner sep=1pt]
      \node[name=p,anchor=south west] at (-0.15,0.25) {{$t$}};
      \node[name=postp1] at (0,0) {$\mathtt{post}(p_1,h)$};
      \node[name=postp2,anchor=west] at ($(postp1.south west)+(0pt,-4pt)$) {$\mathtt{post}(p_2,h)$};
      \node[name=postq1,anchor=west] at ($(postp2.south west)+(0pt,-4pt)$) {$\mathtt{post}(q_1,k)$};
      \node[name=postq2,anchor=west] at ($(postq1.south west)+(0pt,-4pt)$) {$\mathtt{post}(q_2,k)$};
      
      \draw[line width=0.5pt] ($(post1.north east)+(1pt,10pt)$)--($(post1.south east)+(1pt,-32pt)$);
      \draw[line width=0.5pt] ($(post1.north east)+(3pt,10pt)$)--($(post1.south east)+(3pt,-32pt)$);
      
      \node[name=q,anchor=south west] at (1.0,0.25) {{$h$}'s messages};
      \node[name=p1,anchor=west] at ($(post1.east)+(10pt,-5pt)$) {$p_1$:  $\left[\begin{array}{@{}l@{}}\texttt{d\,=\,1};\\ \texttt{a\,=\,y}\end{array} \right]$};
      \node[name=p2,anchor=west] at ($(p1.south west)+(0pt,-8pt)$) {$p_2$: $\left[\texttt{z\,=\,1}\right]$};

      \draw[line width=0.5pt] ($(p1.north east)+(3pt,10pt)$) -- ($(p1.south east)+(3pt,-22pt)$);
      \draw[line width=0.5pt] ($(p1.north east)+(5pt,10pt)$) -- ($(p1.south east)+(5pt,-22pt)$);

      \node[name=r,anchor=south west] at (3.1,0.2) {{$k$}'s messages};
      \node[name=q1,anchor=west] at ($(p1.east)+(12pt,0pt)$) {$q_1$: $\left[\begin{array}{@{}l@{}}\texttt{y\,=\,1};\\ \texttt{x\,=\,1}\end{array} \right]$};
      \node[name=q2,anchor=west] at ($(q1.south west)+(0pt,-12pt)$) {$q_2$: $\left[\begin{array}{@{}l@{}}\texttt{b\,=\,z};\\ \texttt{c\,=\,x}\end{array} \right]$};
    \end{tikzpicture}
  \end{minipage}
  \begin{minipage}[b]{0.45\textwidth}
    \centering 
    \footnotesize
    \begin{tikzpicture}[
        yscale=.57,xscale=0.8,
        level distance=1cm,
        st/.style={ellipse,inner sep=0.2em,draw},
        wu/.style={ellipse,inner sep=0.2em,color=blue,draw},
        child anchor=north
      ]
      \node[st] {}
      child { node[st] {}
        child { node[st] {}
          child { node[st] {}
            child { node[st] {}
              child { node[st] {}
                child { node[st] {}
                  child { node[st,label=south:$E_1$] {}
                    edge from parent [->] node [left] (cx)  {\hndlr{k}{q_2}{c\,=\,x}}
                  }
                  edge from parent [->] node [left] {\hndlr{k}{q_2}{b\,=\,z}}
                }
                edge from parent [->] node [left] (y1)  {\hndlr{h}{p_2}{z\,=\,1}}
              }
              edge from parent [->] node [left] {\hndlr{k}{q_1}{x\,=\,1}}
              edge [color=red,thick,<->,bend right=90] (cx);
            }
            edge from parent [->] node [left] (ay) {\hndlr{h}{p_1}{a\,=\,y}}
          }
          edge from parent [->] node [left] {\hndlr{k}{q_1}{y\,=\,1}}
        }
        edge from parent [->] node [left] {\hndlr{h}{p_1}{d\,=\,1}}
      }
      child { node[wu] {}
        child { node[wu] {}
          child { node[wu] {}
            child { node[wu] {}
              child { node[st, color=black] {}
                child { node[st] {}
                  child { node[st,label=south:$E_2$] {}
                    edge from parent [->] node [right] {\hndlr{k}{q_1}{x\,=\,1}}
                  }
                  edge from parent [->] node [right] {\hndlr{h}{p_1}{a\,=\,y}}
                }
                edge from parent [->,color=black] node [right] (y1) {\hndlr{k}{q_1}{y\,=\,1}}
              }
              edge from parent [->,color=blue] node [right] {\hndlr{k}{q_2}{c\,=\,x}} 
            }
            edge from parent [->,color=blue] node [right] (ay)  {\hndlr{h}{p_1}{d\,=\,1}} 
          }
          edge from parent [->,color=blue] node [right] {\hndlr{k}{q_2}{b\,=\,z}} 
        }
        edge from parent [->,color=blue] node [right] {\hndlr{h}{p_2}{z\,=\,1}} 
      };
      \tikzwrapfigbg
    \end{tikzpicture}
  \end{minipage}
  \caption{A program in which a reversal of the race on \texttt{x} will reorder messages on the handler $k$,
    and two executions that will be explored.}
  \label{fig:example3}
\end{figure}
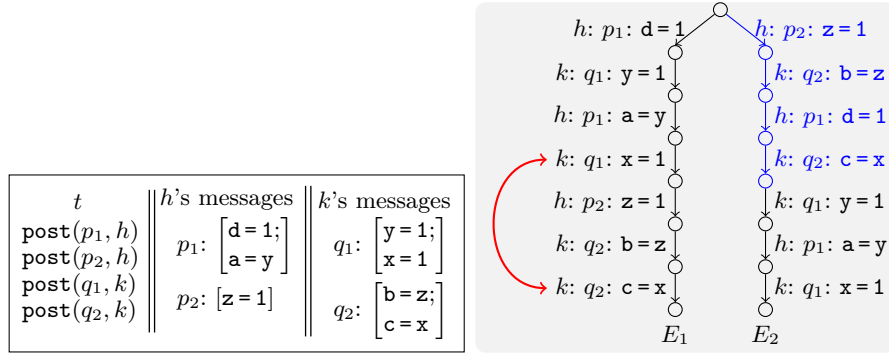
\paragraph{Reordering Messages when Reversing Races}
\EventDPOR's principles for reversing races may necessitate
reordering of messages on handlers that are not involved in the race.
Consider the program in~\cref{fig:example3}.
Assume that the first explored execution is~$E_1$, where we have omitted the initial sequence of post events of thread $t$ for  succinctness.
In $E_1$, message $p_1$ is processed before $p_2$, and $q_1$ is processed before $q_2$. There are three races in $E_1$, one on each of the shared variables \texttt{x}, \texttt{y}, \texttt{z}.
Let us consider the race on \texttt{x}, shown by the red arrow. A wakeup sequence which reverses  this race must include all events of $q_2$, since these are the
$\happbf{\hb}{E_1}$-predecessors of \evnt{q_2}{c\,=\,x}. It must also include
the write to \texttt{z} by $p_2$ since it is a $\happbf{\hb}{E_1}$-predecessor of events in $q_2$. On the other hand, it cannot include any part of the message $q_1$, since $q_1$ must now occur after $q_2$, and therefore it also cannot include the read of \texttt{y} by $p_1$ since its predecessor in $q_1$ is missing.
In summary, the wakeup sequence contains two fully processed messages $p_2$ and~$q_2$, the event \hndlr{h}{p_1}{d\,=\,1} of $p_1$, 
but no events from $q_1$. Such a wakeup sequence must branch off after the post events of $t$, i.e., from the root of the tree to the right in~\cref{fig:example3}. Later, this wakeup sequence is extended to a full execution $E_2$.
In total, the program of~\cref{fig:example3} has eight inequivalent executions (the other six are not shown).

\hasbeenremoved{It should be observed that the mechanism of triggering the reordering of messages $p_1$ and $p_2$ due to a race between accesses of $q_1$ and $q_2$ is necessary in this example, otherwise execution $E_2$ can not be reached from $E_1$ by a sequence of race reversals.}

\section{Computation Model}
\label{sec:model}

\hasbeenremoved{In this section, we introduce the class of event-driven programs that we consider.
We also define several semantical notions such as transitions, events, executions, and traces.}

\subsection{Programs}
\label{sec:programs}
We consider programs consisting of a finite set of \emph{threads} that interact via
a finite set of \emph{(shared) variables}.
Each thread is either a \emph{normal thread} or a \emph{handler thread}.
%
A normal thread has a finite set of local registers and runs a deterministic code, built in a standard way
from expressions and atomic statements, using standard control
flow constructs (sequential composition, selection and bounded iteration).
Atomic statements read or write to shared variables and local registers,
including read-modify-write operations, such as \mbox{compare-and-swap}.
A handler thread has a \emph{mailbox} to which all threads (also handler threads) can post messages.
A mailbox has unbounded capacity, implying that the posting of a message to a mailbox can never block.
A message consists of a deterministic code, built in the same way as the code of a thread. 
We let $\post(p,h)$ denote the statement which posts the message $p$ into the mailbox of handler thread $h$.
A handler thread repeatedly extracts
a message from its mailbox, executes the code of the message to completion,
then extracts a next message and executes its code, and so on. 
Messages are extracted from the mailbox in arbitrary order.
The execution of a message is interleaved with the statements of other threads.

The local state of a thread is a valuation of its local registers
together with the contents of its mailbox.
A global state of a program consists of a local state of each thread together
with a valuation of the shared variables.
The program has a unique initial state, in which mailboxes are empty.

Recall that we use \emph{message} to denote what is called \emph{event} in \cref{sec:intro}.

\subsection{Events, Executions, Happens-before Ordering, and Equivalence}
\label{sec:events}

We use $s,t, \ldots$ for threads,
$p,q,\ldots$ for messages and non-handler threads,
\texttt{x}, \texttt{y}, \texttt{z} for shared variables,
and \texttt{a}, \texttt{b}, \texttt{c}, \texttt{d} for local registers.
We assume, 
wlog,
that the first event of a message does not access a shared variable, but only performs a local action, e.g., related to initialization
of message execution.
In order to simplify the presentation, we henceforth extend the term \emph{message} to refer not only to a message but also to a non-handler thread.

The execution of a program statement is an \emph{event},
which affects the global state of the program. 
An event  is denoted by a pair $\transpair pi$, where $p$ denotes the
message containing the event and $i$ is a positive integer,
denoting that the event results from the $i$-th execution step in message $p$.
\hasbeenremoved{
We call an event \emph{global} if it accesses a shared variable or posts a message.
A message is denoted by the event that posted it. 
For instance, if the second step of
message $p$ posts a message, whose first step is to post another message $q$,
then the event representing the first step of message $q$ is denoted
$\transpair{\msgof{\msgof{p}{2}}{1}}{1}$.
As is customary in DPOR algorithms, we can let an event
represent the combined effect of a sequence of statements, if at most one of them affects a shared variable or the
local state of other threads.
This avoids consideration of interleavings of
local statements of different threads in the analysis.
}
An \emph{execution sequence} $\exseq$
is a finite sequence of events, starting from the initial state of the program.
Since thread and message codes are deterministic, an execution
sequence $\exseq$ can be uniquely characterized by the sequence
of messages (and non-handler threads) that perform execution steps in $\exseq$,
where we use dot(.) as concatenation operator.
Thus $p.p.q$ denotes the
execution sequence consisting first of two events of $p$, followed by an event of~$q$. 

We let $\enabled{E}$ denote the set of
messages that can perform a next event in the state to which $E$ leads.
A sequence $\exseq$ is \emph{maximal} if $\enabled{E} = \emptyset$.
We use $u,v,w, \ldots$ to range over sequences of events. 
We introduce the following notation, where $E$ is an execution sequence and $w$ is a sequence of events.
\begin{itemize}[-]
\item $\emptyseq$  denotes the empty sequence. 
\item $\valid{\exseq}{w}$ denotes that $\exseq.w$ is an execution sequence.
\item $w \remove p$ denotes the sequence
  $w$ with its first occurrence of $p$ (if any) removed.
\item $\dom{\exseq}$ denotes the set of events $\transpair{p}{i}$ in $\exseq$, that is, $\transpair{p}{i} \in \dom{\exseq}$ iff
  $\exseq$ contains at least $i$ events of $p$.
  We also write $e \in E$ to denote $e \in \dom{E}$.
\item $\nextev{\exseq}{p}$ denotes the next event to be performed by the message $p$ after the execution $E$ if $p \in \enabled{E}$,
  otherwise $\nextev{\exseq}{p}$ is undefined.
\item $\procof{\event}$ denotes the
  message that performs $e$, i.e., $e$ is of form
$\event = \transpair{\procof{\event}}{i}$ for some $i$.
\item $\exseq' \prefix \exseq$ denotes that  $\exseq'$ is 
  a  (not necessarily strict) prefix of $\exseq$.
\end{itemize}
We say that \emph{$p$ starts after $E$} if $p$ has been posted in $E$, but not yet performed any events in $E$.
We say that \emph{$p$ is active after $E$} if $p$ has been posted in $E$, but not finished its execution in $E$.

\hasbeenremoved{The basis for our DPOR algorithm is the definition of a happens-before relation on the events of each execution sequence,
which captures the data and control dependencies that must be respected by any equivalent execution.}

\begin{definition}[Happens-before]
\label{def:hb-def}
Given an execution sequence $E$,
we define the \emph{happens-before relation} on $E$, denoted
$\happbf{\hb}{E}$, as the smallest irreflexive partial order on $\dom{E}$ such that
$e \happbf{\hb}{E} e'$ if $\event$ occurs before $\event'$ in $\exseq$ and either  
\begin{itemize}
\item
$e$ and $e'$ are performed by the same message $p$,
\item
$e$ and $e'$ access a common shared variable \texttt{x} and at least one writes to \texttt{x}, or
\item
$\procof{e'}$ is the message that is posted by $e$ and $e'$ is the first event of $\procof{e'}$. 
  \qed
\end{itemize}
\end{definition}
\hasbeenremoved{Intuitively, $\happbf{\po}{E}$ (\emph{program order}) is the total order of events of each message.
Note that $\happbf{\po}{E}$  does not order events of different messages relative to each other.
The relation $\happbf{\cnf}{E}$ (\emph{conflicts with}) captures data flow constraints arising from reads and writes to shared variables.
The relation $\happbf{\pb}{E}$ (\emph{posted by}) captures the causal dependency from message posting to message execution.}
The \emph{\hb-trace} (or \emph{trace} for short) of $E$ is the directed graph $(\dom{E},\ \happbf{\hb}{E})$.
\begin{definition}[Equivalence]
\label{def:hb-equiv}
Two execution sequences $E$ and $E'$ are 
\emph{equivalent}, denoted $E \mtequiv E'$, if they have the same trace.
We let \eqclass{E} denote the equivalence class of $E$.
  \qed
\end{definition}
Note that for programs that do not post or process messages,
$\mtequiv$ is the standard Mazurkiewicz trace equivalence for multi-threaded programs \cite{Mazurkiewicz:traces,Godefroid:thesis,FG:dpor,abdulla2014optimal}.
We say that two sequences of events, $w$ and $w'$, with $\valid{E}{w}$ and $\valid{E}{w'}$, are
\emph{equivalent after $E$}, denoted $w \equivafter{E} w'$ if $E.w \mtequiv E.w'$.

\hasbeenremoved{
  In the event-driven execution model, the happens-before relation induces additional ordering constraints, since 
each handler must execute its messages in some sequential order. The following \emph{saturation operation} adds such additional orderings imposed by
any ordering relation on events.
\begin{definition}[Saturation]
  \label{def:weakall}
Let $E$ be a sequence of events, and $\happbf{}{E}$ be an irreflexive partial order on the events of $E$. 
We define
$\weaksatrel{\happbf{}{E}}$ as the smallest transitive relation $\happbf{\sat}{E}$ on the events of $E$ which includes
$\happbf{}{E}$ and satisfies the constraint that  whenever $e$ and $e'$ are events in different messages on the same handler,
and there is an event $e''$ in the same message as $e$ and an event $e'''$  in the same
  message as $e'$ with $e'' \happbf{\sat}{E} e'''$,
then $e \happbf{\sat}{E} e'$.
\qed
\end{definition}
In the above definition, note that it is not required that $e$ is distinct from $e''$, nor that $e'$ is distinct from $e'''$.
}

\hasbeenremoved{
\paragraph{\bf Event-driven Consistency.}
The event-driven consistency problem consists in checking whether, for a given  directed graph 
$(S,\happbf{\hb}{S})$ where $S$ is a set of events and $\happbf{\hb}{S}$ is a set of edges,  there is an execution sequence $E$ such that $(S,\happbf{\hb}{S})$ is the  \emph{\hb-trace} of  $E$.

\begin{theorem}
\label{thm-consistency}
The event-driven consistency problem is NP-complete.
\end{theorem}

The proof of the above theorem can be found in \cref{sec:complexity-proof-consistency}.  Given this NP-hardness result,  we define a procedure  to reverse races (\cref{sec:race-reversals-appendix}) that makes use of a saturation procedure to constrain the ordering between messages and therefore reduces the number of cases to  consider. 
}

\section{The \EventDPOR Algorithm}
\label{sec:eventdpor}

In this section, we present \emph{\EventDPOR}, a DPOR algorithm for event-driven programs.
Given a terminating program on given input,
the algorithm explores different maximal executions resulting from different thread interleavings.
\hasbeenremoved{
\EventDPOR is correct, i.e., it explores at least one execution in each equivalence class induced by $\mtequiv$.
For the class of non-branching programs, it is also optimal, in the sense that it explores exactly one execution in each equivalence class.
  We first introduce essential concepts of \EventDPOR (\cref{sec:prels}) and then describe \EventDPOR itself (\cref{sec:algo:access-sets}).
Thereafter, specific parts in \EventDPOR are described:
the reversal of races (\cref{sec:race-reversals}), checking redundancy (\cref{sec:checkwi}),
and wakeup trees (\cref{sec:wakeuptrees}).
}

\subsection{Central Concepts in \EventDPOR}
\label{sec:prels}
\hasbeenremoved{In this section, we define central concepts in \EventDPOR. We first define
the concepts of \emph{happens-before prefix} and \emph{weak initials},
which are used in the check for redundancy of new executions.
Thereafter, we define \emph{races}: these are used to construct new executions from already explored ones.}

\begin{definition}[Happens-before Prefix]
\label{def:hb-prefix}
Let $E$ and $E'$ be execution sequences.
We say that $E'$ is a {\em happens-before prefix} of $E$, denoted
$E' \mtprefix E$, if
\begin{inparaenum}[(i)]
\item
  $\dom{E'} \subseteq \dom{E}$,
\item
  $\happbf{\hb}{E'}$ is the restriction of $\happbf{\hb}{E}$ to $E'$, and
\item
  whenever $e \happbf{\hb}{E} e'$ for some $e' \in \dom{E'}$, then $e \in \dom{E'}$.
\end{inparaenum}
We let $w' \mtprefixafter{E} w$ denote that $E.w' \mtprefix E.w$.
  \qed
\end{definition}
Intuitively, $E' \mtprefix E$ denotes that the execution $E'$ is ``contained'' in the execution $E$
in such a way that it is not affected by the events in $E$ that are not in $E'$.
\footnote{The  relation $w' \mtprefixafter{E} w$ is also introduced in \citet{Maiya:tacas16}, as ``$w$ is a dependence-covering sequence of $w'$.''}
To illustrate, for the top left program of \cref{fig:example1new},
the execution $E'$ consisting of \evnt{t}{post($p_2$,$h$)} \hndlr{h}{p_2}{y\,=\,2} is a happens-before prefix of
any maximal execution of the program, since the event of $p_2$ cannot happen-after any other event than
the event that posts $p_2$, which is already in $E'$.

\begin{definition}[Weak Initials]
  \label{def:winits}
  Let $E$ be an execution sequence, and $w$ be a sequence with $\valid Ew$.
  The set $\winits{\exseq}{w}$ of \emph{weak initials of $w$ after $E$} is the set of messages $p$
such that $\valid \exseq p.w'$ for some $w'$ with $w \mtprefixafter{E} p.w'$.
\qed
\end{definition}
\begingroup
\setlength{\intextsep}{0em}%
\setlength{\columnsep}{.75em}%
Intuitively, $p$ is in $\winits{\exseq}{w}$ if $p$ can execute the first event in a continuation of $\exseq$ which ``contains'' $w$, in the sense of $\sqsubseteq$.
In \EventDPOR, the concept of weak initials is used to test whether a new sequence is redundant, i.e., is ``contained in'' an execution that have been explored or in
a wakeup sequence that is scheduled for exploration.
Note that in \cref{def:winits}, we can generally not choose $w'$ as $w \remove p$.
This happens, e.g., if $p$ does not occur in $w$ but instead $w$ contains another message $p'$ which executes on the same handler as $p$ and
does not conflict with $p$; in this case $w'$ must contain a completed execution of $p$ inserted before $p'$.

\begin{wrapfigure}{r}{0.46\textwidth}
  \footnotesize
  \begin{tikzpicture}[line width=1pt,framed,inner sep=1pt]
    \node[name=p,anchor=south west] at (-0.15,0.25) {{$s$}};
   \node[name=post1] at (0,0) {$\mathtt{post}(p_1,h)$};
    
    \draw[line width=0.5pt] ($(post1.north east)+(1pt,10pt)$)--($(post1.south east)+(1pt,-22pt)$);
    \draw[line width=0.5pt] ($(post1.north east)+(3pt,10pt)$)--($(post1.south east)+(3pt,-22pt)$);
    
    \node[name=q,anchor=south west] at (1.45,0.25) {{$t$}};
    \node[name=post2,anchor=west] at ($(post1.east)+(5pt,0.5pt)$) {$\mathtt{post}(p_2,h)$};
    
    \draw[line width=0.5pt] ($(post2.north east)+(1pt,10pt)$) -- ($(post2.south east)+(1pt,-22pt)$);
    \draw[line width=0.5pt] ($(post2.north east)+(3pt,10pt)$) -- ($(post2.south east)+(3pt,-22pt)$);
    
    \node[name=r,anchor=south west] at (2.7,0.25) {{$h$'s messages}};
    \node[name=m1,anchor=west] at ($(post2.east)+(8pt,0.5pt)$) {$p_1$: $\left[\texttt{x\,=\,1}\right]$};
    \node[name=m2,anchor=north west] at ($(m1.south west)+(0pt,-1pt)$) {$p_2$: $\left[\begin{array}{@{}l@{}}\texttt{y\,=\,2};\\ \texttt{z\,=\,2}\end{array}\right]$};
  \end{tikzpicture}
\vspace{-0.6cm}
  \caption{Illustrating weak initials}
  \label{prog:wi-illustration}
\end{wrapfigure}
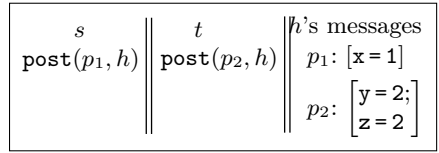
We illustrate using the program shown on the right.
If we let $E$ be the execution $s.t$ and $w$ be the sequence $p_1$,
we have $p_2 \in \winits{\exseq}{w}$, since $w \mtprefixafter{E} p_2.p_2.p_1$.
This illustration shows that in order to determine whether $p \in \winits{\exseq}{w}$ for a message $p$, one must know which shared-variable access will be performed by
$\nextev{E}{p}$, and, in case $p$ starts after $E$ but will execute after some other message on its handler,
also the sequences of shared-variable accesses that $p$ will perform when executing to completion.

The weak initial check problem consists in checking whether $p \in \winits{\exseq}{w}$.

\begin{theorem}
\label{thm-lowerbound}
The weak initial check problem is NP-hard. 
\end{theorem}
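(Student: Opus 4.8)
The plan is to prove NP-hardness by a reduction from a known NP-complete problem, most naturally one whose structure already forces a ``serialization'' or ``scheduling'' constraint. Since the weak-initial check asks whether a message $p$ can be scheduled \emph{first} in some continuation of $E$ that contains the happens-before trace of $w$, the obstruction to answering ``yes'' is exactly that the events of $w$, together with the handler-serialization constraints and the data/control dependencies, may leave no room for $p$ to go first (or may be unrealizable at all under the event-driven semantics). This is essentially an instance of the event-driven consistency problem restricted by a prefix, so the cleanest route is to reduce from the NP-complete consistency problem (Theorem~\ref{thm-consistency} in the excerpt) or, more concretely, from a classic problem such as \textsc{Monotone 1-in-3 SAT} or \textsc{Set Splitting} / \textsc{Hamiltonian Path}, encoding ``which message a handler processes next'' as the NP-hard choice.

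\textbf{First} I would fix a source problem; I expect to use \textsc{Monotone 1-in-3 SAT} (or the ordering-flavored \textsc{Betweenness}), because the handler on which several messages compete gives a natural gadget for a linear order whose feasibility encodes a Boolean constraint. Given a formula $\phi$ with variables $x_1,\dots,x_n$ and clauses $C_1,\dots,C_m$, I would build a program with: one designated message $p$ (the candidate weak initial) that posts the ``variable'' messages; a single handler thread $h$ to which, for each variable $x_j$, two conflicting messages $p_j^{\mathit{true}}$ and $p_j^{\mathit{false}}$ are posted, so that the order in which $h$ processes them encodes the truth value of $x_j$; and auxiliary shared variables read by clause-checking messages so that a consistent global schedule exists \emph{iff} exactly one literal per clause is true. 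The sequence $w$ would be chosen as (a linearization of) the clause-checking events plus the ``wrong-value'' markers, engineered so that $\happbf{\hb}{E.p.w'} $ can match $\happbf{\hb}{E.w}$ precisely when the encoded assignment is 1-in-3 satisfying, and so that $p$ going first is forced by construction of $E$ and $w$ (e.g.\ $w$ references no event of $p$ other than the post events, making $p \in \winits{E}{w}$ equivalent to realizability of the rest).

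\textbf{Next} I would verify the two directions. For soundness (``yes'' instance $\Rightarrow$ $p \in \winits{E}{w}$): from a 1-in-3 satisfying assignment, construct the continuation $p.w'$ by having $h$ process, for each $j$, the two messages $p_j^{\mathit{true}},p_j^{\mathit{false}}$ in the order dictated by the assignment, then run the clause messages; show $w \mtprefixafter{E} p.w'$ by checking the three bullet conditions of Definition~\ref{def:hb-prefix} hold for the corresponding events. For completeness (``no'' instance $\Rightarrow$ $p \notin \winits{E}{w}$): argue that any $w'$ with $\valid{E}{p.w'}$ and $w \mtprefixafter{E} p.w'$ induces a linear processing order on each pair $p_j^{\mathit{true}},p_j^{\mathit{false}}$, hence a Boolean assignment, and that the happens-before constraints inherited from $w$ force this assignment to satisfy the 1-in-3 condition on every clause — a contradiction. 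Polynomial-time computability of the reduction is immediate from the construction (linearly many threads, messages, and variables in $|\phi|$).

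\textbf{The main obstacle} I anticipate is controlling the happens-before prefix relation tightly: the condition $w \mtprefixafter{E} p.w'$ is a \emph{containment up to $\happbf{\hb}{}$-closed subsets}, so I must ensure that the clause-gadget reads and writes in $w$ cannot be satisfied by some ``unintended'' schedule of $h$ that sidesteps the assignment encoding (e.g.\ by interleaving a clause message between the two value-messages of a variable in a way that fakes a different truth value). Pinning this down will require carefully chosen extra shared variables and program-order constraints inside each message so that the only $\happbf{\hb}{}$-consistent completions are the intended ones; alternatively, if Theorem~\ref{thm-consistency}'s proof already builds a suitable event graph, I would instead wrap that construction, adding a fresh first message $p$ and choosing $w$ to be the rest of that graph's linearization, so that the weak-initial query reduces directly to event-driven consistency of the remainder. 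I expect the writeup to hinge on one clean lemma: that for the constructed $E$ and $w$, every $\happbf{\hb}{}$-consistent continuation corresponds bijectively to a truth assignment, with clause-feasibility matching 1-in-3 satisfaction.
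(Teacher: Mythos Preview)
Your proposal is in the right neighborhood but has a real gap at the point you yourself flag as ``the main obstacle'', and your fallback route does not work as stated.

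The crucial point you are glossing over is that $w$ must already satisfy $\valid{E}{w}$: it is a \emph{valid execution}, hence its hb-trace is trivially realizable. Therefore the NP-hardness cannot come from ``realizability of the rest''. It must come from the fact that forcing $p$ to go \emph{first} imposes \emph{additional} serialization constraints that were not present in $w$. Your alternative of ``adding a fresh first message $p$'' to the consistency construction fails for exactly this reason: if $p$ is fresh and non-interacting, then $p.w$ is a valid execution with $w \mtprefixafter{E} p.w$, so $p \in \winits{E}{w}$ holds unconditionally. Your primary construction is vaguer, but the same worry applies: you say ``$p$ going first is forced by construction \ldots making $p \in \winits{E}{w}$ equivalent to realizability of the rest'', yet you never specify what in $w$ obstructs $p$ from being first in the easy case, nor how forcing $p$ first cascades into the variable-ordering choices on $h$.

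The paper's proof resolves this by a specific mechanism you are missing. It reduces from the VSC-read problem (verifying sequential consistency given only the read-from relation), not from a SAT variant. The message $p$ is \emph{not} fresh: it shares a dedicated handler $h'$ with a second message $p'$, and in the witness execution $E$ (which plays the role of your $w$, with the prefix being $\emptyseq$) $p'$ runs \emph{before} $p$. Asking whether $p \in \winits{\emptyseq}{E}$ is then asking whether the hb-trace of $E$ can be realized with $p$ before $p'$. Through carefully placed conflict edges, $p$-before-$p'$ forces, on a family of auxiliary handlers $h_e$, each pair $(p'_{e,x},p''_{e,x})$ to flip order; this flip in turn creates indirect hb-paths from the ``event'' messages $p_e$ back to their per-variable serializers $p_{e,x}$, which is precisely the constraint pattern of the event-driven-consistency reduction from VSC-read. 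So the single reordering $p \leftrightarrow p'$ \emph{activates} the hard instance; without it, $E$ is benign. Your 1-in-3 SAT idea could in principle be made to work, but only if you equip $p$ with an analogous companion on the same handler and show that the swap propagates into your variable gadgets --- that propagation is the heart of the argument, and it is absent from your plan.
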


The proof of the above theorem can be found in \cref{thm-lowerbound-weak}.
In \cref{sec:checkwi}, we propose a sequence of inexpensive rendundancy checks, which have shown to be sufficient for all our benchmarks.

\endgroup 

\begin{definition}[Races]
\label{def:races}
Let $E$ be a maximal execution sequence.
Two events $e$ and $e'$ in different messages are in a \emph{race}, denoted $e\revrace{E}e'$, if $e \happbf{\hb}{E} e'$ and
\begin{enumerate}[(i)]
\item
$e$ and $e'$ access a common shared variable and at least one is a write, and
\item  there is no event $e''$ with $e \happbf{\hb}{E} e''$ and $e'' \happbf{\hb}{E} e'$.
  \qed
\end{enumerate}
\end{definition}
Intuitively, a race arises between conflicting accesses to a shared variable, by events which are in different messages but adjacent
in the $\happbf{\hb}{E}$ order.

\subsection{The \EventDPOR Algorithm}
\label{sec:algo:access-sets}

The \EventDPOR algorithm, shown as pseudocode in \cref{alg:eventdpor-access},
performs a depth-first exploration of executions using the recursive procedure
$\explore(\exseq)$, where $\exseq$ is the currently explored execution,
which also serves as the stack of the exploration.
In addition the algorithm maintains three mappings from prefixes of $\exseq$, named $\done$, $wut$, and $\pendingwusname$.
For each prefix $\exseq'$ of $\exseq$, 
\begin{itemize}
\item $\done(\exseq')$ is a mapping whose domain is the set of messages $p$ for which the call $\explore(\exseq'.p)$ has returned.
If $p$ does not start after $E'$, then $\done(\exseq')(p)$ is the shared variable-access performed by $\nextev{E'}{p}$.
If $p$ starts after $E'$, then $\done(\exseq')(p)$ is the set of sequences of shared variable-accesses that can be performed in a completed
    execution of $p$ after $E'$.
   The information in $\done(\exseq')(p)$ is collected during the call $\explore(\exseq'.p)$
  (\crefrange{algacsl:collection-start}{algacsl:donesleeptree-add-ev}).
  \item $\wut{\exseq'}$ is a \emph{wakeup tree}, i.e., an ordered tree $\tuple{B,\prec}$ where
    $B$ is a prefix-closed set of sequences, whose leaves are wakeup sequences.
    For each sequence $u \in B$, the order $\prec$ orders its children (of form  $u.p$)
    by the order in which they were added to $\wut{\exseq'}$. This is also the order in which the sequences of form
    $\exseq'.u.p$ will be visited in the recursive exploration.
\hasbeenremoved{We extend $\prec$ to the post-order relation on $B$ induced by the ordering $\prec$ on children of a node.}
  \item $\pendingwus{\exseq'}$ is a set of wakeup sequences $v$ that were previously being inserted into some wakeup tree $\wut{\exseq''}$, but
    were ``parked'' at the sequence $\exseq'$ because at that time there was not enough information to determine where in $\wut{\exseq''}$ to place $v$.
    Later, when a branch of $\wut{\exseq''}$ has been extended to a maximal execution,
    it should be possible to determine where to insert $v$.
\end{itemize}
\hasbeenremoved{The already explored executions together with the sequences in the wakeup trees can be thought of as forming an exploration tree $\exseqs$.}


Each call to $\explore(E)$ first initializes $\done(E)$ and $\pendingwus{E}$
($\wut{\exseq}$ was initialized before the call),
and thereafter enters one of two phases:
\emph{race detection} (\crefrange{algacsl:event-race-begin}{algacsl:event-race-end}) or
\emph{exploration} (\crefrange{algacsl:exploration-begin}{algacsl:donesleeptree-add-ev}).
The race detection phase is invoked when $\exseq$ is a maximal execution sequence.
First, for each wakeup sequence $v$ parked at a prefix $E'$ of $E$ it invokes $\insertpendingwu{v}{E'}$ to insert $v$ into the appropriate wakeup tree
(\crefrange{algacsl:insert-parkedwus-begin}{algacsl:insert-pendingwu}).
Thereafter, each race (of form $\event \revrace{\exseq} \event'$) in $\exseq$ is analyzed by $\reverserace(\exseq,e,e')$, which returns a set of
executions that reverse the race. Each such execution $E'.v$ is returned as a pair $\tuple{E',v}$, where $v$ is a wakeup sequence that should be
considered for insertion in the wakeup tree at $E'$.
Each wakeup sequence $v$ is checked for redundancy (\cref{algacsl:event-test}), using
the information in $\done$.
If $v$ is not redundant, it is inserted 
into the wakeup tree at $E'$ for future exploration (\cref{algacsl:event-insert}).
\hasbeenremoved{Wakeup tree insertion is elaborated below, and in~\cref{alg:wakeuptree}.}

\begin{algorithm}[!htp]
\Initial{$\explore(\emptyseq)$~with~$\wut{\emptyseq}=\emptytree$}
\BlankLine
\Fn{$\explore(\exseq)$\tcp*[f]{Returns access sequences of messages}}{
$\done(\exseq) := \emptyset$\;\label{algacsl:doneset-init}
$\pendingwus{\exseq} := \emptyset$\;\label{algacsl:pendingwus-init}
  \If(\tcp*[f]{When $E$ is maximal, enter race detection}){$\enabled{\exseq} = \emptyset$}{
    \label{algacsl:event-race-begin}
    \ForEach{$\exseq' \leq \exseq$}{\label{algacsl:insert-parkedwus-begin}
      \ForEach(\tcp*[f]{Parked wakeup sequences}){$v \in \pendingwus{\exseq'}$}{
        $\insertpendingwu{v}{E'}$\tcp*{are inserted at the appropriate place}\label{algacsl:insert-pendingwu}
      }
    }
    \ForEach(\tcp*[f]{For each race in $\exseq$}){$\event,\event'\ \keyword{such that} \ \event \revrace{\exseq} \event'$\label{algacsl:race-loop}} {
      \ForEach(\tcp*[f]{For each race reversal}){$\tuple{\exseq',v} \in \reverserace(\exseq,e,e')$\label{algacsl:rev-race}}{
        \If(\tcp*[f]{If $v$ is not redundant}){$\neg \exists E'',w,p \ \mbox{ s.t. }\ E''.w = E' \land p \in \dom{\done(E'')} \land p \in \winits{E''}{w.v}$\label{algacsl:event-test}}{
          $\insertwus{v}{\exseq'}{\emptyseq}$\tcp*{insert $v$ into the wakeup tree at $E'$}\label{algacsl:event-insert}\label{algacsl:event-race-end}
        }
      }
    }
  }
\Else(\label{algacsl:event-explore-begin}\tcp*[f]{If not at a maximal execution sequence, explore...}){
  \If{$\wut{\exseq} = \emptytree$ }{\label{algacsl:exploration-begin}
    $\keyword{choose}\; p \in \enabled{\exseq}$\tcp*{... or by selecting an arbitrary $p$...}\label{algacsl:wut-empty-choose}
    $\wut{\exseq} := \tuple{\set{\emptyseq,p},\set{(p,\emptyseq)}}$\tcp*{Adapt wakeup tree accordingly}\label{algacsl:wut-empty-init}
  }
  \ForEach{message $q$ that is active after $\exseq$}{$\accesses(q) := \emptyset$\tcp*{Initialize the sequences of accesses for messages}\label{algacsl:accesses-initialize}}
  \While(\tcp*[f]{While the wakeup tree is not empty...}){$\exists q \in \wut{\exseq}$\label{algacsl:while-WuT}}{
    $\keyword{let}\; p = \min_{\prec} \{ q \in \wut{\exseq} \}$\label{algacsl:pick-WuT}\tcp*{... pick next branch, ...}
    $\wut{\exseq.p} := \subtreeafter{p}{\wut{\exseq}}$\label{algacsl:def-WuT}\tcp*{extract next wakeup tree)}
    $\keyword{let}\; \tmpaccesses = \explore(\exseq.p)$\label{algacsl:event-call-explore}\tcp*{... and make a recursive call}
    \If{$\nextev{E}{p}$ is the last event of message $p$}{\label{algacsl:collection-start}
      add $p$ to $\dom{\tmpaccesses}$ with $\tmpaccesses(p) = \set{\emptyseq}$ \label{algacsl:initialize-accesses}
    }
    \If{$\nextev{E}{p}$ performs a global access}{
      prepend $\nextev Ep$'s access to each sequence in $\tmpaccesses(p)$\label{algacsl:extend-tmpaccesses}
    }
    \ForEach{message $q$ that is active after $\exseq$}{$\accesses(q) \ \cup\!= \tmpaccesses(q)$}\label{algacsl:accumulate-accesses}
    add $p$ to the domain of $\done(\exseq)$\label{algacsl:doneset-add}\tcp*{Mark $p$ as explored}
    \If(\tcp*[f]{If $p$ starts}){$p$ starts after $E$}{
      $\done(\exseq)(p) := \accesses(p)$\tcp*{... store $p$'s accesses}\label{algacsl:donesleeptree-add}
    }
    \lElse(\tcp*[f]{... store $\nextev{E}{p}$'s access}){$\done(\exseq)(p) := \mbox{$\nextev{E}{p}$}$'s access}\label{algacsl:donesleeptree-add-ev}
    {remove all sequences of form $p.w$ from $\wut{\exseq}$}\tcp*{At end, cleanup}\label{algacsl:wut-deletebranch}
  }
  $\keyword{return}\; \accesses$\label{algacsl:exploration-end}
}
}
\caption{\EventDPOR}
\label{alg:eventdpor-access}
\end{algorithm}

The exploration phase (\crefrange{algacsl:exploration-begin}{algacsl:exploration-end})
is entered if exploration has not reached the end of a maximal execution sequence.
First, if $\wut{\exseq}$ only contains the empty sequence, then 
an arbitrary enabled message is entered into $\wut{\exseq}$ (\cref{algacsl:wut-empty-choose,algacsl:wut-empty-init}).
Thereafter, each sequence in $\wut{\exseq}$ is subject to recursive exploration.
We find the $\prec$-minimal child $p$ of the root of $\wut{\exseq}$ (\cref{algacsl:pick-WuT}),
and make the recursive call $\explore(\exseq.p)$ (\cref{algacsl:event-call-explore}).
Before the call, $\wut{\exseq.p}$ is initialized (\cref{algacsl:def-WuT}).
During the call $\explore(\exseq)$, information is also collected about the sequences of shared-variable accesses that can be performed by each message that is active after $\exseq$,
and subsequently stored in the mapping $\done$.
The information is collected in the variable $\accesses$, which
is initialized at \cref{algacsl:accesses-initialize}.
Each recursive call $\explore(\exseq.p)$ returns the sets of access sequences performed by messages that are active after $\exseq.p$ (\cref{algacsl:event-call-explore}).
After prepending the access performed by $\nextev{E}{p}$ to the sets of access sequences performed by $p$ (\cref{algacsl:extend-tmpaccesses}),
the sets  returned by $\explore(\exseq.p)$ are added to the corresponding sets in $\accesses$ (\cref{algacsl:accumulate-accesses}).
Finally, $p$ is added to the domain of $\done(E)$ (\cref{algacsl:doneset-add}).
If $p$ starts a message after $E$, then  $\done(\exseq)(p)$ is assigned the set of access sequences performed by $p$ (\cref{algacsl:donesleeptree-add}), otherwise only the access of
$\nextev{E}{p}$.
Thereafter, the subtree rooted at $p$ is removed from $\wut{\exseq}$ (\cref{algacsl:exploration-end}).
When all recursive calls of form $\explore(\exseq.p)$ have returned, the accumulated sets of access sequences are returned (\cref{algacsl:exploration-end}).

\EventDPOR  calls functions that are briefly described in the following paragraphs. 
More elaborate descriptions (with pseudocode) are in~\cref{sec:functions-appendix}.

\noindent\emph{$\reverserace({E},{\event},{\event'})$} is given a race $\event \revrace{\exseq} \event'$
in the execution $\exseq$ (\cref{algacsl:race-loop}),
and returns a set of executions that reverse the race in the sense that they 
perform the second event $e'$  of the race without performing the first one, and (except for $e'$) only contain events that are not affected by the race.
More precisely, it returns a set of pairs of form $\pair{E'}{u.e'}$, such that
\begin{inparaenum}[(i)]
\item
$E'.u$ is a maximal happens-before prefix of $E$ such that $E'.u.e'$ is an execution, and
\item $\dom{E'}$ is a maximal subset of $\dom{E'.u}$ such that $E' \leq E$.
\end{inparaenum}
An illustration of the $\reverserace$ function was given for the race on \texttt{x} in the program of \cref{fig:example3}.

  \hasbeenremoved{
    As an illustration, consider the race on \texttt{x} in the program of \cref{fig:example3}.
Here, there is a unique (up to equivalence) maximal execution which reverses the race, which consists of all events that post messages, all events in messages
    $p_2$ and $q_2$, and the assignment to \texttt{d} by $p_1$. The read of \texttt{x} by~$q_2$ should be ordered last, since it corresponds to the racing event $e'$.
    Message $q_1$ is removed by the rule at \cref{algl:revrace-rule-must}, whereby also the second of event of $p_1$ is removed, since it reads from the first event in $q_1$.
  }
  
\noindent\emph{$\insertwus{v}{E'}{\emptyseq}$} inserts the wakeup sequence $v$ into the wakeup tree $\wut{E'}$. If there is already some sequence $u$ in $\wut{E'}$ such that $u \mtprefixafter{E'} v$ or
  $v \mtprefixafter{E'} u$, then the insertion leaves $\wut{E'}$ unaffected. Otherwise
  $\insertwus{v}{E'}{\emptyseq}$ attempts to find the $\prec$-minimal non-leaf sequence $u$ in $\wut{E'}$ with $u \mtprefixafter{E'} v$, and
  insert a new leaf of form $u.v'$ into $\wut{E'}$, such that $v \mtprefixafter{E'} u.v'$, which is ordered after all existing descendants of $u$ in $\wut{E'}$.
  The function finds such a $u$ by descending into $\wut{E'}$ one event at a time; from each node $u'$ it finds a next node $u'.p$ as the $\prec$-minimal child with 
  $u'.p \mtprefixafter{E'} v$. If, during this search, the message $p$ starts after $E'.u'$ it may happen that the wakeup tree does not contain enough subsequent events to determine whether
  $u'.p \mtprefixafter{E'} v$; in this case the sequence $v$ is ``parked'' at the node $u'.p$: the insertion of $v$ will be resumed when $E'.u'.p$ is extended to a maximal execution (at \cref{algacsl:insert-pendingwu} with $E'$ being $E'.u'$).

\noindent\emph{$\insertpendingwu{v}{E'}$} inserts a wakeup sequence $v$, which is parked after a prefix $E'$ of the execution $E$, into an appropriate wakeup tree.
The function first decomposes $E'$ as $E''.p$, and checks whether
$p \in \winits{\exseq''}{v}$, using 
information about the accesses of $p$ that can be found in $E$.
If the check succeeds, then insertion proceeds recursively
one step further in the execution $E$, otherwise $v$ conflicts with $p$ and should be inserted into the wakeup tree after $E''$.

\noindent\emph{Checking for Redundancy}
Tests of form $p \in \winits{\exseq}{w}$ for a message $p$ and an execution $\exseq.w$
appear at
\cref{algacsl:event-test} and in the functions $\insertwusname$ and $\insertpendingwuname$.
If $p$ does not start after $E$, then the check can be straightforwardly performed using
sleep sets~\cite{Godefroid:thesis}.
If $p$ starts after $E$, then checking whether $p \in \winits{\exseq}{w}$ is NP-hard in the general case (see~\cref{thm-lowerbound}).
To avoid expensive calls to a decision procedure,
\EventDPOR employs a sequence of incomplete checks, starting with simple ones, and proceeding with a next test only if the preceding was not conclusive.
These tests are in order:
\begin{inparaenum}[1)]
  \item
  If $p$ is the first message (if any) on its handler in $w$, then $p \in \winits{\exseq}{w}$ is trivially true.
\item
  If the happens-before relation precludes $p$ from executing first on its handler, then $p \in \winits{\exseq}{w}$ is false; checking this may require
  $w$ to be extended so that $p$ (and possibly other messages) are executed to completion.
\item An attempt is made to construct an actual execution in which $p$ is the first message on its handler, which respects the happens-before ordering.
\item If all previous tests were inconclusive, a decision procedure is invoked as a final step.
\end{inparaenum}

\hasbeenremoved{\subsection{Reversing Races}
\label{sec:race-reversals}
\begin{algorithm}[t]
\SetAlFnt{\small\sf}
\BlankLine
\Fn{$\reverserace({E},{\event},{\event'})$}{
  \keyword{let} $E''$ be the subsequence of $E$ consisting of the events $e'''$ with $\event \nhappbf{\hb}{E} \event'''$\;\label{algl:revrace-short-init-notdep}
  \keyword{let} $S$ be a set of pairs of form $\pair{E'}{u.e'}$ which \\
  \quad for each maximal subset of $E''\cup\set{e'}$ that can be linearized \\
  \quad \quad \qquad \qquad to an execution of form $E'.u.e'$ while respecting $\happbf{hb}{E''}$ \\
  \quad contains one such execution $E'.u.e'$ where $E'$ is a maximal prefix of $E$.

  $\Return(S)$\;
}
\caption{Reversal of a Race.}
\label{alg:reverserace-short}
\end{algorithm}

A key procedure of \EventDPOR is $\reverserace$ which constructs new executions by analyzing and reversing a race in an explored execution.
This procedure is given a race $\event \revrace{\exseq} \event'$ in the currently explored execution $\exseq$ (at \cref{algacsl:race-loop} of \cref{alg:eventdpor-access}),
and returns a set of maximal executions that reverse the race in the sense that they 
perform the second event $e'$  of the race without performing the first one, and (except for $e'$) only contain events that are not affected by the race.
The procedure $\reverserace({E},{\event},{\event'})$, shown in \cref{alg:reverserace-short}, returns, 
for each maximal subset of $E''\cup\set{e'}$ that can be linearized to an execution that ends in $e'$, one such execution $E'.u.e'$ in which
$E'$ is a maximally long prefix of $E$; each execution $E'.u.e'$ is returned as a pair of form $\pair{E'}{u.e'}$.
}

\hasbeenremoved{
The \EventDPOR algorithm maintains the following properties
\begin{itemize}
   \item[P1:]
  whenever the exploration of some subtree rooted at some execution $\exseq.p \in \exseqs$ has completed,
  then for each maximal execution of form $\exseq.p.w$, the algorithm has explored an execution equivalent to $E.p.w$.
   \item[P2:]
whenever the exploration tree $\exseqs$ contains a node of form $\exseq.p$, then the algorithm will not add an execution of form $\exseq.w$ which is contained in
some execution of form $\exseq.p.w'$ for some $w'$, i.e., for which $p \in \winits{\exseq}{w}$.
\end{itemize}
Property P1 is the basic property which guarantees correctness. It is also used to avoid redundant exploration by enforcing P2. Such a check for redundancy is performed
before inserting a new wakeup sequence (\cref{algacsl:event-test}),
and also inside the procedure for wakeup tree insertion (\cref{alg:wakeuptree}).
}

\section{Correctness and Optimality}
\label{sec:correctness}
A program is defined to be \emph{non-branching} if each
message, which executes on the same handler as  some other message, performs the same sequence of accesses (reads or writes) to shared
variables during its execution, regardless of how its execution is interleaved with other threads and messages.
Note that the ``non-branching'' restriction does not apply to non-handler threads nor to messages that are the only ones executing on their handler.

The following theorems state that \EventDPOR is 
\emph{correct} (explores at least one execution in each equivalence class)
for \emph{all} event-driven programs
and \emph{optimal} (explores exactly one execution in each equivalence class)
for non-branching programs.
Proofs can be found in \cref{sec:correctness-proof}.

\hasbeenremoved{
Let us now
assume a particular completed execution of \EventDPOR. This execution 
consists of a number of terminated calls to $\explore(E)$ for some values 
of the parameters $E$ and $\WuT$. Let $\exseqs$ denote the set of execution 
sequences $E$ that have been explored in some call $\explore(E)$. Define 
the ordering $\treeorder$ on $\exseqs$ by letting $E \treeorder E'$ if 
$\explore(E)$ returned before $\explore(E')$. Intuitively, if one 
were to draw an ordered tree that shows how the exploration has proceeded, then 
$\exseqs$ would be the set of nodes in the tree, and $\treeorder$ would be the 
post-order between nodes in that tree. The correctness and optimality of 
\cref{alg:eventdpor-access} are stated in the following theorems.
}

\begin{theorem}[Correctness]
\label{thm:correctness}
Whenever the call to $\explore(\emptyseq)$ returns during \cref{alg:eventdpor-access},
then for all maximal execution sequences $E$, the algorithm has explored
some execution sequence in~$\eqclass{E}$.
\end{theorem}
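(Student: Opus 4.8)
The statement is a coverage property, and I would prove it — as is standard for DPOR soundness arguments — by strengthening it to a claim about \emph{every} recursive call and doing a well-founded induction. Concretely, I would prove the invariant: \emph{for every call $\explore(E)$ that occurs during the run and returns, and for every $w$ such that $E.w$ is a maximal execution sequence, the algorithm explores some sequence in $\eqclass{E.w}$ before that call returns}; \cref{thm:correctness} is the instance $E=\emptyseq$ (and every sub-call returns because the top call does). Since the program terminates, I would induct on $\mu(E):=\max\{\,|w| \mid E.w \text{ maximal}\,\}$: every recursive call made by $\explore(E)$ is $\explore(E.p)$ for a one-event extension $E.p$ of $E$, so $\mu(E.p)\le\mu(E)-1$ and the induction is well founded. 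The base case $\mu(E)=0$ (so $E$ is maximal) is immediate, since $\explore(E)$ then runs its race-detection branch on $E$ itself, meaning $E\in\eqclass{E}$ has been explored.

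\textbf{Inductive step, reduced to a key claim.} Assume $E$ is not maximal, fix a maximal $E.w$ with $w\neq\emptyseq$, and let $p_0$ be the message performing the first event of $w$. Viewed as message sequences $w=p_0.(w\remove p_0)$, so $w\mtprefixafter{E}p_0.(w\remove p_0)$ holds trivially and hence $p_0\in\winits{E}{w}$ (\cref{def:winits}); in particular $\winits{E}{w}\cap\enabled{E}\neq\emptyset$. The step reduces to the \emph{Claim}: \emph{the \textbf{while} loop of $\explore(E)$ makes a recursive call $\explore(E.p)$ for some $p\in\winits{E}{w}$.} Granting the Claim, fix such a $p$ together with a witness $w'$ ($\valid{E}{p.w'}$ and $E.w\mtprefix E.p.w'$). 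A short argument shows $E.p.w'\mtequiv E.w$: a $\happbf{\hb}{E.p.w'}$-minimal event $e$ of $\dom{E.p.w'}\setminus\dom{E.w}$ would have all its $\happbf{\hb}{E.p.w'}$-predecessors in $\dom{E.w}$ (in particular $e$'s message-predecessor, and, if $e$ starts a message, the post that spawned $\procof e$), so $\procof e$ — about to perform a read, a write, a post, or a triggered message extraction — would be enabled after $E.w$, contradicting maximality; hence $\dom{E.p.w'}=\dom{E.w}$, and then \cref{def:hb-prefix}(ii) forces equal happens-before relations, i.e.\ $E.p.w'\mtequiv E.w$ (\cref{def:hb-equiv}) and $E.p.w'$ is maximal. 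Since $\mu(E.p)<\mu(E)$ and $\explore(E.p)$ returns before $\explore(E)$, the induction hypothesis applied to $E.p.w'$ yields an explored sequence in $\eqclass{E.p.w'}=\eqclass{E.w}$.

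\textbf{Proving the Claim — the main obstacle.} Essentially all the work and all the event-driven subtleties sit here, and I would run the race-reversal ``progress'' argument of \OptimalDPOR, adapted. Suppose for contradiction that no $p\in\winits{E}{w}$ is ever a recursive-call argument at $E$; then $p_0$ in particular is never installed as a child of $\wut{E}$. Consider the first child $p_1$ processed at $E$ (the arbitrary enabled message of \cref{algacsl:wut-empty-choose}, or one installed by an earlier reversal); $p_1\neq p_0$. Using the induction hypothesis on $\explore(E.p_1)$, I would take an explored maximal execution equivalent to an extension $E.p_1.u$ of $E.p_1$ that agrees with $w$ on a $\mtprefix$-maximal prefix; $E.p_1.u$ then contains a race (\cref{def:races}) whose second event is the first $w$-event that could not be scheduled, and $\reverserace$ on that race returns a pair $\tuple{E',v}$ whose wakeup sequence $v$ — once handled by \insertwusname, possibly after being parked in $\pendingwusname$ and re-inserted by \insertpendingwuname when a branch below is extended to a maximal execution — installs at $\wut{E}$ a child realizing a strictly longer prefix of $w$ (unless that child is already in $\winits{E}{w}$, which contradicts the assumption outright). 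Since the ``length of the longest prefix of $w$ realized by a processed child'' is bounded by $|w|$ and strictly increases, after finitely many iterations a child in $\winits{E}{w}$ is installed, and as the loop runs until $\wut{E}$ is emptied it is eventually processed — the contradiction.

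\textbf{Auxiliary lemmas and the hard part.} The Claim rests on three lemmas I would establish first: (a) soundness of $\reverserace$ — for each race it returns pairs $\tuple{E',v}$ with $E'$ a maximal prefix of the current execution and $E'.v$ a genuine execution reversing the race; (b) that \insertwusname and \insertpendingwuname together eventually deposit $v$ at the correct node of $\wut{E'}$ (a descendant branch $u.v'$ with $v\mtprefixafter{E'}u.v'$), even though in the event-driven model a reversal may have to reorder whole messages on handlers not involved in the race (the phenomenon of \cref{fig:example3}), which is exactly why the parking mechanism exists; and (c) that the redundancy check at \cref{algacsl:event-test} (and the analogous checks inside those functions) never discards a wakeup sequence whose continuation is not already covered — for which it suffices that the cheap incomplete tests are sound and the complete decision procedure (\cref{thm-lowerbound}) is invoked as a fallback, making the check overall exact. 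I expect (b), and the interaction of (a), (b) and the progress measure in the event-driven setting — where $\winits{E}{w}$ may be witnessed only by a $w'$ that completes some message rather than by $w\remove p$ — to be the genuinely difficult part.
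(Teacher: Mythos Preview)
Your overall plan—strengthen to every call $\explore(E)$ and induct—matches the paper's, and your progress measure (length of the longest prefix of $w$ that is ``realized'') is exactly the quantity the paper drives a contradiction on. But the induction measure is where the argument breaks. You induct on $\mu(E)=\max\{|w|:E.w\text{ maximal}\}$, which gives the hypothesis only for proper extensions of $E$. The paper instead inducts on the post-order $\treeorder$ of the exploration tree (the order in which calls to $\explore$ return), which also furnishes the hypothesis for \emph{already-explored sibling branches}: nodes $E''.q$ with $E''$ a strict prefix of $E$ and $q\in\dom{\done(E'')}$. That extra strength is used essentially, and your Claim argument needs it but does not have it.

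Concretely: when a race in an explored execution $E.p_1.z'$ is reversed, $\reverserace$ returns $\tuple{E',v}$ with $E'$ a prefix of $E.p_1.z'$, and in the event-driven setting $E'$ can be a \emph{strict prefix of $E$} (reversing a race may reorder messages on handlers far from the racing events, pushing the branch point upward; cf.\ \cref{fig:example3}). The sequence $v$ is then tested for redundancy against $\done(E'')$ for all $E''\le E'$ (\cref{algacsl:event-test}) and may be discarded because some $q\in\dom{\done(E'')}$ satisfies $q\in\winits{E''}{\ldots}$. To continue you would need the theorem for $E''.q$—but $\mu(E''.q)$ need not be smaller than $\mu(E)$ (a strict prefix $E''$ has $\mu(E'')\ge\mu(E)+|E|-|E''|$, so $\mu(E''.q)\le\mu(E'')-1$ can still be $\ge\mu(E)$), and your hypothesis is silent. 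Your lemma~(c) asserts the redundancy test is exact, but exactness only tells you the blocking branch \emph{can} cover $v$, not that its subtree \emph{did} explore a representative of $\eqclass{E.w}$. The paper closes this gap by (i)~using $\treeorder$, so the hypothesis applies to every such $E''.q$, and (ii)~selecting the critical message $q$ \emph{globally} over all prefixes $E'\le E$ and all $p\in\dom{\done(E')}$, taking the one whose matching prefix $w_q$ of $w$ is longest (ties broken $\treeorder$-minimally); that global choice is precisely what forces the redundancy check not to block and the insertion to descend all the way to a new child of $E$, yielding the contradiction you were after.
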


\hasbeenremoved{
Since the initial call to the algorithm, $\explore(\emptyseq)$, starts with
the empty sequence as argument, \cref{thm:correctness}
implies that for all maximal execution sequences $E$ the algorithm
explores some execution sequence $E'$ which is in $\eqclass{E}$.
  Note also that a sequence of form $E.w$ need not have been explored inside
the call $\explore(E)$, but can have been explored in some earlier call,
of form $\explore(E'.p)$ for some prefix $E'$ of $E$.
}

\begin{theorem}[Optimality]
\label{thm:optimality}
When applied to a non-branching program,
\cref{alg:eventdpor-access} never explores two maximal execution
sequences which are equivalent.
\end{theorem}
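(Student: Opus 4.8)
\textbf{Proof proposal for \cref{thm:optimality} (Optimality).}
The plan is to show that for non-branching programs every maximal execution explored by \cref{alg:eventdpor-access} has a distinct trace, by establishing two invariants on the exploration tree $\exseqs$ and the wakeup trees: a \emph{coverage} invariant and a \emph{non-redundancy} invariant. The coverage invariant (essentially Property~P1 from the removed text) states that whenever a call $\explore(E.p)$ has returned, the algorithm has explored, for every maximal extension $E.p.w$, some execution equivalent to it. The non-redundancy invariant (Property~P2) states that whenever $E.p \in \exseqs$, the algorithm never subsequently inserts into a wakeup tree (nor explores) a sequence $E.w$ with $p \in \winits{E}{w}$; in particular the two recursive subtrees rooted at $E.p$ and $E.p'$ for distinct children $p \neq p'$ of $E$ produce disjoint sets of traces. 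Optimality then follows: if two explored maximal executions $F_1 \simeq F_2$ were distinct, take the longest common prefix $E$ and the first points $E.p_1 \leq F_1$, $E.p_2 \leq F_2$ where they diverge; since $F_1 \simeq F_2$ and $F_1$ restricted appropriately is a happens-before prefix, one shows $p_2 \in \winits{E}{F_1'}$ for the relevant suffix $F_1'$, contradicting P2 applied at whichever of $E.p_1$, $E.p_2$ was explored second.

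The key steps, in order, would be: (1)~Show the wakeup-tree machinery is sound, i.e., $\insertwusname$, $\insertpendingwuname$, and the redundancy test at \cref{algacsl:event-test} never discard a wakeup sequence $v$ at $E'$ unless $\wut{E'}$ (or an already-explored execution through $E'$) already contains a sequence $u$ with $u \sqsubseteq_{[E']} v$ or $v \sqsubseteq_{[E']} u$; this uses the definition of $\winits{}{}$ and the fact that $\done$ records exactly the access information needed (exact next-access for started messages, full access-sequence set for not-yet-started ones). (2)~Use non-branchingness crucially here: for a message $p$ that starts after $E'$ but has siblings on its handler, $\done(E')(p)$ is a \emph{singleton} set of access sequences, so the NP-hard ambiguity in the general $\winits{}{}$ check collapses and the incomplete tests in \cref{sec:checkwi} become \emph{complete} — this is what makes exact detection of redundancy possible. (3)~Prove P1 by induction on the (reverse) order $\treeorder$ in which $\explore$ calls return, using $\reverserace$'s specification: every race in an explored maximal $E$ is reversed by some returned pair $\pair{E'}{u.e'}$, and any maximal execution differing from $E$ differs by a chain of such race reversals (this is the standard Optimal-DPOR correctness argument lifted to the message setting). (4)~Prove P2 by induction on the exploration, using step~(1): once $E.p$ is in $\exseqs$, the test at \cref{algacsl:event-test} (together with the $\sqsubseteq$-checks inside $\insertwusname$) blocks any later sequence $E.w$ with $p \in \winits{E}{w}$ from being inserted after $E$ or at a prefix of $E$. (5)~Assemble the contradiction argument above for optimality.

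The main obstacle I expect is step~(2) combined with step~(1): arguing that for non-branching programs the \emph{sequence} of incomplete redundancy checks in \cref{sec:checkwi} is actually complete, so that no genuinely-redundant wakeup sequence ever slips through and gets explored. One must verify that when $p$ starts after $E'$ and shares its handler, knowing the unique access-sequence of $p$ (from $\done$) plus the happens-before constraints is enough to decide $p \in \winits{E'}{w}$ exactly — i.e., that test~2 (happens-before precludes $p$ first) and test~3 (explicit construction) together never both fail when the answer is in fact "no". A subtlety is that $w$ may need to be extended so that $p$ and other messages run to completion before the check is conclusive, which is exactly why \EventDPOR parks sequences via $\pendingwusname$ and resumes them at \cref{algacsl:insert-pendingwu} once a branch reaches a maximal execution; the proof must track that this deferred resolution always has, at resumption time, the complete access information it needs. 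A secondary difficulty is handling the interaction of $\reverserace$ returning \emph{several} reorderings for a single race (because several maximal happens-before prefixes may exist) and showing these do not themselves create duplicate traces — this follows because each is separated by the P2 mechanism at the branch point $E'$, but it needs to be stated carefully.
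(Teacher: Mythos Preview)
Your proposal is correct and follows essentially the same approach as the paper: establish property P2 (which the paper simply calls P), derive optimality via the longest-common-prefix contradiction, and use non-branchingness to argue that the weak-initial checks performed at \cref{algacsl:event-test} and inside \insertwusname/\insertpendingwuname are \emph{exact} rather than merely sound. One remark: your step~(3), proving P1 (coverage), is superfluous for optimality---that is precisely \cref{thm:correctness}, proved separately, and the paper's optimality argument does not invoke it at all; the contradiction from two equivalent maximal explorations needs only P2 and the definition of $\winits{}{}$, so you can drop P1 from this proof.
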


\hasbeenremoved{
  \Cref{thm:optimality} ensures
that all maximal execution sequences reached are non-redundant.
}


\section{Implementation} \label{sec:impl}
\EventDPOR was implemented on top of \Nidhugg.
\Nidhugg~\cite{tacas15:tso} is a state-of-the-art stateless model checker for
C/C++ programs with Pthreads, which works at the level of the LLVM Intermediate
Representation.
\Nidhugg comes with a selection of DPOR algorithms. One of them is
\OptimalDPOR, which we have used as a basis for \EventDPOR's implementation.

We have extended the data structures of \Nidhugg with the information
needed by \EventDPOR.
For instance, nodes in wakeup trees contain new information, such as the set
of parked wakeup sequences, and events in executions include the information in
$\tmpaccesses$, used to compute the \done set
as shown in \crefrange{algacsl:initialize-accesses}{algacsl:donesleeptree-add}
of~\cref{alg:eventdpor-access}.
The relation $\happbf{\hb}{E}$ is represented by a
vector clock per event, containing the set of preceding events.
When reversing races (in $\reverserace$) and checking for redundancy
(\cref{algacsl:event-test} of~\cref{alg:eventdpor-access}),
the relation $\happbf{\hb}{E}$ is extended by a saturation operation
(\cref{def:weakall} in \cref{sec:functions-appendix}) that captures ordering constrained induced by serialized message execution.

Concerning race reversal, instead of reversing multiple races between messages
executed on the same handler, our implementation detects and reverses only
the race induced by the first conflict, since other races cannot be reversed,
as explained using the example in \cref{fig:example1new}.
Moreover, in cases where $\reverserace$ would return several maximal executions that reverse a race, our implementation instead returns their union, even though it may not form an execution (e.g., since it may contain several incomplete executed messages on a handler). From this union, events will be removed adaptively during wakeup tree insertion to extract only those maximal executions that generate new leaves in a wakeup tree.

\hasbeenremoved{Moreover, we postpone the deletion of incomplete messages from $\exseq''$, described at
\crefrange{algl:converge-begin}{algl:converge-end} in \cref{alg:reverserace},
to the point before adding a new branch in the wakeup tree.
This avoids unnecessary creation of wakeup sequences that will not create any
new leaf in the wakeup tree, since the deletion of incomplete messages
can be guided by the redundancy checks performed during the insertion.
Finally, instead of computing $\happbf{sc}{E.w'}$ for each test of form
$p \in \winits{E}{w}$ at \cref{algacsl:event-test}
of \cref{alg:eventdpor-access}, we precompute $\happbf{sc}{E.w}$
before all the tests, which accelerates the redundancy check.}

\section{Evaluation} \label{sec:eval}
In this section, we evaluate the performance of our implementation and put it
into context.
Since currently there is no other SMC tool for event-driven programs to
compare against,\footnote{All our attempts to use \href{https://github.com/eth-sri/R4}{$R^4$} failed miserably; the tool has not been updated since~2016.}
we have created an API, in the form of a C header file, that
implements event handlers as pthread mutexes (locks) and simulates messages as
threads that wait for their event handler to be free. This API allows us to
use plain C/pthread programs to compare \EventDPOR with the \OptimalDPOR
algorithm implemented in \Nidhugg as baseline, but also with the
\emph{Lock-Aware Partial Order Reduction} (\emph{LAPOR})
algorithm~\citet{LAPOR@OOPSLA-19}, implemented in \GenMC. The \LAPOR algorithm
is often analogous to \EventDPOR w.r.t. the amount of reduction that it can
achieve when event handlers are modeled as global locks.
We also include in our comparison the baseline DPOR algorithm of \GenMC that
tracks the modification order (\genmcmo{\small}) of shared variables.
%
For \Nidhugg, we used its \texttt{master} branch at the end of~2022;
%
for \GenMC, we used version 0.6.1.%
\footnote{\GenMC v0.6.1 (released July 2021) warns that \LAPOR usage with
\genmcmo{\footnotesize} is experimental; in fact, \LAPOR support has been
dropped in more recent \GenMC versions.}
%
We have run all benchmarks on a Ryzen 5950X desktop running Arch Linux.

\newcommand\SZ{\scriptsize}
\newcommand\error{\textcolor{red}{\SZ \textdagger}\xspace}
\newcommand\timeout{\textcolor{blue}{\SZ \clock}\xspace}
\newcommand\notavail{\textcolor{gray}{\SZ n/a}\xspace}
\newcommand\bug{\SZ\color{red} bug }

\pgfplotstableset{
  columns/benchmark/.style={string type,
    string replace*={_}{\_},
  },
  columns/tool/.style={string type},
  columns/lapormo_traces/.style={column name=\multicolumn{1}{r}{\lapormo{\SZ}}},
  columns/lapormo_time/.style={column name=\multicolumn{1}{r}{\lapormo{\SZ}}},
  columns/genmcmo_traces/.style={column name=\multicolumn{1}{r}{\genmcmo{\SZ}}},
  columns/genmcmo_time/.style={column name=\multicolumn{1}{r}{\genmcmo{\SZ}}},
  columns/optimal_traces/.style={column name=\multicolumn{1}{r}{\opt{\SZ}}},
  columns/optimal_time/.style={column name=\multicolumn{1}{r}{\opt{\SZ}}},
  columns/event_traces/.style={column name=\multicolumn{1}{r}{\evt{\SZ}}},
  columns/event_time/.style={column name=\multicolumn{1}{r}{\evt{\SZ}}},
  every head row/.style={before row={%
       \toprule
      & \multicolumn{4}{c}{Executions (Traces+Blocked)} & \multicolumn{4}{c}{Time (secs)}\\
      \cmidrule(r){2-5}\cmidrule(r){6-9}
      & \multicolumn{2}{c}{\GenMC} & \multicolumn{2}{c}{\Nidhugg} &
      \multicolumn{2}{c}{\GenMC} & \multicolumn{2}{c}{\Nidhugg} \\
      \cmidrule(r){2-3}\cmidrule(r){4-5}\cmidrule(r){6-7}\cmidrule(r){8-9}
    },after row=\midrule},
  every last row/.style={after row=\bottomrule},
  column type={r},
  create on use/Benchmark/.style={
    create col/assign/.code={%
      \getthisrow{benchmark}\benchmark
      \getthisrow{n}\n
      \edef\entry{\benchmark(\n)}%
      \pgfkeyslet{/pgfplots/table/create col/next content}\entry
    },
  },
  columns/Benchmark/.style={
    column type={l},
  },
  columns={Benchmark,
    genmcmo_traces,lapormo_traces,optimal_traces,event_traces,
    genmcmo_time,lapormo_time,optimal_time,event_time},
  fixed,
  string type, 
}

We will compare implementations of different DPOR algorithms based on the
number of executions that they explore, as well as the time that this takes.
For some programs, \LAPOR also examines a fair amount
of \emph{blocked} executions (i.e., executions that cannot
be serialized and need to be aborted), which naturally affects its time
performance. In \cref{tab:eval}, we show the number of executions explored by an
entry of the form $T$+$B$, where $T$ is the number of complete traces and $B$
is the number of blocked executions. (We omit the $B$ part when
it is zero.)

\begin{table}[t]
  \caption{Performance of different DPOR algorithm implementations.}
  \label{tab:eval}
  \centering\SZ
  \setlength{\tabcolsep}{2pt}
  \pgfplotstablevertcat{\output}{results/laban/posters.txt}
  \pgfplotstablevertcat{\output}{results/laban/buyers.txt}
  \pgfplotstablevertcat{\output}{results/laban/ping_pong.txt}
  \pgfplotstablevertcat{\output}{results/laban/consensus.txt}
  \pgfplotstablevertcat{\output}{results/laban/prolific.txt}
  \pgfplotstablevertcat{\output}{results/laban/sparse-mat.txt}
  \pgfplotstablevertcat{\output}{results/laban/plb.txt}
  \pgfplotstabletypeset[
    every row no 3/.style={before row=\midrule},
    every row no 6/.style={before row=\midrule},
    every row no 9/.style={before row=\midrule},
    every row no 12/.style={before row=\midrule},
    every row no 15/.style={before row=\midrule},
    every row no 18/.style={before row=\midrule},
  ]{\output}
\end{table}


All the benchmark programs we use are parametric, typically on the number of
threads used (and thus messages posted); their parameters are shown inside
parentheses.
In the first program (\bench{posters}), each thread posts to a single event
handler two messages containing stores to some atomic global variable, and
then the value of this variable is checked by an assertion. This simple
program allows us to establish the baseline speed of all implementations. We
can see that \GenMC~\genmcmo{\small} is the fastest here. The reason is that 
it does not perform any checks whether the explored executions are sequentially 
consistenct, which allows it to be five times faster than \LAPOR, and seven 
to nine times faster than \Nidhugg's algorithm implementations. We can also
notice that \EventDPOR incurs a small but noticeable overhead over
\OptimalDPOR for the extra machinery that its implementation requires.

The next two benchmarks were taken from a paper by Kragl et al.~\citet{Kragl20}.
In \bench{buyers}, $n$ ``buyer'' threads coordinate the purchase of an item
from a ``seller'' as follows: one buyer requests a quote for the item from the
seller, then the buyers coordinate their individual contribution, and finally
if the contributions are enough to buy the item, the order is placed.
In \bench{ping-pong}, the ``pong'' handler thread receives messages with
increasing numbers from the ``ping'' thread, which are then acknowledged back
to the ``ping'' event handler.

Looking at~\cref{tab:eval}, we notice that, in both \bench{buyers} and
\bench{ping-pong}, all algorithms explore the same number of traces, but
\LAPOR also explores a significant number of executions that cannot be
serialized and need to be aborted. In fact, for both benchmarks, the aborted
executions significantly outnumber the traces explored.  This affects
negatively the time that \LAPOR takes, and \GenMC \lapormo{\small} becomes the
slowest implementation.  In contrast, \EventDPOR does not suffer from this
problem and shows similar scalability as baseline \GenMC and \OptimalDPOR.

With the four remaining benchmarks, we evaluate all implementations in programs
where algorithms tailored to event-driven programming, either natively
(\EventDPOR) or which are lock-aware (when handlers are implemented as locks),
have an advantage.
The first program (\bench{consensus}), again from the paper by Kragl et
al.~\citet{Kragl20}, is a simple \emph{broadcast consensus} protocol for $n$
nodes to agree on a common value. For each node~$i$, two threads are created:
one thread executes a \texttt{broadcast} method that sends the value of
node~$i$ to every other node, and the other thread is an event handler that
executes a \texttt{collect} method which receives~$n$ values and stores the
maximum as its decision. Since every node receives the values of all other
nodes, after the protocol finishes, all nodes have decided on the same value.
%
%
The next program (\bench{prolific}) is synthetic: $n$ threads send $n$
messages with an increasing number of stores to and loads from an atomic
global variable to one event handler.
The \bench{sparse-mat} program computes the number of non-zero elements of a
sparse matrix of dimension $m \times n$, by dividing the work into $n$ tasks
sent as messages to different handlers, which compute and join their results.
The last benchmark (\bench{plb}) is taken from a paper by Jhala and
Majumdar~\citet{popl07:JhalaM}. A fixed sequence of task requests is received
by the main thread. Upon receiving a task, the main thread allocates a space
in memory and posts a message with the pointer to the allocated memory that
will be served by a thread in the future.

Refer again to~\cref{tab:eval}.
In \bench{consensus}, all algorithms start with the same number of traces, but
\LAPOR and \EventDPOR need to explore fewer and fewer traces than the other
two algorithms, as the number of nodes (and threads) increases. Here too,
\LAPOR explores a significant number of executions that need to be aborted,
which hurts its time performance. On the other hand, \EventDPOR's handling
of events is optimal here.
The \bench{prolific} program shows a case where algorithms not tailored to
events (or locks) explore $(n-1)!$ traces, while \LAPOR and \EventDPOR explore
only $2^n-2$ consistent executions, when running the benchmark with
parameter $n$. It can also be noted that \EventDPOR scales \emph{much} better
than \LAPOR here in terms of time, due to the extra work that \LAPOR needs to
perform in order to check consistency of executions (and abort some of them).
The \bench{sparse-mat} program shows another case where algorithms that are
not tailored to events explore a large number of executions unnecessarily
(\timeout denotes timeout). This program also shows that \EventDPOR beats
\LAPOR time-wise even when \LAPOR does not explore executions that need to be
aborted.
Finally, \bench{plb} shows a case on which \EventDPOR and \LAPOR really
shine. These algorithms need to explore only one trace, independently of the
size of the matrices and messages exchanged, while DPOR algorithms not
tailored to event-driven programs explore a number of executions which
increases exponentially and fast.

We remark that, in all benchmarks, the inexpensive checks for redundancy were
sufficient, and \EventDPOR explored the optimal number of traces.
Results from an extended set of benchmarks appear in~\cref{app:eval-complete}.


\section{Concluding Remarks}
\label{sec:conclusions}

In this paper, we presented a novel SMC algorithm, \EventDPOR, tailored to the
characteristics of event-driven multi-threaded programs running under the SC
semantics. The algorithm was proven correct and optimal for event-driven
programs in which the variable accesses of events do not depend on how their
execution is interleaved with other threads.

We have implemented \EventDPOR in the \Nidhugg tool, and we will open-source
our implementation.
With a wide range of event-driven programs, we have shown that \EventDPOR
incurs only a moderate constant overhead over its baseline implementation
(\OptimalDPOR), it is exponentially faster than existing state-of-the-art SMC
algorithms in time and number of traces examined on programs where events'
actions do not conflict, and does not suffer from performance degradation
caused by having to examine
non-serializable executions.
%

\EventDPOR assumes that handlers can process their events in arbitrary order.
Directions for future work include to retarget \EventDPOR for event-driven
programs with other policies (e.g., FIFO), and for specific event-driven
execution models.

\section{Reproducible Artifact}
An anonymous artifact containing the benchmarks and all the tools used in the evaluation, including our Nidhugg with Event DPOR, is available at
\url{https://doi.org/10.5281/zenodo.7929004}.

\bibliographystyle{splncs04}
\bibliography{./bibdatabase.bib}

\appendix
\clearpage
\section{Detailed Descriptions of Auxiliary Functions}
\label{sec:functions-appendix}
In this section, we describe in detail the functions that are called by \EventDPOR, and were briefly described at the end of \cref{sec:algo:access-sets}
Some of these functions extend the happens-before relation $\happbf{\hb}{E}$ on an execution with
additional ordering constraints that are enforced  in the event-driven execution model, stemming from the fact that
each handler must execute its messages in some sequential order. The following \emph{saturation operation} adds such additional orderings imposed by
any ordering relation on events.
\begin{definition}[Saturation]
  \label{def:weakall}
Let $E$ be a sequence of events, and $\happbf{}{E}$ be an irreflexive partial order on the events of $E$. 
We define
$\weaksatrel{\happbf{}{E}}$ as the smallest transitive relation $\happbf{\sat}{E}$ on the events of $E$ which includes
$\happbf{}{E}$ and satisfies the constraint that  whenever $e$ and $e'$ are events in different messages on the same handler,
and there is an event $e''$ in the same message as $e$ and an event $e'''$  in the same
  message as $e'$ with $e'' \happbf{\sat}{E} e'''$,
then $e \happbf{\sat}{E} e'$.
\qed
\end{definition}
In the above definition, note that it is not required that $e$ is distinct from $e''$, nor that $e'$ is distinct from $e'''$.

\subsection{Reversing Races}
\label{sec:race-reversals-appendix}
A key procedure of \EventDPOR is $\reverserace({E},{\event},{\event'})$ which constructs new executions by analyzing and reversing a race in an explored execution.
This procedure is given a race $\event \revrace{\exseq} \event'$ in the currently explored execution $\exseq$ (at \cref{algacsl:race-loop} of \cref{alg:eventdpor-access}),
and returns a set of maximal executions that reverse the race.
More precisely, it returns a set of pairs of form $\pair{E'}{u.e'}$, such that
\begin{inparaenum}[(i)]
\item
$E'.u$ is a maximal happens-before prefix of $E$ such that $E'.u.e'$ is an execution, and
\item $\dom{E'}$ is a maximal subset of $\dom{E'.u}$ such that $E' \leq E$.
\end{inparaenum}

The procedure $\reverserace({E},{\event},{\event'})$ is shown in \cref{alg:reverserace}. Let $E''$ be the set of events of $E$ that are not affected by the race (\cref{algl:revrace-init-notdep}):
this is the set of events $e'''$ with $\event \nhappbf{\hb}{E} \event'''$. If $E''$ can be reordered to form an execution, the code at
\crefrange{algl:revrace-init-ordering}{algl:converge-end} will have no effect;
$\reverserace$ will terminate and returns its linearization.
However, there are situations in which $E''$ cannot be reordered into an execution.
For instance, $E''$ may contain two incomplete messages on the same handler because the remaining parts of these messages happen-after $e$ in $E$.
Since an execution may contain at most one incomplete message per handler,
$\reverserace$ then performs a sequence of message removals and reorderings to produce a set of maximal wakeup sequences.
  The procedure employs the saturation operation of \cref{def:weakall} to constrain the ordering between messages on the same handler. 
  The procedure maintains
  an ordering relation $\happbf{sc}{E''}$ on $E''$, initialized to $\weaksatrel{\happbf{hb}{E''}}$ (\cref{algl:revrace-init-ordering}).
It thereafter performs a sequence of steps in which messages are removed from $E''$ and/or the ordering relation $\happbf{sc}{E''}$ is extended. Some steps may be resolved nondeterministically: in such cases the procedure pursues all possible alternatives, potentially resulting in several returned sequences. The steps of \cref{alg:reverserace} are the following.
\begin{algorithm}[!htp]
\SetAlFnt{\small\sf}
\BlankLine
\Fn{$\reverserace({E},{\event},{\event'})$}{
  \keyword{let} $E''$ be the subsequence of $E$ consisting of the events $e'''$ with $\event \nhappbf{\hb}{E} \event'''$\;\label{algl:revrace-init-notdep}
  \keyword{let} $\event''$ be the last event performed by $\procof{\event'}$\;
  $\happbf{sc}{E''} := \weaksatrel{\happbf{hb}{E''}}$\;\label{algl:revrace-init-ordering}
  \keyword{let} $S=\{E''\}$\label{algl:revrace-lin-seq}\;
  \ForEach{$F\in S$}{
    \Repeat{each handler in $F$ has at most one incomplete message}{\label{algl:revrace-msg-del-start}
      \label{algl:converge-begin}
      \If{an incomplete message includes an event $e'''$ with $e''' \happbf{\hb}{F} e''$}{\label{algl:revrace-rule-must}
        remove all other incomplete messages on same handler from $F$\;
      }
      \If{several incomplete messages execute on one
        handler}{\label{algl:revrace-rule-choose}
        \ForEach{incomplete message $p$ in the same handler}{
          construct a sequence $U$ where all the messages except $p$ from
          this handler are removed\;
          add $U$ to the set $S$\;
        }
        delete $F$ from $S$\;
        pick another sequence $F$ from $S$\;
      }
      \ForEach{incomplete message $p$}{\label{algl:revrace-rule-last}
        add relation $\happbf{sc}{F}$ from all other messages on
        same handler to $p$ and saturate\;
      }
      \ForEach{cycle in $\happbf{sc}{F}$}{\label{algl:revrace-rule-cycle}
        remove a message in the cycle\;
      }
      remove events that follow already removed events in the $\happbf{\hb}{F}$ ordering\;
      \If{an event $e'''$ s.t. $e''' \happbf{\hb}{F} e''$ is
        deleted}{\label{algl:revrace-irreversible}
        remove $F$ from $S$ and exit the loop;
      }
    }\label{algl:converge-end}
  }\label{algl:revrace-msg-del-end}
  \keyword{let} $WSS=\emptyset$ \tcp*[f]{set of wakeup sequences}\;
  \ForEach{$F\in S$}{
    \ForEach{two messages from the same handler that are not ordered by $\happbf{sc}{F}$}{
      add relation $\happbf{sc'}{F}$ as they appear in $F$; \label{algl:revrace-msg-order}
    }
    \Repeat{until $\happbf{sc}{F}$ and $\happbf{sc'}{F}$ together
      are acyclic}{
      nondeterministically pick two messages ordered by the relation $\happbf{sc'}{F}$ and
      reverse the order; \label{algl:revrace-next-msg-order}
    }
    topologically sort $F$ respecting
    $\happbf{sc}{F}$ and $\happbf{sc'}{F}$ \label{algl:revrace-linearize}\;
    extract largest common prefix $E'$ of $F$ and $E$\;
    add \tuple{E',.u.e'} to $WSS$, where $F=E'.u$\;
  }
  $\Return(WSS)$\;
}
\caption{Reversal of a Race.}
\label{alg:reverserace}
\end{algorithm}

\begin{description}
  \item[\cref{algl:revrace-lin-seq}]
    After the loop from \cref{algl:revrace-msg-del-start}
    to \cref{algl:revrace-msg-del-end}, the set $S$ will contain all
    the possible sequences with at most one incomplete message per handler.
  \item[\cref{algl:revrace-rule-must}]
    If an incomplete message includes an event $e'''$ with $e''' \happbf{hb}{F} e''$, then
    any other message on the same handler which is not completely executed in $F$ must be removed.
  \item[\cref{algl:revrace-rule-choose}]
    If several incomplete messages execute on the same handler, then
    finds all the possible sequence where only one of the incomplete
    messages is present and saves them to $S$.
    
  \item[\cref{algl:revrace-rule-last}]
    Whenever a handler has an incomplete message $p$, any other message $p'$
    on that handler must be executed before $p$, represented by extending $\happbf{sc}{F}$ from the last
    event of $p'$ to the first event of $p$ and then saturating.
  \item[\cref{algl:revrace-rule-cycle}]
    If $\happbf{sc}{F}$ becomes cyclic during the filtering and ordering procedure, then each cycle should be broken by
    removing the events in a suitable message.
  \item[\cref{algl:revrace-irreversible}]
    It is possible to have two or more incomplete messages from the same handler in $F$ each having at least one
    event that happens-before $e''$. Because of this reason or
    non-deterministic choice during message deletion process described
    previously, an event $e'''$ such that
    $e''' \happbf{\hb}{F} e''$ might be deleted from
    $\exseq''$. Then the algorithm removes $F$ from $S$.
  \item[\cref{algl:revrace-msg-order}]
    By adding additional relation $\happbf{sc'}{F}$,
    the algorithm determines a total order on the messages from the same handler.
  \item[\cref{algl:revrace-next-msg-order}]
    If $\happbf{sc}{F}$ and $\happbf{sc'}{F}$ together form a
    cycle, the algorithm tries to guess another order
    $\happbf{sc'}{F}$. Systematic search of $\happbf{sc'}{F}$ is a NP-complete
    problem in general case (see \cref{thm-consistency} below). But for the
    programs we have tried so far, doing \cref{algl:revrace-msg-order}
    is sufficient.
  \item[\cref{algl:revrace-linearize}]
    The sequence $u$ is linearized by topological sort procedure while
    respecting $\happbf{sc}{F}$ and $\happbf{sc'}{F}$.
\end{description}

As an illustration, consider the race on \texttt{x} in the program of \cref{fig:example3}.
Here, there is a unique (up to equivalence) maximal execution which reverses the race, which consists of all events that post messages, all events in messages
    $p_2$ and $q_2$, and the assignment to \texttt{d} by $p_1$. The read of \texttt{x} by~$q_2$ should be ordered last, since it corresponds to the racing event $e'$.
    Message $q_1$ is removed by the rule at \cref{algl:revrace-rule-must}, whereby also the second of event of $p_1$ is removed, since it reads from the first event in $q_1$.

\paragraph{\bf Event-driven Consistency.}
When describing \cref{algl:revrace-next-msg-order} above, we stated that the problem of determininig whether a given happens-before relation
can be obtained from some execution is NP-complete. This follows from NP-completeness of the event-driven consistency problem.
The event-driven consistency problem consists in checking whether, for a given  directed graph 
$(S,\happbf{\hb}{S})$ where $S$ is a set of events and $\happbf{\hb}{S}$ is a set of edges,  there is an execution sequence $E$ such that $(S,\happbf{\hb}{S})$ is the  \emph{\hb-trace} of  $E$.

\begin{theorem}
\label{thm-consistency}
The event-driven consistency problem is NP-complete.
\end{theorem}

The proof of the above theorem can be found in \cref{sec:complexity-proof-consistency}.  Given this NP-hardness result,  we define a procedure  to reverse races (\cref{sec:race-reversals-appendix}) that makes use of a saturation procedure to constrain the ordering between messages and therefore reduces the number of cases to  consider.


\subsection{Wakeup Tree Insertion}
\label{sec:wakeuptrees}
In this section, we formally define wakeup trees, and present the procedure $\insertwusname$ for inserting wakeup sequences, and $\insertpendingwuname$ for inserting parked wakeup sequences.


\begin{definition}[Wakeup Tree]
\label{def:Wut}
A \emph{wakeup tree} is an ordered tree $\tuple{B,\prec}$, where
$B$ (the set of \emph{nodes}) is a finite prefix-closed set of sequences of
messages, with the empty sequence $\emptyseq$ being the root.
The children of a node $u$, of form
$u.p$ for some set of messages $p$, are ordered by $\prec$.
In the tree $\tuple{B,\prec}$, such an ordering between children is
extended to a total order $\prec$ on~$B$ by letting
$\prec$ be the induced post-order relation between the nodes in $B$
(i.e., if the children $u.p_1$ and $u.p_2$ are ordered as
$u.p_1 \prec u.p_2$, then $u.p_1 \prec u.p_2 \prec u$ in the induced post-order).
\qed
\end{definition}

\begin{algorithm}[t]
\newcommand{\lIfElse}[3]{\lIf{#1}{#2 \textbf{else}~#3}}
\SetAlFnt{\small\sf}
\SetKw{Continue}{continue}
\BlankLine
\Fn{$\insertwus{v}{E'}{u}$}{
  \lIf{$v$ is the empty sequence \keyword{or} $u$ is a leaf in $\wut{E'}$}{\Return{}}\label{alg:wut-insert-empty}
  \ForEach{\mbox{\rm child $u.p$ of $u$, in order (from left to right)}}{\label{algl:wut-foreach-child}
    \If(\tcp*[f]{If a new message is not started ...}){$p$ does not start after ${\exseq'.u}$}{\label{alg:no-message}
      \If{$p \in \winits{\exseq'.u}{v}$}{
        \lIfElse{$\nextev{\exseq'.u}{p} \in v$}{$\insertwus{v\remove p}{E'}{u.p}$}{\Return{}\label{alg:travers-no-message}}
      }
    }
    \Else{\label{alg:message}
      \If{$p$ is the first (if any) message on its handler in $v$}{
        \lIfElse{$\nextev{\exseq'.u}{p} \in v$}{$\insertwus{v\remove p}{E'}{u.p}$}{\Return{}\label{alg:wut-pfirst}}
      }
      \ElseIf{$p$ is fully present in $v$\label{alg:wut-pfull}} {
	      \lIfElse{$p \in \winits{\exseq'.u}{v}$}{$\insertwus{v\remove p}{E'}{u.p}$}{
           \Continue
         }\label{alg:wut-pfull-continue}
      }
      \Else{
        add $v$ to $\pendingwus{E'.u.p}$\;
        \Return{}\;
      }
    }
  }
insert $v$ as a new branch from $u$, ordered after the existing children of $u$\;\label{alg:wut-insert-branch}
  \Return{}\;
}
\caption{Insertion into Wakeup Tree.}
\label{alg:wakeuptree}
\end{algorithm}

\hasbeenremoved{The insertion of a new wakeup sequence $v$ into a wakeup tree $\wut{E'}$ should enforce property P2.
The insertion of a new wakeup sequence $v$ into a wakeup tree $\wut{E'}$ is performed by
descending step-by-step from the root of the tree. At each node $u$, it
should, for each of its children (of form $u.p$), test whether $p \in\winits{E'.u}{v'}$, where
$v'$ is obtained from $v$ by removing the events that are already in $u$.
If $p \in\winits{E'.u}{v'}$ then insertion moves to the child $u.p$. If the test fails for all
children, then $v'$ should be added as a right-most child of $u$.
}



Insertion of a wakeup sequence $v$ into the wakeup tree $\wut{E'}$ is performed by calling the function $\insertwus{v}{E'}{u}$ with parameters $v$ and $E'$, and the parameter $u$ being the empty sequence.
The call $\insertwus{v}{E'}{\emptyseq}$ will, if $v$ conflicts with all its current leaves, extend the wakeup tree $\wut{E'}$ by a new leaf $v'$ such that $v' \mtequiv{E'} v$.
The recursive function $\insertwus{v}{E'}{u}$, shown in \cref{alg:wakeuptree},
traverses the wakeup tree $\wut{E'}$ from the root downwards, where $u$ is the current point of the traversal.
The initial call is performed with $u$ being the empty sequence.
Each invocation of $\insertwus{v}{E'}{u}$
first checks whether a leaf has been reached or all of $v$ has already been examined, in which case nothing new should be added to $\wut{E'}$ (\cref{alg:wut-insert-empty}).
Thereafter, it considers the children of $u$ (of form $u.p$) from left to right.
For each child $u.p$, the algorithm tries to determine whether or not $p \in \winits{\exseq.u}{v}$.
If $p$ does not start after $E'.u$ then $p \in \winits{\exseq.u}{v}$ then $p \in \winits{\exseq.u}{v}$ can be checked by simple inspection at \crefrange{alg:no-message}{alg:travers-no-message} (as described in the second paragraph of \cref{sec:checkwi}).
The algorithm traverses to $u.p$ by a call to $\insertwus{v\remove p}{E'}{u.p}$ if $p \in \winits{\exseq.u}{v}$, otherwise it
considers the next child of $u$ if $p \not\in \winits{\exseq.u}{v}$.
If $p \in \winits{\exseq.u}{v}$ but $p$ does not appear in $v$, then actually no wakeup sequence need be inserted (\cref{alg:travers-no-message}).
If $p$ starts after $E'.u$ (\cref{alg:message}), then
\begin{itemize}
\item the case in which  $p$ is the first (if any) message on its handler in $v$, considered at \cref{alg:wut-pfirst} is performed according to the \textbf{Simple Check} in \cref{sec:checkwi};
\item if $p$ executes to completion in the sequence $v$ (\cref{alg:wut-pfull}), then $v$ contains sufficient information to decide whether $p \in \winits{\exseq.u}{v}$ using the remaining sequence of checks in \cref{sec:checkwi};
\item if none of these two cases apply, then more information is needed about which accesses $p$ performs when it is executed. Therefore the sequence $v$ is ``parked'' at the node $u.p$: the insertion of $v$ will be resumed when the node $u.p$ is extended to a maximal execution starting with $E'.u.p$, which happens at \cref{algacsl:insert-pendingwu} of \cref{alg:eventdpor-access} with $E'$ being $E'.u$.
\end{itemize}
If all children $u.p$ of $u$ have been traversed with failing tests for $p \in \winits{\exseq.u}{v}$, then $v$ is added as a new branch from $u$, ordered after the already existing children (\cref{alg:wut-insert-branch}).

\begin{algorithm}[t]
\SetAlFnt{\small\sf}
\BlankLine
\Fn{$\insertpendingwu{v}{E'}$}{
  $\keyword{let} \ E''.p  \ \mbox{be} \ E'$\;
  \lIf{$v$ is the empty sequence}{\Return{}}
  \lIf{$\exseq''$ was formerly a leaf in a wakeup tree}{\Return{}}\label{alglpwu:former-leaf}
  \If{\label{algl:pwu-if-in-wi}$p \in \winits{\exseq''}{v}$}{
    $\keyword{let} \ q \ \mbox{be the message following $E'$ in $E$}$\;
    $\insertpendingwu{v\remove p}{E'.q}$\;\label{alglpwu:recursive-call}
   }
   \Else{
     $\insertwus{v}{\exseq''}{\emptyseq}$\tcp*{If conflict w. next event, insert into the wakeup tree}\label{alglpwu:event-insert}
   }
}
\caption{Insertion of Parked Wakeup Sequences.}
\label{alg:pendingwus}
\end{algorithm}

It remains to define the procedure for inserting parked wakeup sequences (called at~\cref{algacsl:insert-pendingwu} of~\cref{alg:eventdpor-access}).
This insertion is described in~\cref{alg:pendingwus}, as the function
$\insertpendingwu{v}{E'}$, which inserts a wakeup sequence $v$ which is parked after a prefix $E'$ of the execution $E$.
The function first decomposes $E'$ as $E''.p$, and checks whether
$p \in \winits{\exseq''}{v}$.
Information about the accesses of $p$ can now be found in the execution $E$, so that the check $p \in \winits{\exseq''}{v}$ can be performed.
The check will be exact for non-branching programs, but possibly conservative in general.
If the check succeeds, then insertion proceeds one step further in the execution $E$ (\cref{alglpwu:recursive-call}), otherwise $v$ conflicts with $p$ and so should be inserted at the wakeup tree after $E''$ (\cref{alglpwu:event-insert}).
As an additional optimization, \Cref{alglpwu:former-leaf} checks whether $E''$ was the leaf that is extended to the currently explored execution.
If so, the insertion can return without inserting anything, in analogy
with how leaves are handled in wakeup tree insertion
(\cref{alg:wut-insert-empty} of \cref{alg:wakeuptree}).

\subsection{Checking for Redundancy}
\label{sec:checkwi}
Let us now consider the problem of deciding whether $p \in \winits{\exseq}{w}$ for a message $p$ and an execution $\exseq.w$.

If $p$ does not start after $E$, then $p \in \winits{\exseq}{w}$ can be checked by simple inspection, as follows.
    If $\nextev{E}{p}$ is a local event or posts a message, then $p \in \winits{\exseq}{w}$ holds trivially.
  If $\nextev{E}{p}$ accesses a shared variable, then
  \begin{inparaenum}[(i)]
    \item if $p$ appears in $w$, we have $p \in \winits{\exseq}{w}$ precisely when there is no event $\event$ in $w$ such that
      $\event \happbf{hb}{E.w}{\nextev{E}{p}}$, and
    \item if $p$ does not appear in $w$, we have $p \in \winits{\exseq}{w}$ precisely when no event in $w$ conflicts with $\nextev{E}{p}$.
  \end{inparaenum}

If $p$ starts after $E$, then checking whether $p \in \winits{\exseq}{w}$
is NP-hard in the general case, as we show in \cref{thm-lowerbound}.
However, in many cases, the check can be performed by tests that run in polynomial time.
\EventDPOR employs the following sequence of checks, starting with simple ones, and resorting to an exact decision procedure only as a last step.
We assume that the event which posts message $p$ appears in $E$, otherwise
$p \in \winits{\exseq}{w}$ is trivially false.




\begin{description}
  \item[Simple Check]
  If $p$ is the first message (if any) on its handler in $w$, then $p \in \winits{\exseq}{w}$ is trivially true (recall our assumption that the first event of a message does not access a shared variable).
\item[Happens-Before Check]
  If $p$ is not the first message on its handler in $w$, 
  we check whether there is a happens-before dependency from a message $p'$ which precedes $p$ on its handler, as follows.
  \begin{enumerate}
  \item
   If $p$ is not executed to completion in $w$, we extend $w$ by a sequence of events performed by $p$ which performs all the shared-variable
  accesses that $p$ did not perform in $w$. If after this extension, some event of $p$ happens-after an event in a message $q$ on another handler which is
  not executed to completion in $w$, then $w$ is further extended by events of $q$ in the same way. If an event of $q$ again
  happens-after an event in an incomplete message on some other handler, this procedure is repeated recursively until convergence, resulting in an extension $w'$ of $w$.
  \item
  Thereafter, the happens-before relation $\happbf{hb}{E.w'}$ is extended to include ordering constraints induced by the event-driven execution model.
  \begin{enumerate}[(i)]
    \item First $\hbwpref{E}{w'}$ is constructed as
    the smallest transitive relation which includes $\happbf{\hb}{E.w'}$ and in addition enforces
    $e \hbwpref{E}{w'} e'$  whenever $e$ is in a message whose first event is in $E$ and $e'$  occurs after $e$ on the same handler as $e$.
    \item
   Thereafter, $\happbf{sc}{E.w'}$ is defined as the saturation $\weaksatrel{\hbwpref{E}{w'}}$ of $\hbwpref{E}{w'}$ (see \cref{def:weakall}).
  \end{enumerate}
  If now $e' \happbf{sc}{E.w'} e$ for some event $e$ in $p$ and event $e'$ in a message which precedes $p$ on the same handler, then $p \in \winits{\exseq}{w}$ must be false.
  \end{enumerate}
\item[Witness Construction]
  If the Happens-Before Check was not negative, the next step is to construct an actual execution in which $p$ is the first message on its handler. 
  First, $\happbf{sc}{E.w'}$ is extended, by ordering
  the events in $p$ before any event in a message that precedes $p$ in $w'$ on the same handler, and thereafter saturated by the saturation operation $\weaksatrel{\cdot}$.
  If the result contains a cycle, then $p \in \winits{\exseq}{w}$ must be false. Otherwise
  we extend the saturated extension of $\happbf{sc}{E.w'}$ to a total order on the messages of each handler, by ordering
  messages that are still unordered to execute in the same order as they appear in $w'$. If this can be done without creating a cycle then
  $p \in \winits{\exseq}{w}$ is true.
\item[Decision Procedure]
  If a cycle is created, then a decision procedure is invoked as a final step.
\end{description}

\section{Proofs of Complexity Results}
\label{sec:complexity-proofs}

In this section, we prove the complexity results of \cref{thm-lowerbound,thm-consistency} but first we need to define the  happens-before relation on the events of each execution sequence.

Given an execution sequence $E$,
we define the \emph{happens-before relation} on $E$, denoted $\happbf{\hb}{E}$,
as the irreflexive partial order on $\dom{E}$ induced by the union of three sub-relations,
$\happbf{\po}{E}$, $\happbf{\cnf}{E}$, and $\happbf{\pb}{E}$. 
Each of these is a sub-relation of $\totorder{E}$, defined as follows.
\begin{description}
\item
[$e \happbf{\po}{E} e'$] if $e$ and $e'$ are performed by the same message $p$.
\item
[$e \happbf{\cnf}{E} e'$] if $e$ and $e'$ access a common shared variable \texttt{x}, and
at least one of them writes to \texttt{x}.
\item
[$e \happbf{\pb}{E} e'$] if $\procof{e'}$ is the message that is posted by $e$ and $e'$ is the first event of $\procof{e'}$. 
\end{description}
Intuitively, $\happbf{\po}{E}$ (\emph{program order}) is the total order of events of each message.
Note that $\happbf{\po}{E}$  does not order events of different messages relative to each other.
The relation $\happbf{\cnf}{E}$ (\emph{conflicts with}) captures data flow constraints arising from reads and writes to shared variables.
The relation $\happbf{\pb}{E}$ (\emph{posted by}) captures the causal dependency from message posting to message execution.

\subsection{Proof of \cref{thm-lowerbound}}
\label{thm-lowerbound-weak}

%

We  prove the lower bound by reduction from the  VSC-read problem. The reduction is similar to the one from the  event-driven consistency problem to the VSC-read problem. The idea is to start  from an execution sequence and reversing the order of two messages will lead to the pattern used in the hardness  proof of the  event-driven consistency problem. In  this proof, we will  replace the conflict relation from   $p_e$ to $p_{e,x}$ by a sequence of conflict relations that go through two particular messages $p'_{e,x}$ and $p''_{e,x}$ if they are executed in a certain order. Otherwise, there is no conflict relation from $p_e$ to  $p_{e,x}$, and so the happens-before relation is acyclic.
 
We use the same set of assumptions as in the hardness proof of the  event-driven consistency problem. 
We now reduce the VSC-read problem to the order reversing problem.
Let $(S,\happbf{\hb}{S}=\happbf{\po}{S} \cup \happbf{\rf}{S})$ be a directed graph.
As in the previous proof, we associate a message $p_e$ for each event~$e$.
The message $p_e$ will be executed by the handler $h$. 
For every write event $e$ in $S$ executed by a thread $t$, we have a message $p_e$ 
of the form [\texttt{$x_e$ = 1};\, \texttt{$x_t$ = 1};\, \texttt{$x'_e$ = 1}].
For a read event $e'$ in $S$ executed by a thread $t$ and reading from the write event $e$,
we have a message $p_{e'}$ which is of the form
\mbox{[\texttt{a = $y_e$};\, \texttt{$x_t$ = 1};\, \texttt{$y_{e'}$ = 0}]}. 

We also use an extra handler $h_x$ for each variable $x$ (used in the events of $S$).
For each write event $e$ on the variable $x$, we have a message $p_{e,x}$ that will run the following sequence of statements \mbox{[\texttt{$x_e$ = 0};\, \texttt{$z_e$ = 0};\, \texttt{$y_e$ = 0};\, \texttt{$z'_e$ = 0}]}. We then add a conflict relation from the first write of $p_{e,x}$ to the first  write of $p_e$. This will force the message $p_e$ to start after $p_{e,x}$. For a read event $e'$  reading from the write event $e$,  we also add a conflict relation from  the third write of $p_{e,x}$  to the first  read of $p_{e'}$. Observe that we do not impose a direct conflict relation from  $p_{e}$ or $p_{e'}$ to $p_{e,x}$.

For each write event $e$ on the variable $x$, we have two messages $p'_{e,x}$ and $p''_{e,x}$ that run on a fresh handler $h_e$ the following sequence of statements 
[\texttt{a = z};\, \texttt{$x'_e$ = 1}] and [\texttt{$z_e$ = 0};\, \texttt{a = x}] respectively. We add a conflict relation from the second last event of  $p_{e}$ to the second event of $p'_{e,x}$ and from the first write event of
$p''_{e,x}$  to the last event of  the second write event of  $p_{e,x}$. Observe that in the case that $p'_{e,x}$ is executed before $p''_{e,x}$, we have an indirect conflict relation from the last write of $p_e$ to  the second write of $p_{e,x}$   through $p'_{e,x}$ and $p''_{e,x}$. In the case where we execute $p''_{e,x} $ before $p'_{e,x}$, there is no  (indirect) happens-before relation from $p_{e}$ to $p_{e,x}$.


In similar manner, for each read event $e'$ on $x$ in $S$ reading from the write event $e$, we have two messages $p'_{e',x}$ and $p''_{e',x}$ that run on a fresh handler $h_{e'}$ the following sequence of statements 
\mbox{[\texttt{a = z};\, \texttt{$y_{e'}$ = 1}]} and [\texttt{a = $z'_e$};\, \texttt{a = x}] respectively.  We add a conflict relation from the last event of $p_{e'}$ to the second event of $p'_{e',x}$ and from the first read event of  
$p''_{e',x}$  to the  last write event of  $p_{e,x}$. Observe that in the case that $p'_{e',x}$ is executed before $p''_{e',x}$, we have an indirect conflict relation from the last write of $p_{e'}$ to the last write of $p_{e,x}$ through $p'_{e',x}$ and $p''_{e',x}$.

To set the order of all $p'_{e,x}$ and  $p''_{e,x}$ ($p'_{e',x}$ and $p''_{e',x}$), we will use  two messages $p$ and $p'$ on a fresh handler $h'$ that run the following  statements [\texttt{$x_p$ = 1}; \texttt{z = 1}] and [\texttt{$x_{p'}$ = 1}; \texttt{x = 1}] respectively. We add then a conflict relation from the first read event of  $p'_{e,x}$ (resp. $p'_{e',x}$) to the  event of $p$ and from the write event of  
$p'$  to the last event of $p''_{e,x}$ (resp.  $p''_{e',x}$). Note that if $p$ is executed before $p'$ then  $p'_{e,x}$ (resp. $p'_{e',x}$) is executed before  $p''_{e,x}$ (resp. $p''_{e',x}$).

Let $(S',\happbf{\hb'}{S'}=\happbf{\po}{S'} \cup \happbf{\cnf}{S'}\cup  \happbf{\pb}{S'})$ be the constructed  \hb-trace from $(S,\happbf{\hb}{S}=\happbf{\po}{S} \cup \happbf{\rf}{S})$. It is easy to see that there is an execution sequence $E$ such that $(S',\happbf{\hb}{S'})$ is the  \emph{\hb-trace} of  $E$ and where the message $p'$  is executed before $p$ and   $p''_{e,x}$ (resp. $p''_{e',x}$) is executed before  $p'_{e,x}$ (resp. $p'_{e',x}$).

\begin{lemma}
\label{lemma2}
 There is an execution sequence $E'$ such that $(S',\happbf{\hb}{S'})$ is the  \emph{\hb-trace} of  $E'$ and where the first event of message $p$ is  the  first executed event in $E$    if and only if there is an execution $E''$ such that $S=\dom{E}$, $ \happbf{\po}{S} =\happbf{\po}{E} $ and 
 $\happbf{\rf}{S} \subseteq \happbf{\cnf}{E}$. 
\end{lemma}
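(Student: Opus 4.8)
The plan is to show that fixing the processing order of the two controller messages $p$ and $p'$ on the handler $h'$ with $p$ first collapses the whole construction $(S',\happbf{\hb'}{S'})$ onto the event-driven consistency gadget underlying the proof of \cref{thm-consistency}, so that realizability of $(S',\happbf{\hb'}{S'})$ from a configuration in which $p$ runs first becomes equivalent to solvability of the VSC-read instance $(S,\happbf{\hb}{S}=\happbf{\po}{S}\cup\happbf{\rf}{S})$. The key device is to read each pair $p'_{e,x},p''_{e,x}$ (and $p'_{e',x},p''_{e',x}$) as a \emph{switch}: when $p'_{e,x}$ is processed before $p''_{e,x}$ the switch is ``on'' and materialises the indirect conflict chain from the last write of $p_e$ to the second write of $p_{e,x}$; otherwise it materialises nothing. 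The messages $p,p'$ on $h'$ are wired so that $p$ before $p'$ turns every switch on at once.

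\emph{Forward direction.} Assume $E'$ has \hb-trace $(S',\happbf{\hb'}{S'})$ and $p$'s (single, local) event is executed first. Since $p$ and $p'$ run on the same handler, one is processed to completion before the other; as $p$ goes first, $p$ precedes $p'$ in $E'$, hence every event of $p$ is $\happbf{\hb}{E'}$-before every event of $p'$. For a fixed write $e$ on $x$, the construction imposes a conflict edge from the first event of $p'_{e,x}$ (its read of $z$) to $p$'s write of $z$, and a conflict edge from $p'$'s write of $x$ to the last event of $p''_{e,x}$; composing these with the fact that $p$ precedes $p'$ in $E'$ produces an $\happbf{\hb}{E'}$-edge from the first event of $p'_{e,x}$ to the last event of $p''_{e,x}$, which — since $p'_{e,x}$ and $p''_{e,x}$ are serialised on $h_e$ — forces $p'_{e,x}$ before $p''_{e,x}$ (the other order would close a cycle in $\happbf{\hb}{E'}$). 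The same holds for every read $e'$, so all switches are on in $E'$; hence $p_e \happbf{\hb}{E'} p_{e,x}$ for every write $e$ and $p_{e'} \happbf{\hb}{E'} p_{e,x}$ for every read $e'$ of $e$. Together with the conflict edges the construction imposes unconditionally (from the first write of $p_{e,x}$ to the first write of $p_e$, and from the third write of $p_{e,x}$ to the first read of $p_{e'}$), the restriction of $\happbf{\hb}{E'}$ to $\{p_e\}_{e\in S}\cup\{p_{e,x}\}$ is exactly the pattern from the proof of \cref{thm-consistency}. The argument there then yields: the processing order of the $p_e$ on $h$ is a total order $\totorder{E''}$ on $S$; the $x_t$-variables force $\totorder{E''}$ to extend $\happbf{\po}{S}$; the serialisation of the $p_{e,x}$ on $h_x$ together with the imposed conflict edges forces each $e'\happbf{\rf}{S}e$ to be realised, i.e. $\happbf{\rf}{S}\subseteq\happbf{\cnf}{E''}$; and acyclicity of $\happbf{\hb}{E'}$ guarantees consistency. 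Reading off $\totorder{E''}$ gives the required $E''$ with $\dom{E''}=S$, $\happbf{\po}{E''}=\happbf{\po}{S}$ and $\happbf{\rf}{S}\subseteq\happbf{\cnf}{E''}$.

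\emph{Backward direction.} Conversely, given a witnessing execution $E''$ of the VSC-read instance, I would construct $E'$ by running $p$ first, then $p'$ (so every switch is on), and then scheduling the remaining messages guided by $\totorder{E''}$: for each write $e$ run $p'_{e,x}$ before $p''_{e,x}$ (and $p'_{e',x}$ before $p''_{e',x}$ for reads), and interleave the $p_e,p_{e'},p_{e,x}$ so that their conflict edges are precisely those prescribed by $(S',\happbf{\hb'}{S'})$ — for the writes to a given $x$ this uses the $\totorder{E''}$-order of the writes and the function $\happbf{\rf}{S}$. It remains to verify that such a schedule exists, i.e. that the full edge set of $(S',\happbf{\hb'}{S'})$, now including every ``on''-switch indirect edge from $p_e$ to $p_{e,x}$ and from $p_{e'}$ to $p_{e,x}$, is acyclic; this is the converse half of the \cref{thm-consistency} argument and follows from $\totorder{E''}$ being a valid sequentially consistent order. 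The resulting $E'$ realises $(S',\happbf{\hb'}{S'})$ and starts with $p$, which, together with the forward direction, establishes the lemma (and hence completes the reduction behind \cref{thm-lowerbound}).

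\emph{Main obstacle.} The delicate point is the forward direction, specifically the two-part claim that putting $p$ first forces \emph{every} switch on while introducing \emph{no} ordering constraints beyond those already present in the consistency gadget. Verifying this needs a gadget-by-gadget audit confirming that each fresh variable ($x_e$, $x'_e$, $z_e$, $z'_e$, $y_e$, $y_{e'}$, $x_p$, $x_{p'}$, $z$, $x$) is accessed only along its intended conflict chain, so that the sole residual freedom in realising $(S',\happbf{\hb'}{S'})$ with $p$ first is the choice of a sequentially consistent total order on $S$; once this is in place, correctness of the underlying VSC-read reduction transfers unchanged.
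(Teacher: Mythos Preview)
Your proposal is correct and follows essentially the same approach as the paper's own (very terse) argument: the paper simply observes that forcing $p$ before $p'$ turns every switch on, which recreates the indirect chains $p_e \to p_{e,x}$ and $p_{e'} \to p_{e,x}$ and thereby collapses the instance onto the event-driven consistency gadget of \cref{thm-consistency}; you spell out both directions of the biconditional and the gadget audit that the paper leaves implicit.

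One small technical slip worth fixing: when you write ``hence every event of $p$ is $\happbf{\hb}{E'}$-before every event of $p'$'', this is not literally true, since the happens-before relation of \cref{def:hb-def} does \emph{not} order events across distinct messages on the same handler unless they conflict, and $p$ and $p'$ share no variables. What you actually need (and what makes your chain go through) is the \emph{execution order} $<_{E'}$: serialization on $h'$ gives $p <_{E'} p'$, and since $\happbf{\hb}{E'} \subseteq {<_{E'}}$, you can compose the two hb-trace edges with this execution-order step to conclude that $p'_{e,x}$'s first event precedes $p''_{e,x}$'s last event in $E'$, whence serialization on $h_e$ forces $p'_{e,x}$ before $p''_{e,x}$. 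With that correction, your argument is sound and matches the paper's.
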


Imposing $p$ to be executed before $p'$ will impose that every  $p'_{e,x}$ (resp. $p'_{e',x}$) is executed before  $p''_{e,x}$ (resp. $p''_{e',x}$) and so there will be an indirect relation from the last write of  (resp. $p_e$) $p_{e'}$ to the last (second) write of $p_{e,x}$ through $p'_{e,x}$ and $p''_{e,x}$ ($p'_{e',x}$ and $p''_{e',x}$). Thus, we are in similar case as in the hardness proof of the  event-driven consistency problem.
Furthermore, we have the first event of $E'$  can be the first event of $p$ since it is independent from any other event.

\begin{lemma} 
\label{lemma3}
$p \in \winits{\emptyseq}{E}$  if and only if there is an execution sequence $E'$ such that $(S',\happbf{\hb}{S'})$ is the  \emph{\hb-trace} of  $E'$ and where the message $p$ is  executed before $p'$.
\end{lemma}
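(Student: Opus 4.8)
The plan is to unfold \cref{def:winits} and to exploit that $E$ is a \emph{maximal} execution realizing the whole trace $(S',\happbf{\hb}{S'})$, together with the fact that every message and thread in the construction runs straight-line code, so that every maximal execution of the constructed program has one and the same domain, namely one event per statement of each message and thread. Throughout, recall that $p$ and $p'$ are the only two messages on the fresh handler $h'$, so ``$p$ is executed before $p'$'' just means that $h'$ processes $p$ first, and, since a handler runs each message to completion before extracting the next, this is the same as saying that $p$ performs an event before $p'$ does.

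First I would treat the left-to-right direction. Assume $p\in\winits{\emptyseq}{E}$. By \cref{def:winits} there is a $w'$ with $E \mtprefix p.w'$ and $p.w'$ an execution sequence. From clause (i) of \cref{def:hb-prefix} we get $\dom{E}\subseteq\dom{p.w'}$; but $\dom{E}$ is already the largest possible domain, so in fact $\dom{p.w'}=\dom{E}$ and $p.w'$ is maximal. Then clause (ii) of \cref{def:hb-prefix} gives $\happbf{\hb}{p.w'}=\happbf{\hb}{E}=\happbf{\hb}{S'}$, i.e.\ $E':=p.w'$ realizes $(S',\happbf{\hb}{S'})$. Since the first event of $E'$ is performed by $p$, the handler $h'$ starts with $p$ and, by the handler semantics, completes $p$ before beginning $p'$; hence $p$ is executed before $p'$, as required.

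For the converse, let $E'$ be an execution whose \hb-trace is $(S',\happbf{\hb}{S'})$ and in which $p$ is executed before $p'$. As observed in the paragraph preceding this lemma, the first event of $p$ is local (by our standing assumption the first event of a message accesses no shared variable) and is independent of every other event of $E'$, so it may be moved to the front of $E'$, producing an execution sequence $E'' = p.w'$ with the same domain and the same \hb-trace $(S',\happbf{\hb}{S'})$. Since then $\dom{E''}=\dom{E}$ and $\happbf{\hb}{E''}=\happbf{\hb}{E}$, and clause (iii) of \cref{def:hb-prefix} holds vacuously because the two domains coincide, we get $E\mtprefix p.w'$, and hence $p\in\winits{\emptyseq}{E}$ by \cref{def:winits}.

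The main obstacle is making the two extremality/commutation steps fully rigorous: that the purely local first event of $p$ commutes past every earlier event of $E'$ without changing the happens-before relation (this is where the absence of data-dependent control flow in the constructed messages is really used, so that rescheduling cannot change which events occur), and that $\dom{E}$ is maximal so that the $\mtprefix$-comparison forces equality of the domains; one also has to keep track of the precise reading of ``executed before'' for two messages sharing $h'$. Once the lemma is proved, combining it with \cref{lemma2} and the intervening observation---that forcing $p$ before $p'$ reinstates exactly the conflict cycle of the consistency-hardness argument---reduces the NP-hard VSC-read problem to the weak-initial check, which establishes \cref{thm-lowerbound}.
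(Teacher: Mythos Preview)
Your proof is correct and follows the same idea the paper uses: the paper's entire argument for this lemma is the single observation that ``the first event of $E'$ can be the first event of $p$ since it is independent from any other event,'' and you have unpacked this into a careful two-direction argument via \cref{def:winits} and \cref{def:hb-prefix}, using maximality of $E$ and the straight-line nature of the constructed messages to force $\dom{p.w'}=\dom{E}$. The only thing worth making explicit is that the main-thread posting events are being suppressed (as the paper's footnote does), so that the first event of $p$ really has no $\happbf{\hb}{}$-predecessor; once that convention is stated, your commutation step is fully justified.
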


Finally, \cref{thm-lowerbound} can be seen as an immediate corollary of \cref{lemma2,lemma3}.

  \subsection{Proof of \cref{thm-consistency}}
\label{sec:complexity-proof-consistency}
\paragraph{Upper-bound} Let  $(S,\happbf{\hb}{S})$ be a directed graph (i.e., \hb-trace) 
 where $S$ is a set of events and $\happbf{\hb}{S}=\happbf{\po}{S} \cup \happbf{\cnf}{S} \cup \happbf{\pb}{S}$ is a set of labeled edges.
To show that the event-driven consistency problem is NP, it suffices to first guess a total ordering $<_S$ between the messages on the same thread handler. Observe that we can have at most one incomplete message per handler which should be scheduled last with respect to $<_S$. We then use  the  total order relation $<_S$ to extend  the {\em program order} relation  $\happbf{\po}{S}$   into a total order relation $\happbf{\po}{}$ on the set of events executed by the same handler such that: $(1)$ $e \happbf{\po}{} e'$ if $e \happbf{\po}{S} e'$, and $(2)$
$e \happbf{\po}{} e'$ whenever $e$ and $e'$ are events in two different messages $p$ and $p'$ on the same handler and $p <_S p'$. Finally,  the extended  {happens-before relation} $\happbf{\hb}{}=\happbf{\po}{} \cup \happbf{\cnf}{S} \cup \happbf{\pb}{S}$  is acyclic ({which  is equivalent  to checking sequential consistency} of the extended graph $(S,\happbf{\hb}{})$) if and only if  there is an execution sequence $E$ such that $(S,\happbf{\hb}{})$ is the  \emph{\hb-trace} of  $E$ (i.e., $S=\dom{E}$, $ \happbf{\po}{} =\happbf{\po}{E} $,  $\happbf{\cnf}{S} = \happbf{\cnf}{E}$  and $\happbf{\pb}{S} = \happbf{\pb}{E}$). Observe that checking the acyclicity  of the extended {happens-before relation} $\happbf{\hb}{S}$ can be done in polynomial time. Furthermore, the execution sequence $E$ can be obtained   via the linearlization of the extended {happens-before relation} $\happbf{\hb}{}$  (since the extend  the {\em program order} relation $\happbf{\po}{}$  forces the messages on the same handler to be executed one after the other).

\paragraph{Lower-bound} We prove the lower bound by reduction from the problem of verifying the sequential consistency  of traces when  only the read-from relation is given. Hereafter, we call this problem VSC-read. The VSC-read problem consists in checking whether, given a directed graph $(S,\happbf{\hb}{S}=\happbf{\po}{S} \cup \happbf{\rf}{S})$
 where $S$ is a set of write and read events, $\happbf{\po}{S}$ is the program order relation that totally orders all the events of each thread, and $\happbf{\rf}{S}$ is the {\em read-from} relation that maps each read event to the  write event from which it gets its value, there is an execution sequence $E$ such that $S=\dom{E}$, $ \happbf{\po}{S} =\happbf{\po}{E} $ and 
 $\happbf{\rf}{S} \subseteq \happbf{\cnf}{E}$. 
The VSC-read problem is known to be  NP-complete in the size of the
program~\cite[Theorem~4.1]{DBLP:journals/siamcomp/GibbonsK97}.

We now reduce the VSC-read problem to the event-driven consistency problem. Let   $(S,\happbf{\hb}{S}=\happbf{\po}{S} \cup \happbf{\rf}{S})$ be a directed graph. 
To simplify the presentation\footnote{We assume that threads/messages are spawned/posted by a main thread, and that all shared variables get initialized to 0, also by the main thread. To make the presentation simple, we omit the events of the main thread.}, we assume w.l.o.g. that each write event is read by at least one read event. 
The main idea of our reduction is to associate a message $p_e$  for each event $e$. The message $p_e$ will be executed by the handler $h$. The order of the execution of these messages will correspond to a linearization  of the set of event $S$ (since all these messages will be executed by the same handler $h$). However, this poses a challenge  since such  reduction from the VSC-read problem to the event-driven consistency problem will  fix the order of write events on the same
variable (as it is implied by the conflict relation $\happbf{\cnf}{}$). To address this challenge, we  rename the shared variables used by each event in $S$ and thus there will be no conflict relation between write-write events (and therefore between read-write events too). However, this leads to a new challenge which is how to make sure that between a write event $e \in S$ and a read event $e' \in S$ that is reading from $e$ there is no other scheduled write event   in $S$ on the same variable between $e$ and $e'$. To address this second challenge, we use an extra handler $h_x$ per variable $x$ that executes a number of independent messages (one  per write event on x in $S$). The order in which these messages are executed corresponds to the order in which the write events  on the same variable are scheduled. Furthermore, we make sure that each read event is scheduled after the write event it reads from and before the next scheduled write event on the same variable.

Formally, for every write event $e$ in $S$ executed by a thread $t$, we create a message $p_e$ running on the thread handler $h$.
The message $p_e$ will be of the form [\texttt{$x_e$ = 1};\, \texttt{$x_t$ = 1};\, \texttt{$x_e$ = 1}].
For a read event $e'$ in $S$ executed by a thread $t$ and reading from the write event $e$, we create a message $p_{e'}$ running on the thread handler $h$.
The message $p_{e'}$ will be of the form [\texttt{a = $y_e$};\, \texttt{$x_t$ = 1};\, \texttt{a = $y_e$}]. We use the write event on $x_t$ to order the messages corresponding to events running on the same thread $t$ in $S$. In fact, we will simulate $\happbf{\po}{S}$ using $\happbf{\cnf}{S'}$ that will  totally order all the write events on $x_t$. This results in adding  a conflict relation $ \happbf{\cnf}{S'}$ between every two events corresponding to the writes on $x_t$ in two different messages $p_e$ and $p_{e'}$ iff $e \happbf{\po}{S} e'$. 

The  statements on $x_e$ and $y_e$ are used to force a total order on the messages corresponding to events on the same variable such that  all the read messages are scheduled just after their corresponding write messages. To that aim, we use an extra handler $h_x$ for each variable $x$ (used by the events of $S$). For each write event $e$ on the variable $x$, we create a message $p_{e,x}$ that will run the following sequence of statements [\texttt{$x_e$ = 0};\, \texttt{$x_e$ = 0};\, \texttt{$y_e$ = 0};\, \texttt{$y_e$ = 0}]. We then add a conflict relation from the first  write of $p_{e,x}$   to the first write of $p_e$ and from the last write of $p_e$ to the second write of $p_{e,x}$. This will force the message $p_e$ to start and end before its corresponding reads. For a read event $e'$  reading from the write event $e$,  we also add a conflict relation from the third write of $p_{e,x}$  to  the first  read of $p_{e'}$ and from the last read of $p_{e'}$ to  the last write of $p_{e,x}$. This conflict relation will force that the entire message $p_{e'}$ will be executed just after the message $p_e$ without the interleaving  of  any other message that corresponds to a write event on $x$ between $p_e$ and $p_e'$.

Observe that the messages $p_e$ are run one after the other (since they are on the same handler $h$). Furthermore, the constraints between the messages of handler $h$ and those of handler $h_x$ impose that the read message $p_{e'}$ is scheduled just after its corresponding write message $p_e$ but before the next scheduled write  message on the same variable.
Let $(S',\happbf{\hb'}{S'}=\happbf{\po}{S'} \cup \happbf{\cnf}{S'}\cup  \happbf{\pb}{S'})$ be the constructed  \hb-trace from $(S,\happbf{\hb}{S}=\happbf{\po}{S} \cup \happbf{\rf}{S})$. It is then easy to see that:

\begin{lemma}
 There is an execution sequence $E'$ such that $(S',\happbf{\hb}{S'})$ is the  \emph{\hb-trace} of  $E'$  if and only if there is an execution $E$ such that $S=\dom{E}$, $ \happbf{\po}{S} =\happbf{\po}{E} $ and 
 $\happbf{\rf}{S} \subseteq \happbf{\cnf}{E}$. 
\end{lemma}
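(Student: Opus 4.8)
The plan is to prove the two directions of the biconditional separately, each time reading off the order in which the messages~$p_e$ run on the handler~$h$ as the linearization of~$S$ that witnesses the other instance. I would first record the structural fact that makes this work: since the shared variables of the events of~$S$ have been renamed apart, the only pairs of events of the constructed program that access a common shared variable with at least one write are exactly the pairs joined by an explicitly added conflict relation (plus same-message pairs, handled by program order), and two non-conflicting messages sharing a handler contribute no happens-before edge. Hence, for \emph{every} execution~$E'$ of the constructed program, $\happbf{\cnf}{E'}$ ranges over exactly those added pairs, oriented by execution order; so the hb-trace of~$E'$ equals $(S',\happbf{\hb}{S'})$ iff~$E'$ schedules each added conflict pair in the orientation prescribed by $\happbf{\hb}{S'}$, the matching of $\happbf{\po}{S'}$ and $\happbf{\pb}{S'}$ being automatic from the message and posting structure. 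This reduces both directions to reasoning purely about interleavings of statements.

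For the direction from a VSC-read witness to a realizing execution, I would take an execution~$E$ of the VSC-read instance, i.e.\ a total order~$\prec$ on~$S$ extending $\happbf{\po}{S}$ in which, for each read~$e'$ with $e \happbf{\rf}{S} e'$, the write~$e$ is the $\prec$-greatest write to the relevant variable preceding~$e'$. I would then schedule the messages~$p_e$ on~$h$ in the order~$\prec$ and, for each variable~$x$, schedule the messages~$p_{e,x}$ on~$h_x$ in the order of the corresponding writes in~$\prec$, inserting each~$p_{e,x}$ so that its first write precedes $p_e$'s first write, its second write follows $p_e$'s last write, and its third and last writes straddle the reads~$p_{e'}$ of~$e$ (slotted just after~$p_e$). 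A routine check then shows that all added conflict edges are realized in the prescribed orientation, so by the structural fact the hb-trace of the resulting~$E'$ is $(S',\happbf{\hb}{S'})$.

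For the converse, given~$E'$ realizing $(S',\happbf{\hb}{S'})$, let~$\prec$ be the total order~$E'$ induces on the messages~$p_e$, which all run on~$h$. The added conflict edges on the variables~$x_t$ force~$p_e$ before~$p_{e'}$ whenever $e \happbf{\po}{S} e'$, so~$\prec$ extends $\happbf{\po}{S}$. Fixing a read~$e'$ with $e \happbf{\rf}{S} e'$, the conflict edges attached to~$p_{e,x}$ force~$p_e$ to finish before $p_{e,x}$'s second write, $p_{e,x}$'s third write to precede $p_{e'}$'s first read, and $p_{e'}$'s last read to precede $p_{e,x}$'s last write; since the write messages~$p_{f,x}$ on~$h_x$ are themselves totally ordered by~$E'$, this pins~$p_{e'}$ into the gap between~$p_e$ and the next write message to~$x$, leaving no room for another write message on~$x$ in between. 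Undoing the renaming, this says that~$e$ is the last write to its variable before~$e'$ under~$\prec$, i.e.\ $\happbf{\rf}{S} \subseteq \happbf{\cnf}{E}$ for the execution~$E$ obtained by linearizing~$S$ along~$\prec$.

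The step I expect to be the main obstacle is the ``sandwiching'' argument in the converse, together with its dual in the forward direction: showing that the $h_x$-gadget forces, in \emph{every} realizing execution, each read message~$p_{e'}$ to sit strictly between its write message~$p_e$ and the next write message to the same variable, and conversely that whenever a VSC-read witness exists the gadget \emph{can} be satisfied by a concrete interleaving. Both require a careful case analysis of the possible orders of the~$p_{f,x}$ on~$h_x$ and of how the four writes of each~$p_{e,x}$ interleave with the three statements of each~$p_e$ and each~$p_{e'}$, using the added conflict edges and program order on~$h_x$ to exclude alternatives. Once this lemma is established the remaining bookkeeping is straightforward, and \cref{thm-consistency} follows by combining it with the polynomial-time NP-membership argument already given and the known NP-completeness of VSC-read.
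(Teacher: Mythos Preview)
Your proposal is correct and follows the same approach that the paper intends; the paper itself does not spell out a proof beyond ``it is then easy to see,'' so your write-up is strictly more detailed than what appears there. Your structural observation---that after the variable renaming the only conflicting pairs in the constructed program are the explicitly added ones, so realizing the hb-trace reduces to orienting those edges as prescribed---is exactly the point of the construction, and the sandwiching argument via the $h_x$-gadget (each $p_f$ is pinned between the first and second writes of $p_{f,x}$, while each reader $p_{e'}$ is pinned between the third and last writes of $p_{e,x}$, with the $p_{\cdot,x}$ themselves serialized on $h_x$) is the intended mechanism for both directions.
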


\section{Proof of Correctness and Optimality}
\label{sec:correctness-proof}

In this section, we prove correctness (\cref{thm:correctness}) and optimality
(\cref{thm:optimality}) of the \EventDPOR algorithm.

\subsection{Proof of \cref{thm:correctness}}

Let us first prove \cref{thm:correctness}. This theorem follows from the more general
\cref{thm:correctness-general}, which we state and prove in this section.

Let us
assume a particular completed execution of \EventDPOR. This execution 
consists of a number of terminated calls to $\explore(E)$ for some values 
of the parameters $E$ and $\WuT$. Let $\exseqs$ denote the set of execution 
sequences $E$ that have been explored in some call $\explore(E)$. Define 
the ordering $\treeorder$ on $\exseqs$ by letting $E \treeorder E'$ if 
$\explore(E)$ returned before $\explore(E')$. Intuitively, if one 
were to draw an ordered tree that shows how the exploration has proceeded, then 
$\exseqs$ would be the set of nodes in the tree, and $\treeorder$ would be the 
post-order between nodes in that tree. \cref{thm:correctness} follows from the more
general \cref{thm:correctness-general}, stated here

\begin{theorem}[Correctness of \EventDPOR]
\label{thm:correctness-general}
Whenever a call to $\explore(E)$ returns during \cref{alg:eventdpor-access},
then for all maximal execution sequences $E.w$, the algorithm has explored
some execution sequence in~$\eqclass{E.w}$.
\end{theorem}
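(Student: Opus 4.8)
The plan is to prove \cref{thm:correctness-general} by well-founded induction on the ordering $\treeorder$: assuming the statement holds for every call $\explore(\exseq')$ with $\exseq' \treeorder \exseq$ (i.e., returning before $\explore(\exseq)$ does), we prove it for $\explore(\exseq)$. This induction is well founded because the algorithm terminates, so there are finitely many calls, and every sub-call $\explore(\exseq.p)$ made inside $\explore(\exseq)$ satisfies $\exseq.p \treeorder \exseq$. To make it go through, I would strengthen the statement to an invariant for the \textbf{while}-loop at \cref{algacsl:while-WuT}: at the top of every iteration, for each maximal execution sequence $\exseq.w$, \emph{either} the algorithm has already explored some sequence in $\eqclass{\exseq.w}$, \emph{or} the current (possibly shrunk) tree $\wut{\exseq}$ has a child $p$ of its root together with a $w'$ such that $w \mtprefixafter{\exseq} p.w'$ and $\valid{\exseq}{p.w'}$ (a ``witnessing child'' for $w$), \emph{or} a sequence that will become such a witnessing child is currently parked in some $\pendingwus{\exseq''}$ with $\exseq'' \prefix \exseq$ and will be inserted (by the loop at \crefrange{algacsl:insert-parkedwus-begin}{algacsl:insert-pendingwu}) before $\explore(\exseq)$ returns. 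When the while-loop exits, $\wut{\exseq}$ is empty and no relevant sequence is parked, so the first disjunct holds for every maximal $\exseq.w$, which is exactly the theorem for $\explore(\exseq)$; \cref{thm:correctness} is then the instance $\exseq = \emptyseq$.

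For the base case, if $\exseq$ is maximal then $w = \emptyseq$ and $\exseq \in \eqclass{\exseq}$ has been explored. For the inductive step, I would first check the invariant on entry to the loop: if $\wut{\exseq}$ was trivial, the arbitrary enabled message added at \crefrange{algacsl:wut-empty-choose}{algacsl:wut-empty-init} is a witnessing child for every maximal $\exseq.w$ (using that an enabled message can always be scheduled first in a continuation that \hb-contains $w$, after, if necessary, running it to completion and re-ordering it before a non-conflicting message of its handler that occurs in $w$); otherwise $\wut{\exseq}$ was inherited as a subtree from the enclosing call and the invariant carries over. For the loop body, when the child $p = \min_{\prec}\{q \in \wut{\exseq}\}$ is picked and $\explore(\exseq.p)$ returns, I would argue for every maximal $\exseq.w$ that currently has $p$ as its witnessing child --- so $w \mtprefixafter{\exseq} p.w'$ with $\valid{\exseq}{p.w'}$: extend $\exseq.p.w'$ to a maximal $\exseq.p.w''$; by the induction hypothesis applied to $\explore(\exseq.p)$, the algorithm has explored some sequence in $\eqclass{\exseq.p.w''}$; since $\exseq.w \mtprefix \exseq.p.w''$ (sequential prefixes are happens-before prefixes and $\mtprefix$ is transitive) and, by a routine argument using determinism of message code, $\mtprefix$ restricted to pairs of \emph{maximal} executions coincides with $\mtequiv$, we get $\exseq.p.w'' \mtequiv \exseq.w$; hence a sequence in $\eqclass{\exseq.w}$ has been explored. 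So the first disjunct now holds for all these $w$, and removing the subtree of $p$ at \cref{algacsl:wut-deletebranch} preserves the invariant for the remaining $w$.

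The hard step is to re-establish a witnessing child for those maximal $\exseq.w$ that, before $\explore(\exseq.p)$ was called, had no witnessing child --- i.e., $p \notin \winits{\exseq}{w}$ for the current $p$. The plan: take any maximal $\exseq.p.u$ explored inside $\explore(\exseq.p)$; since $p \notin \winits{\exseq}{w}$, no continuation of $\exseq$ that \hb-contains $w$ can schedule the first event of $p$ first, which forces a happens-before chain in $\exseq.p.u$ that ``blocks'' progress toward $w$, and hence a race $\event \revrace{\exseq.p.u} \event'$ lying on that chain; $\reverserace(\exseq.p.u,\event,\event')$ then returns a pair $\pair{\exseq}{v}$ that branches off exactly at $\exseq$ and whose $v$ is strictly closer to $w$ under a well-founded measure (e.g., the length of the longest common prefix with $\exseq.w$ among the already-explored sequences, or the cardinality of the symmetric difference of their domains). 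Now $v$ is either declared redundant at \cref{algacsl:event-test} --- in which case, by soundness of that check, an already-explored subtree covers every maximal execution \hb-containing the relevant sequence, in particular one in $\eqclass{\exseq.w}$, so the first disjunct holds --- or it is inserted into $\wut{\exseq}$ (possibly after being parked and later re-inserted via \insertpendingwuname at \cref{algacsl:insert-pendingwu}), re-establishing a witnessing child (or the third disjunct). Iterating, the measure strictly decreases, so after finitely many reversals we reach, for each maximal $\exseq.w$, either an explored sequence in $\eqclass{\exseq.w}$ or a witnessing child in $\wut{\exseq}$. Note the incomplete redundancy checks of \cref{sec:checkwi} are only ever \emph{conservative} --- they may fail to detect redundancy and then insert $v$ --- which is harmless for correctness (it only costs optimality), so the soundness direction used here holds unconditionally.

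The main obstacle is this hard step, and specifically two features with no analogue in plain \OptimalDPOR. First, $\reverserace$ may be forced to reorder messages on handlers not involved in the race (see \cref{fig:example3}); proving that the $v$ it returns genuinely decreases the measure toward $w$ --- rather than producing some other valid but unhelpful re-ordering --- requires a careful choice of the racing pair $\event,\event'$ from the blocking chain and a measure insensitive to the forced message re-orderings. Second, the parked-wakeup-sequence machinery creates a non-local dependency: a sequence meant to cover $\exseq.w$ may be generated deep in the recursion, parked at $\exseq$, and inserted into $\wut{\exseq}$ only after a sibling branch has been driven to a maximal execution; hence the third disjunct of the invariant, together with an auxiliary lemma that every parked sequence is eventually discharged (by \cref{alg:pendingwus}) before the enclosing $\explore$ call returns. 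These are the parts demanding the most care; the remaining bookkeeping --- transitivity and restriction properties of $\mtprefix$, soundness of the simple $\winits$ checks, and termination of the while-loop --- is routine.
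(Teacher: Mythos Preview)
Your overall framework (well-founded induction on $\treeorder$) matches the paper's, but the loop-invariant formulation has a genuine gap. The invariant you state is simply false on entry to the first iteration: it is \emph{not} the case that an arbitrary enabled message $p$ is a witnessing child for every maximal $\exseq.w$. Your parenthetical justification (``run $p$ to completion and reorder it before a non-conflicting message of its handler'') only handles the case where $p$ does not conflict with the relevant part of $w$; if $\nextev{\exseq}{p}$ conflicts with an event in $w$ that must precede it in $w$'s happens-before order, then $p \notin \winits{\exseq}{w}$, full stop. So the second disjunct fails, and neither the first (nothing has been explored yet from $\exseq$) nor the third (nothing is parked yet) holds. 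Your own ``hard step'' paragraph tacitly admits this by speaking of $w$ that ``had no witnessing child'' before $\explore(\exseq.p)$ was called---directly contradicting the invariant you just asserted held on entry.

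The deeper problem is that the invariant cannot be repaired iteration by iteration: a single race reversal from the first maximal execution under $\exseq.p$ need not produce a child $q$ with $q \in \winits{\exseq}{w}$; it may only get \emph{partway} toward $w$, and several further explorations and reversals (possibly branching off at proper prefixes $\exseq'' < \exseq$, not at $\exseq$ itself) may be needed before $w$ is covered. Your ``measure that strictly decreases'' gestures at this, but the two candidates you offer are not obviously monotone under the specific sequences $\reverserace$ returns, especially once forced reorderings on uninvolved handlers enter. The paper sidesteps all of this with a single extremal contradiction: rather than tracking progress through a loop invariant at $\exseq$, it ranges over \emph{all} messages $p \in \dom{\done(E')}$ for \emph{all} prefixes $E' \leq \exseq$, selects the $q$ for which the longest prefix $w_q$ of $w$ still satisfies $q \in \winits{E_q'}{w_q'.w_q}$, and shows that the event $e_q$ immediately after $w_q$ is a shared-variable access that races (in some maximal execution under $E_q'.q$, which by the inductive hypothesis has an equivalent explored representative) with an event already present. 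That race is detected and reversed into a wakeup sequence that, after insertion (possibly via parking), yields a node strictly extending $w_q$---contradicting the maximality of $w_q$. This extremal argument, ranging over all prefixes of $\exseq$ simultaneously, is what makes the inductive step go through; your proposal lacks its analogue.
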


Since the initial call to the algorithm, $\explore(\emptyseq)$, starts with
the empty sequence as argument, \cref{thm:correctness-general}
implies that for all maximal execution sequences $E$ the algorithm
explores some execution sequence $E'$ which is in $\eqclass{E}$.
Note also that a sequence of form $E.w$ need not have been explored inside
the call $\explore(E)$, but can have been explored in some earlier call,
of form $\explore(E'.p)$ for some prefix $E'$ of $E$.

The proof of~\cref{thm:correctness-general} proceeds by induction on the set $\exseqs$ of execution sequences $E$ that
are explored during the considered execution, using the ordering $\treeorder$
(i.e., the order in which the corresponding calls to $\explore(E)$ return).

We first state and prove a small lemma.

\begin{lemma}
\label{lem:WuT-exseqs}
Let $\exseqs$ be the tree of explored execution sequences. and let $\treeorder$ be
the order in which the corresponding calls to $\explore(E)$ return.
Consider some point in the execution, and let  $\wut{\exseq}$ be the wakeup tree
at $\exseq$ at that point, for some $\exseq \in \exseqs$.
\begin{enumerate}
\item \label{lem:WuT-exseqs:1}
  If $w \in \wut{\exseq}$ for some $w$, then $\exseq.w \in \exseqs$.
\item \label{lem:WuT-exseqs:2}
  If $w \prec w'$ for $w,w' \in \wut{\exseq}$ then $\exseq.w \treeorder \exseq.w'$
\end{enumerate}
\end{lemma}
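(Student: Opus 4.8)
The plan is to prove both parts simultaneously by induction on the progress of the algorithm, tracking how wakeup trees are created and modified. The key observation is that there are only three operations that touch a wakeup tree $\wut{E}$: (1)~initialization at \cref{algacsl:wut-empty-init} with a single enabled message, (2)~the initialization $\wut{\exseq.p} := \subtreeafter{p}{\wut{\exseq}}$ at \cref{algacsl:def-WuT} just before a recursive call, which copies the subtree below the chosen child, (3)~insertion of new branches via $\insertwusname$ (\cref{algacsl:event-insert} and \cref{alg:wakeuptree}) and via $\insertpendingwuname$ (\cref{algacsl:insert-pendingwu}), and (4)~removal of a fully-explored branch at \cref{algacsl:wut-deletebranch}. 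The invariant to maintain is exactly the conjunction of the two stated properties, together with the observation that whenever a branch $p.w'$ is removed from $\wut{E}$ at \cref{algacsl:wut-deletebranch}, the recursive call $\explore(E.p)$ has returned, so $E.p \in \exseqs$, and moreover $E.p$ was explored $\treeorder$-before anything explored afterwards.

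For part~\ref{lem:WuT-exseqs:1}, I would argue: when a leaf $v$ is inserted into $\wut{E}$ by $\insertwusname$, it stays in the tree only to be picked up later by the \textbf{while} loop at \cref{algacsl:while-WuT}; when it is picked (as some $\prec$-minimal child $p$), the call $\explore(E.p)$ is made at \cref{algacsl:event-call-explore}, and when that call returns, $E.p \in \exseqs$. But we need this for \emph{every} $w \in \wut{E}$ at every intermediate point, including sequences that have not yet been selected. Here the right formulation is: the algorithm will eventually select every node currently in $\wut{E}$ (the \textbf{while} loop does not terminate until $\wut{E}$ is emptied, and branches are only removed after being explored or after being subsumed during insertion), hence every such node $E.w$ will appear in $\exseqs$. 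Strictly, the cleanest statement is the contrapositive-free one used in the correctness proof: at the end of the run, for every node $w$ that was ever in $\wut{E}$, we have $E.w \in \exseqs$; I would prove this by noting that a node leaves $\wut{E}$ only by being selected-and-explored (so it enters $\exseqs$) or by $\insertwusname$ detecting $w \mtprefixafter{E} u$ for an existing $u$ (in which case $E.u \in \exseqs$ by induction and $w$ is never actually added), so the claim is preserved.

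For part~\ref{lem:WuT-exseqs:2}, the order $\prec$ on $\wut{E}$ is by construction the post-order induced by the left-to-right ordering of children (\cref{def:Wut}), and the \textbf{while} loop at \crefrange{algacsl:while-WuT}{algacsl:wut-deletebranch} selects children in exactly $\prec$-minimal order, exploring the entire subtree under a child before moving to the next child and before finally returning from $\explore(E)$. So if $w \prec w'$, then either $w$ and $w'$ lie under the same first child $E.p$ and we recurse into $\wut{E.p} = \subtreeafter{p}{\wut{E}}$ (invoking the induction hypothesis on that subtree, noting $\prec$ restricts correctly), or $w$ lies under an earlier child than $w'$, in which case $\explore(E.w)$ — or rather the exploration that realizes $E.w$ — completes before the later child is even selected, giving $E.w \treeorder E.w'$; the boundary case $w' = u$ a proper ancestor is handled because $\explore(E.u)$ returns only after all its descendants' calls have returned.

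The main obstacle I anticipate is part~\ref{lem:WuT-exseqs:1} in the presence of \emph{parked} wakeup sequences: when $\insertwusname$ parks $v$ at a node $E'.u.p$ (adding it to $\pendingwus{E'.u.p}$) rather than immediately placing it as a branch, the node corresponding to $v$ is \emph{not} yet in any wakeup tree, so one must be careful that the statement ``$w \in \wut{E}$'' only ranges over genuine tree nodes, and that the later resumption via $\insertpendingwuname$ at \cref{algacsl:insert-pendingwu} does eventually place $v$ (or a $\mtequiv$-equivalent sequence) into some $\wut{E''}$, at which point parts~1 and~2 apply to it. Handling this cleanly requires coordinating the induction with the guarantee (established elsewhere in the correctness proof) that every parked sequence is resumed exactly once, before the relevant calls return; I would state that coordination explicitly and defer its verification to the surrounding argument about $\insertpendingwuname$.
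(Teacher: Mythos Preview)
Your approach is correct and essentially the same as the paper's, which dispatches the lemma in a single sentence by pointing to how the while loop at \crefrange{algacsl:while-WuT}{algacsl:wut-deletebranch} picks children of $\wut{E}$ in $\prec$-minimal order and explores each fully before removal. Your flagged obstacle about parked wakeup sequences is a non-issue: parked sequences reside in $\pendingwus{\cdot}$, not in $\wut{E}$, so they fall outside the hypothesis ``$w \in \wut{E}$'' and the lemma says nothing about them until $\insertpendingwuname$ actually inserts them as genuine tree nodes.
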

\begin{proof}
The lemma follows by noting how the exploration from any $\exseq
\in \exseqs$ is controlled by the wakeup tree $\wut{\exseq}$ at
\crefrange{algacsl:exploration-begin}{algacsl:event-call-explore}
of~\cref{alg:eventdpor-access}.
\end{proof}

We now continue with the proof of \cref{thm:correctness-general}.

\medskip
\noindent
{\sl Base Case:}
This case corresponds to the first execution sequence $E$ for which the call
$\explore(E)$ returns. By the algorithm, $E$ is already
maximal, so the theorem trivially holds.

\medskip
\noindent
{\sl Inductive Hypothesis:}
The theorem holds for all execution
sequences $E'$ with $E' \treeorder E$.

\medskip
\noindent
{\sl Inductive Step:}
Proof by contradiction. Let us assume that there exists an execution $E$
such that when the call to $\explore(E)$ returns,
there is a maximal execution sequence $E.w$ such that
\cref{alg:eventdpor-access} has not explored
any execution sequence in $\eqclass{E.w}$.
We will show that this leads to a contradiction.
So, let $E$ be the smallest such execution in the $\treeorder$ order.
Let $\done$ be the value of the mapping $\done$ when the call to $\explore(E)$ returns.
Note that for such $w$ to exist, $\exseq$ cannot be maximal, so $\done(E)$ contains at least one message.

For each message $p$ such that $p \in \done(E')$ for some $E'$ with $E' \leq E$, 
define $E_p'$ to be the longest such $E'$. Thus, if 
$p \in \done(E)$ then $E_p' = E$, otherwise if
$E_p'$ is defined it is a strict prefix of~$E$ with $p \in \done(E_p')$.
It follows that $E_p'.p \treeorder E$.
We further define $w_p'$ by $\exseq = E_p'.w_p'$.
For each message $p$ such that $E_p'$ is defined and $p \in \winits{E_p'}{w_p'}$, define
\begin{itemize}
\item $w_p$ as the longest prefix of $w$ such that $p \in \winits{E_p'}{w_p'.w_p}$ (such a prefix must exist since one candidate is the empty sequence),
\item $e_p$ as the first event in $w$ which is not in
  $w_p$.  Such an event $e_p$ must exist, otherwise $w_p = w$, which implies
  $p \in \winits{E_p'}{w_p'.w}$,
  which together with the Inductive Hypothesis contradicts the assumption that
  the algorithm has not explored
  any execution sequence in $\eqclass{E.w}$,
\item $w_p''$ as a sequence such that $w_p'.w_p\mtprefixafter{E_p'}{p.w_p''}$.
\end{itemize}
Among the messages $p$ for which $E_p'$ is defined and $p \in \winits{E_p'}{w_p'}$, select
$q$ such that $w_q$ is the longest prefix among $w_p$.
If there are several such messages $q$ with equally long prefixes $w_q$,
then among these pick $q$ such that
$E_q'.q$ is minimal with respect to $\treeorder$.
Let $w_q''$ be a sequence with $w_q'.w_q\mtprefixafter{E_q'}{q.w_q''}$. 

Let $p'$ be the message $\procof{e_q}$ of $e_q$.
We first note that $e_q$ must be a shared-variable access. To see why,
note that if $e_q$ would start the message $p'$, then no event of the message $p'$
can be in $w_q'.w_q$. Moreover, the handler of $p'$ must be free after $E_q'.w_q'.w_q$, and $E_q'.w_q'.w_q$ must contain the event which posts $p'$.
We can simply extend $w_q''$ until it starts message $p'$
and then we have a sequence $w_q'''$ with $w_q'.w_q.e_q\mtprefixafter{E_q'}{q.w_q'''}$, contradicting the
choice of $w_q$.

There are now two cases to consider.
\begin{enumerate}
\item
  $q$ does not start a message after $E_q'$. Then $E_q'$ contains the first part of message $q$ (up until but not including $\nextev{E_q'}{q}$).
  Since $w_q'.w_q\mtprefixafter{E_q'}{q.w_q''}$, it follows that $\nextev{E_q'}{q}$ does not conflict with any event in $w_q'.w_q$, and that
  we can choose $w_q''$ as $w_q'.w_q$. The only reason for $q \not\in \winits{E_q'}{w_q'.w_q.e_q}$ (which implies $w_q'.w_q.e_q \notmtprefixafter{E_q'}{q.w_q''.e_q}$)
  is that $\nextev{E_q'}{q}$ conflicts with $e_q$.
  This implies that the execution $E_q'.q.w_q'.w_q.e_q$ contains a race between $\nextev{E_q'}{q}$ and $e_q$.
  Let $w_q'''$ be $w_q'.w_q.e_q$ and let $E_q'.q.w_q'''.z$ be a maximal extension of $E_q'.q.w_q'''$.
  Then $\nextev{E_q'}{q} \revrace{E_q'.q.w_q'''.z} e_q$.
By the Inductive Hypothesis, $\explore(E_q'.q)$ has then explored some sequence $E_q'.q.z'$ in $\mtclass{E_q'.q.w_q'''.z}$.
When exploring it, the race $\nextev{E_q'}{q}\revmsgrace{E_q'.q.z'}{m} e_q$ between $\nextev{E_q'}{q}$ and $e_q$ will be detected (at \cref{algacsl:race-loop}). Then $\reverserace(E_q'.q.z',\nextev{E_q'}{q},e_q)$ will return maximal executions, one of which must contain $E_q'.w_q'.w_q'.e_q$ as a happens-before prefix.
\item
  $q$ starts a message after $E_q'$.
  Since $e_q$ is a shared-variable access, it can be simply added to  $p'$ in $w_q''$, obtaining $w_q'''$.
  Since $q \not\in \winits{E_q'}{w_q'.w_q.e_q}$, $w_q''$ must contain an event $e$, which is
  not in $w_q'.w_q$, which conflicts with $e_q$.
  This implies that the execution $E_q'.q.w_q'''$ contains a race between $e$ and $e_q$.
  Let $E_q'.q.w_q'''.z$ be a maximal extension of $E_q'.q.w_q'''$.
  Then $e \revrace{E_q'.q.w_q'''.z} e_q$.
By the Inductive Hypothesis, $\explore(E_q'.q)$ has then explored some sequence $E_q'.q.z'$ in $\mtclass{E_q'.q.w_q''.z}$.
When exploring it, the race $e \revmsgrace{E_q'.q.z'}{m} e_q$ between $e$ and $e_q$ will be detected (at \cref{algacsl:race-loop}). Then $\reverserace(E_q'.q.z',e,e_q)$ will return maximal executions, one of which must contain $E_q'.w_q'.w_q'.e_q$ as a happens-before prefix.
\end{enumerate}

Let $E_q'.w_q'.w_q'.e_q$ be reordered as $E_q'.v$.
It follows that
$q \not\in\winits{E_q'}{v}$, from the assumptions made when selecting $q$.
Moreover, there cannot be any $E'',w,p$ such that $E''.w = E_q'$ and $p \in \dom{\done(E'')}$ and $p \in \winits{E''}{w.v}$,
also by the assumptions made when selecting $q$.
Thus, the wakeup sequence $v$ will be inserted into the wakeup tree $\wut{E_q'}$ (\cref{algacsl:event-race-end}) by the call $\insertwus{v}{\exseq_q'}{\emptyseq}$.
We claim that this insertion will add a sequence of form $E.p$ with
$p \in \winits{\exseq}{w_q.\procof{e_q}}$. To see why, we consider the
definition of $\insertwus{v}{E_q'}{u}$ in \cref{alg:wakeuptree}.
We first claim that during the insertion, the sequence $u$ will always
satisfy $E_q.u \leq E$ and $v$ will satisfy $u'.w_q.\procof{e_q} \infirstseqs{E_q.u} v$,
where $u.u' = w_q'$.
This is trivially true initially. To see that
it is preserved by each iteration of the loop starting at \cref{algl:wut-foreach-child},
we consider the possible children of form $u.p$. Let $r$ be the message such that
$E_q'.u.r \leq E$ (if still $E_q'.u < E$).
We know that $E_q'.u.r$ is in $\exseqs$ when $\explore(\exseq)$ is returns.
Furthermore, for each branch $u.p$ with $E_q.u'.p \treeorder E_q.u'.r$ we have that
$p \not \in \winits{\exseq.u}{u'.w_q.\procof{e_q}}$ by the Inductive
Hypothesis and the assumption that $\exseq.w$ has not been explored. On the other hand
$r \in \winits{\exseq.u}{u'.w_q.\procof{e_q}}$, implying that either
$u.r$ is already in $\wut{\exseq_q'}$ during the insertion, in which case the
loop will move to the next iteration with invariants preserved, or
$u.r$ is not already in $\wut{\exseq_q'}$ in which case it must be added during the
current insertion and produce a branch $u.v$ such that
$u'.w_q.\procof{e_q} \infirstseqs{E_q.u} v$.
Thus, when the insertion of $v$ has completed, possibly after having been parked, the exploration tree will contain
an execution of form $E.v'$ with $w_q.\procof{e_q} \mtprefixafter{E} v'$, thereby
contradicting the assumption that $w_q$ is the longest extension of $E$ that has been explored.
This concludes the proof of the inductive step,
and~\cref{thm:correctness-general} is proven.
\qed

\subsection{Proof of \cref{thm:optimality}}

Let us next prove \cref{thm:optimality}. This theorem depends on \EventDPOR being able
to the following property P:
\begin{itemize}
   \item[P:]
whenever the exploration tree $\exseqs$ contains a node of form $\exseq.p$, then the algorithm will not add an execution of form $\exseq.w$ which is contained in
some execution of form $\exseq.p.w'$ for some $w'$, i.e., for which $p \in \winits{\exseq}{w}$.
\end{itemize}
If P is enforced, then~\cref{alg:eventdpor-access} cannot explore two equivalent maximal executions. To see this, let $\exseq$ be the longest common prefix of the two executions. Let the
execution of the two, which is explored first, be of form $\exseq.p.w'$. The other execution
will then be the continuation of a wakeup sequence, which is inserted as a new sequence $w$
from the node $\exseq$ in the exploration tree $\exseqs$, and thereafter extended to
$\exseq.w.v$. But if now $\exseq.p.w' \mtequiv \exseq.w.v$, then
$\exseq.w \mtprefix \exseq.p.w'$, which implies $p \in \winits{\exseq}{w}$, which contradicts
P.

It thus remains to check that property P is enforced. By inspection of
\cref{alg:eventdpor-access}, we see that whenever a new sequence is inserted into $\exseqs$,
which happens before inserting a new wakeup sequence (\cref{algacsl:event-test}),
inside procedure \insertwusname (\cref{alg:wakeuptree}) for wakeup tree insertion, and in
the procedure \insertpendingwuname (\cref{alg:pendingwus}) for inserting parked wakeup sequences.
Furthermore, for non-branching programs the test for $p \in \winits{\exseq}{w}$,
described in \cref{sec:checkwi}, is exact.
This concludes the proof of the theorem.
\qed


\section{Complete Set of Benchmark Tables} \label{app:eval-complete}
In this appendix, we include the complete set of benchmark results comparing
the performance of the \EventDPOR with that of the \OptimalDPOR algorithm,
with the LAPOR technique implemented in \GenMC and also with the baseline
algorithm of \GenMC which tracks the modification order (\genmcmo{\small}) of
shared variables. A subset of these results appears in the main body of the
paper.

\paragraph{Baseline Comparison}
First, we measure the performance of algorithm implementations on three
programs where all algorithms explore the same number of executions.  The
first two of them are simple programs where a number of threads post racing
messages to a \emph{single} event handler. Both programs are parametric on the
number of threads (and messages posted); the value of this parameter is shown
inside parentheses. The messages of the first program (\bench{writers})
consist of a store to the same atomic global variable followed by an assertion
that checks for the value written. The second program (\bench{posters}) is
similar but between the write and the assertion check the messages also post,
to the same handler, another message with an atomic store to the same global
variable; this increases the number of executions to examine.
Finally, the third program (\bench{2PC}) is a two-phase commit protocol used
by a coordinator and $n$ participant threads (i.e., $n+1$ handler threads in
total) to decide whether to commit or abort a transaction, by broadcasting and
receiving messages.

\begin{table}[t]
  \caption{Performance on programs where different DPOR algorithms implemented
    in \Nidhugg and \GenMC explore the same number of complete and blocked
    executions. Times (in seconds) show the relative speed of their
    implementations.}
  \label{tab:eval:baseline}
  \centering\SZ
  \pgfplotstablevertcat{\output}{results/laban/writers.txt}
  \pgfplotstablevertcat{\output}{results/laban/posters.txt}
  \pgfplotstablevertcat{\output}{results/laban/2PC.txt}
  \pgfplotstabletypeset[
    every row no 3/.style={before row=\midrule},
    every row no 6/.style={before row=\midrule},
  ]{\output}
\end{table}

Results from running these benchmarks for increasing number of threads are
shown in~\cref{tab:eval:baseline}. As can be seen, all algorithms explore the
same number of executions here. This allows us to establish that:
\begin{enumerate}[(i)]
\item \GenMC \genmcmo{\small} is fastest overall; in particular, it is $3$--$7$
  times faster than \Nidhugg \opt{\small} and about $8$--$9$ times faster
  than \Nidhugg \evt{\small}.
\item The overhead that LAPOR incurs over its baseline implementation in
  \GenMC is significant.
  Still, for the first two programs, which involve just one event handler and
  no blocked or aborted executions, \GenMC \lapormo{\small} beats \Nidhugg
  \evt{\small}.  However, \Nidhugg \evt{\small} is faster than \GenMC
  \lapormo{\small} on the third program (\bench{2PC}).
\item The overhead that \EventDPOR incurs over \OptimalDPOR for the extra
  machinery that its implementation requires is small but quite noticeable.
\end{enumerate}
The results from \bench{2PC} corroborate these conclusions. The blocked
executions in this benchmark are due to \emph{assume-blocking} and affect all
algorithms equally in terms of additional executions examined.  However,
notice that \GenMC \lapormo{\small} is affected more in terms of time overhead
compared to its baseline.

\paragraph{Performance on More Involved Event-Driven Programs}
The next two benchmarks were taken from a recent paper by Kragl et
al.~\citet{Kragl20}.
In \bench{buyers}, $n$ ``buyer'' threads coordinate the purchase of an item
from a ``seller'' as follows: one buyer requests a quote for the item from the
seller, then the buyers coordinate their individual contribution, and finally
if the contributions are enough to buy the item, the order is placed.
In \bench{ping-pong}, the ``pong'' handler thread receives messages with
increasing numbers from the ``ping'' thread, which are then acknowledged back
to the ``ping'' event handler.

\begin{table}[t]
  \caption{Performance on programs where different DPOR algorithms implemented
    in \Nidhugg and \GenMC examine the same number of traces, but LAPOR also
    explores a significant number of executions that need to be aborted.
    This negatively affects the runtime that SMC using LAPOR takes.}
  \label{tab:eval:whenblocked}
  \setlength{\tabcolsep}{2.5pt}
  \centering\SZ
  \pgfplotstablevertcat{\output}{results/laban/buyers.txt}
  \pgfplotstablevertcat{\output}{results/laban/ping_pong.txt}
  \pgfplotstabletypeset[
    every row no 3/.style={before row=\midrule},
  ]{\output}
\end{table}

Results from running these benchmarks are shown
in~\cref{tab:eval:whenblocked}.
In these two programs, all algorithms explore the same number of traces, but
LAPOR also explores a significant number of executions that cannot be
serialized and need to be aborted. This negatively affects the time that SMC
using LAPOR requires; \GenMC \lapormo{\small} becomes the slowest
configuration here.  In contrast, \Nidhugg \evt{\small} shows similar
scalability as baseline \GenMC and \Nidhugg \opt{\small}.

\paragraph{Performance on Event-Driven Programs Showing Complexity
  Differences Between DPOR Algorithms}
Finally, we evaluate all algorithms in programs where algorithms tailored to
event-driven programming, either natively (\EventDPOR) or which are
lock-aware (when handlers are implemented as locks), have an advantage.
We use six benchmarks.
The first (\bench{consensus}), again from the paper by Kragl et
al.~\citet{Kragl20}, is a simple \emph{broadcast consensus} protocol for $n$
nodes to agree on a common value. For each node~$i$, two threads are created:
one thread executes a \texttt{broadcast} method that sends the value of
node~$i$ to every other node, and the other thread is an event handler that
executes a \texttt{collect} method which receives~$n$ values and stores the
maximum as its decision. Since every node receives the values of all other
nodes, after the protocol finishes, all nodes have decided on the same value.
The second benchmark (\bench{db-cache}) is a key-value store system inspired
from Memcached, a well known distributed cache application. There are $n$
clients requesting a fixed sequence of storage accesses to a server via UDP
sockets (modeled as threads with mailboxes). On the server side there is
one worker thread per client to fulfill these requests. So multiple worker
threads on the server threads may race.
The third benchmark (\bench{prolific}) is synthetic: $n$ threads send
$n$ messages with an increasing number of stores to and loads from an atomic
global variable to one event handler.
The fourth benchmark (\bench{sparse-mat}) computes sparseness (number of
non-zero elements) of a sparse matrix of dimension $m \times n$. The work is
divided among $n$ tasks/messages and sent to different handlers, which then
compute and join these results.
The fifth benchmark (\bench{mat-mult}) implements concurrent matrix
multiplication taking two matrices of dimensions $m \times k$ and $k \times n$
as inputs. The work is divided among $n$ tasks/messages and sent to different
handlers, which then compute and join these results.
The last benchmark (\bench{plb}) is taken from a paper by Jhala and
Majumdar~\citet{popl07:JhalaM}. The main thread receives a fixed sequence of
task requests. Upon receiving a task, the main thread allocates a space in
memory and posts a message with the pointer to the allocated memory that will
be served by a thread in the future.


\begin{table}[t]
  \caption{Performance on programs that show complexity differences in the
    number of traces that different DPOR algorithms implemented in \Nidhugg
    and \GenMC explore.}
  \label{tab:eval:differences}
  \setlength{\tabcolsep}{1.5pt}
  \centering\SZ
  \pgfplotstablevertcat{\output}{results/laban/consensus.txt}
  \pgfplotstablevertcat{\output}{results/laban/db_cache.txt}
  \pgfplotstablevertcat{\output}{results/laban/prolific.txt}
  \pgfplotstablevertcat{\output}{results/laban/sparse-mat.txt}
  \pgfplotstablevertcat{\output}{results/laban/mat_mult.txt}
  \pgfplotstablevertcat{\output}{results/laban/plb.txt}
  \pgfplotstabletypeset[
    every row no 3/.style={before row=\midrule},
    every row no 6/.style={before row=\midrule},
    every row no 9/.style={before row=\midrule},
    every row no 12/.style={before row=\midrule},
    every row no 15/.style={before row=\midrule},
  ]{\output}
\end{table}

Results from running these six benchmarks are shown in~\cref{tab:eval:differences}.

In \bench{consensus}, all algorithms start with the same number of traces, but
\LAPOR and \EventDPOR need to explore fewer and fewer traces than the other
two algorithms, as the number of nodes (and threads) increases. Here too,
\LAPOR explores a significant number of executions that need to be aborted,
which hurts its time performance. On the other hand, \EventDPOR's handling
of events is optimal in this program, even though it is not non-branching.

The \bench{db-cache} program shows a case where \GenMC, both when running with
\genmcmo{\small} but also with \lapormo{\small}, is non-optimal.  In contrast,
\EventDPOR, even with employing the inexpensive redundancy checks, manages to
explore an optimal number of traces.

The \bench{prolific} program shows a case where algorithms not tailored to
events (or locks) explore $(n-1)!$ traces, while \LAPOR and \EventDPOR explore
only $2^n-2$ consistent executions, when running the benchmark with $n$ nodes.
We briefly explain why the number of feasible executions are $2^n-2$. Because
of the access patterns of global variables in this program, each message is
conflicting with the previous and the next messages. In an execution, these
conflicts can be represented by $n$ directed edges. So there are $2^n$
possible reorderings when both directions of each edge are considered. But two
of these reorderings are not possible because they create a cycle, hence the
$2^n-2$.
On this program, it can also be noted that \EventDPOR scales \emph{much}
better than \LAPOR here in terms of time, due to the extra work that \LAPOR
needs to perform in order to check consistency of executions (and abort some
of them).

The \bench{sparse-mat} program shows another case where algorithms that are
not tailored to events explore a large number of executions unnecessarily
(\timeout denotes timeout). This program also shows that \EventDPOR beats
\LAPOR time-wise even when \LAPOR does not explore executions that need to be
aborted.

Finally, \bench{plb} shows a case on which \EventDPOR and \LAPOR really
shine. These algorithms need to explore only one trace, independently of the
size of the matrices and messages exchanged, while DPOR algorithms not
tailored to event-driven programs explore a number of executions which
increases exponentially and fast.

\end{document}